\algnewcommand{\IIf}[1]{\State\algorithmicif\ #1\ \algorithmicthen}
\algnewcommand{\EElse}[1]{\State \algorithmicelse\ #1\ }
\algnewcommand{\EndIIf}{\unskip}
\newcommand{\squishlist}{
   \begin{list}{$\bullet$}
    { \setlength{\itemsep}{2pt}    \setlength{\parsep}{0pt}
      \setlength{\topsep}{5pt}     \setlength{\partopsep}{0pt}
      \setlength{\leftmargin}{1.35em} \setlength{\labelwidth}{1em}
      \setlength{\labelsep}{0.5em} } }
\newcommand{\squishend}{
    \end{list}  }
\begin{document}

\newcommand{\T}{\mathtt{T}}
\newcommand{\E}{\mathbb{E}}

\newcommand{\TransKernel}{\mathcal{T}}
\newcommand{\TransKernelDiscrete}{ \mathcal{T}^{\Delta z}}

\newcommand{\cL}{{\cal L}}

\newtheorem{mydef}{Definition}

\newtheorem{myexam}{Example}

\title{Central Limit Model Checking}

\author{Luca Bortolussi \inst{1} \and Luca Cardelli\inst{2,3}  \and Marta Kwiatkowska\inst{2} \and Luca Laurenti \inst{2} }


\institute{ Department of Mathematics and Geosciences, University of Trieste \and  Department of Computer Science, University of Oxford, UK \and Microsoft Research Cambridge  }

\maketitle

\newcommand{\varr}[1]{{\color{green!60!black}#1}}
\newcommand{\varc}[1]{{\color{blue!60!black}#1}}
\newcommand{\varl}[1]{{\color{red!60!black}#1}}

\mathchardef\mhyphen="2D
\newcommand{\prepost}[1]{\mathsf{pre\mhyphen post}_{#1}}
\newcommand{\inv}[1]{\mathsf{inv}_{#1}}

\vspace{-2em}
\begin{abstract}
We consider probabilistic model checking for continuous-time Markov chains (CTMCs) induced from Stochastic Reaction Networks (SRNs) against a fragment of Continuous Stochastic Logic (CSL) extended with reward operators.
Classical numerical algorithms for CSL model checking based on uniformisation are limited to finite CTMCs and suffer from exponential growth of the state space with respect to the number of species. On the other hand, approximate techniques such as mean-field approximations and simulations combined with statistical inference are more scalable, but can be time consuming and do not support the full expressiveness of CSL.  
In this paper we employ a continuous-space approximation of the CTMC in terms of a Gaussian process based on the Central Limit Approximation (CLA), also known as the Linear Noise Approximation (LNA), whose solution requires solving a number of differential equations that is quadratic in the number of species and independent of the population size.
We then develop efficient and scalable approximate model checking algorithms on the resulting Gaussian process, where we restrict the target regions for probabilistic reachability to convex polytopes. This allows us to derive an abstraction in terms of a time-inhomogeneous discrete-time Markov chain (DTMC), whose dimension is independent of the number of species, on which model checking is performed. Using results from probability theory, we prove the convergence in distribution of our algorithms to the corresponding measures on the original CTMC.
We implement the techniques and, on a set of examples, demonstrate that they allow us to overcome the state space explosion problem, while still correctly characterizing the stochastic behaviour of the system.
Our methods can be used for formal analysis of a wide range of distributed stochastic systems, including biochemical systems, sensor networks and population protocols.
\end{abstract}

\section{Introduction}

Distributed systems with Markovian interactions can be modelled as continuous-time Markov chains \cite{ethier2009markov}. Examples include randomised population protocols \cite{Angluin2008}, genetic regulatory networks \cite{thattai2001} and biochemical systems evolving in a spatially homogeneous environment, at constant volume and temperature  \cite{gillespie1992rigorous,ethier2009markov}. 
For such systems, stochastic modelling is necessary to describe stochastic fluctuations for low/medium population counts that deterministic fluid techniques cannot capture \cite{ethier2009markov}.

A versatile programming language for modelling the behaviour of Markovian distributed systems is that of \emph{Stochastic Reaction Networks (SRNs)}, which induce CTMCs under certain mild restrictions.
Computing the probability distributions of the species of a SRN over time is achieved by solving the Kolmogorov Equation, also known in the biochemical literature as the Chemical Master Equation (CME) \cite{Kampen1992b}.
 Unfortunately, classical numerical solution methods for computing transient probability based on uniformisation \cite{baier2003model} are often infeasible because of the state space explosion problem, that is, the number of states of the resulting Markov chain grows exponentially with respect to the number of species and may be infinite.
 A more scalable transient analysis can be achieved by employing simulations combined with statistical inference \cite{gillespie1977exact}, but to obtain good accuracy large numbers of simulations are needed, which for some systems can be very time consuming. 
 
A promising approach, which we explore in this paper, is to instead approximate the CTMC induced by a Stochastic Reaction Network as a \emph{continuous-space} stochastic process by means of the \emph{Central Limit Approximation (CLA)} \cite{ethier2009markov}, also known in statistical physics as the \emph{Linear Noise Approximation (LNA)}. That is, a Gaussian process is derived to approximate the original CTMC \cite{Kampen1992b}. As the marginals of a Gaussian process are fully determined by its expectation and covariances, its solution requires solving a number of differential equations that  is quadratic in the number of species and independent of the population size. As a consequence, the CLA is generally much more scalable than a discrete-state stochastic representation and has been successfully used for analysis of large Stochastic Reaction Networks \cite{cardelli2016stochastic,bortolussi2013model,cardelli2016stochasticHybrid,cardelli2017syntax}.
 However, none of these works enables the computation of complex temporal properties such as global \emph{probabilistic reachability} properties, which quantify the probability of reaching a particular region of the state space in a particular time interval.
 This property is fundamental for verification of more complex temporal logic properties, for example \emph{probabilistic until} properties, where the probability of reaching a certain region within a certain time bound while remaining in another region is quantified. 
 Such properties can be expressed in Continuous Stochastic Logic (CSL) \cite{aziz1996verifying} or Linear Temporal Logic (LTL) \cite{pnueli1977temporal}, whose formulae are verified by reduction to the computation of the reachability properties \cite{baier2008principles}.

\subsubsection{Contributions.}
We derive  fast and scalable approximate probabilistic model checking algorithms for CTMCs induced by Stochastic Reaction Networks against a time-bounded fragment of CSL extended with reward operators. 
Our model checking algorithms are numerical and explore a continuous-space approximation of the CTMC in terms of a Gaussian process. One of our key results is a novel scalable algorithm for computing probabilistic reachability for Gaussian processes over target regions of the state space that are assumed to be convex polytopes, i.e. intersections of a finite set of linear inequalities. 
More specifically, for a CTMC approximated as a Gaussian process, the resulting algorithm computes the probability that the system falls in the target region within a specified time interval.
Given a set of $k$ linear inequalities, and relying on the fact  that a linear combination of the components of a Gaussian distribution is still Gaussian, we discretize time and space for the $k$-dimensional stochastic process defined by the particular linear combinations. This allows us to derive an abstraction in terms of a time-inhomogeneous \emph{discrete-time Markov chain (DTMC)}, whose dimension is independent of the number of species, since a linear combination is a uni-dimensional entity. The method ensures scalability, as in general we are interested in a small number, i.e., one or at most two, of linear inequalities.
This abstraction is then used to perform model checking of time-bounded CSL properties \cite{Kwiatkowska2007,baier2003model}. To compute such an abstraction, the most delicate aspect is to derive equations for the transition kernel of the resulting DTMC. This is formulated as the conditional probability at the next discrete time step given the system in a particular state.
Reachability probabilities are then computed by making the target set absorbing.
We then extend CSL with the reward operators as in \cite{Kwiatkowska2007}. We derive approximate reward measures for such operators using the CLA, and prove the convergence in distribution of our algorithms to the original measures when the size of the system (number of molecules) tends to infinity. 
We show the effectiveness of our approach on a set of case studies taken from the biological literature, also in cases where existing numerical model checking techniques are infeasible.

A preliminary version of this work has appeared in \cite{bortolussi2016approximation}. This paper extends \cite{bortolussi2016approximation} in several aspects. While in \cite{bortolussi2016approximation} we only consider probabilistic reachability, here we generalise our algorithms to the time-bounded fragment of CSL, which we also extend with reward operators. Furthermore, we prove weak convergence of our algorithms and significantly extend the experimental evaluation.

\subsubsection{Related work.} Algorithms for model checking CSL properties for continuous-time Markov chains have been introduced  and then improved with techniques based on uniformization \cite{baier2000model} (essentially a discretisation of the original CTMC), and reward computation \cite{Kwiatkowska2007}.
The analysis 
typically involves computing the transient probability of the system residing in a state
at a given time, or, for a model annotated with rewards, the expected reward that can
be obtained.  Despite improvements such as symmetry reduction \cite{heath2008probabilistic}, sliding window \cite{wolf2010solving} and fast adaptive uniformisation \cite{Dannenberg:2015:CCR:2737798.2688907}, their practical use for Stochastic Reaction Networks is severely  hindered by state space explosion \cite{heath2008probabilistic}, which in a SRN grows exponentially with the number of molecules when finite, and may be infinite, in which case finite projection methods have to be used \cite{munsky2006finite}.
As a consequence, approximate but faster algorithms are appealing. The mainstream solution is to rely on simulations combined with statistical inference to obtain estimates \cite{kwiatkowska2011prism,bortolussi2016smoothed}. These methods, however, are still computationally expensive. A recent trend of works explored as an alternative whether estimates  could be obtained by relying on approximations of the stochastic process based on mean-field \cite{bortolussi2012fluid} or linear noise \cite{bortolussi2014stochastic,bortolussi2013model,cardelli2016stochastic}. 
However, CSL and some classes of reward properties, like those considered here, are very challenging. In fact, most approaches consider either local properties of individual molecules \cite{bortolussi2012fluid}, or properties obtained by observing the behaviour of individual molecules and restricting the target region to an absorbing subspace of the (modified) model \cite{bortolussi2013model}. 
The only approach dealing with more general subsets, \cite{bortolussi2014stochastic}, imposes restrictions on the behaviour of the mean-field approximation, whose trajectory has to enter the reachability region in a finite time. Another interesting approach has been developed in \cite{schnoerr2017efficient,milios2017probabilistic}, where model checking of time-bounded properties for CTMCs is expressed as a Bayesian inference problem, and approximated model checking algorithms are derived. However, no guarantees on the convergence of the resulting algorithms is given.

Our approach differs in that 
it is based on the CLA and considers regions defined by polytopes, which encompasses most properties of practical interest. The simplest idea would be to consider the CLA and compute reachability probabilities for this stochastic process, invoking convergence theorems for the CLA to prove the asymptotic correctness. Unfortunately, there is no straightforward way to do this, since dealing with a continuous space and continuous time diffusion process, e.g., Gaussian, is computationally hard, and computing reachability is challenging (see \cite{abate2010approximate}). As a consequence, discrete abstractions are appealing.

\section{Background}
\textbf{Stochastic Reaction Networks.}
A \emph{Stochastic Reaction Network (SRN)} $C=(\Lambda,R)$ is a pair of finite sets, where $\Lambda$ is a set of \emph{species}, $|\Lambda|$ denotes its size, and $R$ is a set of reactions. Species $\lambda \in \Lambda$ interact according to the reactions in $R$. A \emph{reaction} $\tau \in R$ is a triple $\tau=(r_{\tau},p_{\tau},\alpha_{\tau})$,
where $r_{\tau} \in  \mathbb{N}^{|\Lambda|}$ is the \emph{reactant complex}, 
$p_{\tau} \in  \mathbb{N}^{|\Lambda|}$ is the \emph{product complex} and $\alpha_{\tau}: \mathbb{R}_{\geq 0}^{|\Lambda|} \to \mathbb{R}_{\geq 0} $ is the \emph{reaction rate} associated to $\tau$. $r_{\tau}$ and $p_{\tau}$  represent the stoichiometry of reactants and products.
Given a reaction $\tau_1=(  [1,1,0]^T,[0,0,2]^T,\alpha_1 )$, where $\cdot^T$ is the transpose of a vector, we often refer to it as $\tau_1 : \lambda_1 + \lambda_2 \, \rightarrow^{\alpha_1}  \,    2\lambda_3 $.
The \emph{state change} associated to a reaction $\tau$ is defined by $\upsilon_{\tau}=p_{\tau} - r_{\tau}$.  For example, for $\tau_1$ as above, we have $\upsilon_{\tau_1}=[-1,-1,2]^T$. 
A \emph{configuration} or \emph{state} $x \in \mathbb{N}^{|\Lambda|}$ of the system is given by a vector of the number of molecules of each species. 
Given a configuration $x$ then $x_{\lambda_i}$ represents the number of molecules of $\lambda_i$ in the configuration and $\hat x_{\lambda_i}= \frac{x_{\lambda_i}}{N}$ is the \emph{concentration} or \emph{density} of $\lambda_i$ in the same configuration, where $N$ is the population system size, which for molecular systems may represent the volume of the solution, and otherwise it is typically the total population count.

Stochastic Reaction Networks are a versatile programming language used to model stochastic evolution of populations of indistinguishable agents, where the species represent the states of the agents. They are relevant not only for modelling of biochemical systems, such as genetic regulatory networks, molecular signalling pathways and DNA computing circuits, but also certain classes of stochastic distributed systems due to their equivalence to Petri nets \cite{murata1989petri},
Vector Addition Systems (VAS) \cite{karp1969parallel} and distributed population protocols \cite{Angluin2008}.

\begin{example}
\label{RunningExample}
As a running example we consider the following simple model of gene expression \cite{shahrezaei2008analytical}, where the mRNA is produced by an always active promoter, and  then catalyzes the production of the protein. We have $\Lambda=\{mRNA,Pro\}$ and the following set of reactions $R$:
$$  \to^{0.5} mRNA;\quad mRNA \to^{0.0058\cdot mRNA} mRNA + Pro$$
$$ mRNA\to^{0.0029 \cdot mRNA} \, ;\quad Pro \to^{0.0001\cdot Pro}$$
\end{example}

\subsection{Stochastic Semantics of Stochastic Reaction Networks}\label{subsec-StochSemSRN}
Under the well-mixed assumption \cite{anderson2015models}, a Stochastic Reaction Network $C=(\Lambda,R)$ induces a \emph{discrete-state} Markov process. 
For a reaction $\tau$, $\alpha_{\tau}$ is also called the \emph{propensity rate} of reaction $\tau$ and is a function of the current configuration $x$ of the system, such that $\alpha_{\tau}(x)dt$ is the probability that a reaction event occurs in the next time interval $dt$.
For instance, in case of mass action kinetics, $\alpha_{\tau}(x)=k_{\tau}  \frac{\prod_{i=1}^{|\Lambda|} r_{i,\tau} !   }{N^{|r_{\tau}|-1}}\prod_{i=1}^{|\Lambda|} \binom{x_{\lambda_i}}{r_{i,\tau}}$, where $r_{i,\tau} !$ is the factorial of $r_{i,\tau}$, $|r_{\tau}|=\sum_{i=1}^{|\Lambda|}r_{i,\tau},$ and $x_{\lambda_i}$ is the component of vector $x$ relative to species $\lambda_i$  \cite{Anderson2011}. In this paper we assume $\alpha_{\tau}:\mathbb{R}^{|\Lambda|}_{\geq 0}\to \mathbb{R}_{\geq 0}$ is 
a real analytic function \cite{bortolussi2012fluid}, that is, a function that locally coincides with its Taylor expansion. This is not restrictive, as it includes all the more commonly used kinetics such as mass action or Hill.  We also require that the SRN satisfies the \emph{density dependent rate} condition\footnote{Note that this condition is not strictly necessary for our results, but guarantees a simpler form for equations \cite{ethier2009markov}.}, that is, for any $\alpha_{\tau}$, there exists a function $\beta_{\tau}:\mathbb{R}^{|\Lambda|}_{\geq 0}\to \mathbb{R}_{\geq 0}$ such that for $x \in \mathbb{R}^{|\Lambda|}_{\geq 0}$ it holds that $\alpha_{\tau}(x)=N \beta_{\tau} (\hat x),$ where $\hat x=\frac{x}{N}$ represents the concentration of the species in $\Lambda$ in configuration $x$.
 Consequently, a SRN $C=(\Lambda,R)$ is modelled in terms of a \emph{time-homogeneous continuous-time Markov chain} (CTMC) \cite{ethier2009markov} $(X^N(t),t \in \mathbb{R}_{\geq 0})$ with state space $S$ given by the set of possible configurations of the system, where in $X^N$ we made explicit the dependence on the system size $N$.
Thus, $X^N(t)$ is a random vector describing the  population count of each species at time $t$. 
Given $X^N$, we denote by $\hat X^N=\frac{X^N}{N}$ the CTMC describing the evolution of the species in $\Lambda$ in terms of concentrations.
The transient evolution of $X^N$, and consequently also of the concentrations $\hat X^N$, is described by the Kolmogorov equations, also called the Chemical Master Equation (CME), namely, a set of differential equations describing the transient evolution of the reachable states $x$.
\begin{definition}{(Kolmogorov Equations)}
\label{Def:CME}
Let $x_0 \in \mathbb{N}^{|\Lambda|}$ be the initial configuration of $X^N$. For $x \in S$, we define $P(x,t|x_0)=Probability(X^N(t)=x \,|\,X^N(0)=x_0)$. $P(x,t|x_0)$ describes the transient evolution of $X^N$, and is the solution of the following system of ordinary differential equations (ODEs):
\begin{equation}\label{CME}
\frac{\mathrm d}{\mathrm d t} \left( P(x,t|x_0)\ \right) = 
	\sum_{\tau \in R} \{ \alpha_{\tau}(x-\upsilon_{\tau})P(x-\upsilon_{\tau},t|x_0)-\alpha_{\tau}(x)P(x,t|x_0)\}. 
\end{equation}
\end{definition}
Solving Eqn \eqref{CME} requires computing the solution of a differential equation for each reachable state. The size of the reachable state space is exponential in the number of the species, and may be infinite. As a consequence, solving the CME is generally feasible only for SRNs with very few species and small molecular populations. This is the so-called state space explosion problem, which strongly limits the applicability of the CME in practice. Finite projection methods have been developed to numerically solve Eqn \eqref{CME} when the state space is not finite \cite{munsky2006finite}. However, they still suffer from the state space explosion problem and are limited to SRNs with few species and moderate population counts.

Often, Eqn \eqref{CME} is a approximated with a deterministic model using fluid techniques \cite{bortolussi2012fluid}, where the concentrations of the species are approximated over time as the solution $\Phi(t)$ of the following set of ODEs, the so-called \emph{rate equations}:
\begin{equation}
\frac{d \Phi(t)}{dt}=F(\Phi(t))=\sum_{\tau \in R}\upsilon_{\tau}\cdot \beta_{\tau}(\Phi(t)),
\label{eq:ODE}
\end{equation}
where in case of mass action kinetics we have $\beta_{\tau}(\Phi(t))=( k_\tau \prod_{i=1}^{|\Lambda|}\Phi_{i}^{r_{i,\tau}}(t))$, for $\Phi_{i}^{r_{i,\tau}}(t)$ the i-th component of vector $\Phi(t)$ raised to the power of $r_{i,\tau}$, i-th component of vector $r_{\tau}$. The initial condition is $\Phi(0)=\frac{x_0}{N}=\hat x_{0}$. Eqn \eqref{eq:ODE} converges to $\hat X^N (t),t \in \mathbb{R}_{\geq 0}$ when $N,$ the system size, tends to infinity \cite{ethier2009markov}. However, Eqn \eqref{eq:ODE} completely neglects the stochastic fluctuations, which may be essential to understand the behaviour of the system being modelled \cite{cardelli2016stochastic}.
\begin{example}
Consider the SRN introduced in Example \ref{RunningExample}. Then, for $t\in \mathbb{R}_{\geq 0}$, we have that $X^N(t)=[X^N_{mRNA}(t),X^N_{Pro}]$ is a random variable describing the number of molecules in the system at time $t$.
Given an initial condition $x_0 \in \mathbb{N}^2_{\geq 0},$ $S$, the state space of $X^N$ is given by the set of states reachable from $x_0$. That is, for any $x \in S$ there is a sequence of reactions $\tau_1,...,\tau_n \in R$ such that $x=x_0+\upsilon_{\tau_1}+...+\upsilon_{\tau_n}.$ Note that the presence of the reaction $\to^{0.5}mRNA $ implies that, in this example, $S$ is not finite. Thus, most of the techniques commonly used for model checking CTMCs would not be directly applicable in this case \cite{Kwiatkowska2007}.
$\hat X^N(t)=[\hat X^N_{mRNA}(t),\hat X^N_{Pro}(t)]=[\frac{ X^N_{mRNA}(t)}{N},\frac{ X^N_{Pro}(t)}{N}]$ describes the evolution of mRNA and Pro in terms of concentrations.

\end{example}

\subsection{Central Limit Approximation}\label{lna-sec}
The \emph{Central Limit Approximation (CLA)}, also called the \emph{Linear Noise Approximation (LNA)}, is a \emph{continuous-space} approximation of the CTMC in terms of  a Gaussian process based on the Central Limit theorem 
\cite{Kampen1992b,ethier2009markov}.

The CLA at time $t$ approximates the distribution of $X^N(t)$ with the distribution of the random vector $Y^N(t)$ such that:
\begin{equation}
	 X^N(t)\approx Y^N(t) = N\Phi(t) + N^{\frac{1}{2}}G(t)
\label{eq:hypothsis}
\end{equation}
 where $G(t)=(G_1(t),G_2(t),...,G_{|\Lambda|})$ is a random vector, independent of the system size $N$, representing the stochastic fluctuations at time $t$ around $\Phi(t)$, the solution of Eqn \eqref{eq:ODE}.
The probability distribution of $G(t)$ is given by the solution of a linear Fokker-Planck equation \cite{Wallace2012}. As a consequence, for any time instant $t$, $G(t)$ has a multivariate normal distribution whose expected value $\mathbb{E}[G(t)]$ and covariance matrix $cov(G(t))$ are the solution of the following differential equations:
\begin{equation}
		\frac{\mathrm d \mathbb{E}[G(t)]}{\mathrm d t}  = J_F(\Phi(t))\mathbb{E}[G(t)]
\label{LNAEx}
\end{equation}
\begin{equation}
		\frac{\mathrm d cov(G(t)) }{\mathrm d t}  = J_F(\Phi(t))cov(G(t)) + cov(G(t))J^T_F(\Phi(t))+W(\Phi(t))
\label{LNAVar}
\end{equation}
where ${J}_F(\Phi(t))$ is the Jacobian of $F(\Phi(t))$, $J^T_F(\Phi(t))$ its transpose, $ W(\Phi(t))= \sum_{\tau \in R} \upsilon_{\tau} {\upsilon_{\tau}}^T \alpha_{c,\tau}(\Phi(t)) $ and $F_j(\Phi(t))$ the $j$th component of $F(\Phi(t))$. We assume $X^N(0)=x_0$ with probability $1$; as a consequence $\mathbb{E}[G(0)]=0$ and $C[G(0)]=0$, which implies $\mathbb{E}[G(t)]=0$ for every $t$.
The following theorem 
illustrates the nature of the approximation using the CLA.
\begin{theorem}[\cite{ethier2009markov}]\label{th:LNA}
Let $C=(\Lambda,R)$ be a SRN, $X^N$ the discrete state space Markov process induced by $C$ and $\hat X^N=\frac{X^N}{N}$. Let $\Phi(t)$ be the solution of Eqn \eqref{eq:ODE}
with initial condition $\Phi(0)=\hat x$ and $G$ be the Gaussian process with expected value and variance given by Eqns \eqref{LNAEx} and \eqref{LNAVar}. Then, for any $t \in  \mathbb{R}_{\geq 0}$ we have:
\begin{equation}
\label{lnconv}
    N^{\frac{1}{2}}\left|\hat X^N(t)-\Phi(t)\right|  \Rightarrow_{N\to \infty} G(t).
\end{equation}
\end{theorem}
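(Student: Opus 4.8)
The plan is to follow Kurtz's functional central limit theorem for density-dependent Markov chains. The starting point is the random-time-change (Poisson) representation of the CTMC: with $\{Y_\tau\}_{\tau\in R}$ independent unit-rate Poisson processes, $X^N(t)=x_0+\sum_{\tau\in R}\upsilon_\tau Y_\tau\big(\int_0^t\alpha_\tau(X^N(s))\,\mathrm ds\big)$, and dividing by $N$ and using the density-dependent condition $\alpha_\tau(x)=N\beta_\tau(\hat x)$ the Poisson clocks governing $\hat X^N$ run at rate $N\beta_\tau(\hat X^N(s))$. Writing $Y_\tau(u)=u+\tilde Y_\tau(u)$ with $\tilde Y_\tau$ the compensated (martingale) part, and subtracting $\Phi(t)=\hat x+\int_0^t F(\Phi(s))\,\mathrm ds$ with $F=\sum_\tau\upsilon_\tau\beta_\tau$, the rescaled fluctuation $V^N(t):=N^{1/2}(\hat X^N(t)-\Phi(t))$ — which starts at $V^N(0)=0$ since $\hat X^N(0)=\hat x=\Phi(0)$ — satisfies
\[
V^N(t) = N^{1/2}\sum_{\tau\in R}\upsilon_\tau\!\int_0^t\!\big(\beta_\tau(\hat X^N(s))-\beta_\tau(\Phi(s))\big)\,\mathrm ds + M^N(t),
\]
where $M^N(t)=N^{-1/2}\sum_\tau\upsilon_\tau\tilde Y_\tau\big(N\int_0^t\beta_\tau(\hat X^N(s))\,\mathrm ds\big)$. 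The goal is to pass to the limit in this identity.

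First I would invoke Kurtz's law of large numbers — the same result underlying the convergence of Eqn~\eqref{eq:ODE} — to get $\hat X^N\to\Phi$ uniformly on $[0,t]$ in probability, and, since the rates are only locally Lipschitz, localise by stopping $\hat X^N$ at the first exit from a fixed compact neighbourhood of $\Phi([0,t])$, on which every $\beta_\tau$ (real analytic, hence smooth) has bounded first and second derivatives. For the drift term, a first-order Taylor expansion gives $N^{1/2}(\beta_\tau(\hat X^N(s))-\beta_\tau(\Phi(s)))=\nabla\beta_\tau(\Phi(s))^\top V^N(s)+R_\tau^N(s)$ with $|R_\tau^N(s)|=O(N^{-1/2}|V^N(s)|^2)$; since $\sum_\tau\upsilon_\tau\nabla\beta_\tau^\top=J_F$ the leading part is $\int_0^t J_F(\Phi(s))V^N(s)\,\mathrm ds$, and the remainder is shown negligible, together with tightness of $\{V^N\}$, via a Gronwall estimate combined with Doob's $L^2$-inequality applied to $M^N$. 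For the noise term, the functional CLT for the compensated Poisson process gives $N^{-1/2}\tilde Y_\tau(N\,\cdot)\Rightarrow W_\tau$, a standard Brownian motion; composing with the converging random times $\int_0^\cdot\beta_\tau(\hat X^N(s))\,\mathrm ds\to\int_0^\cdot\beta_\tau(\Phi(s))\,\mathrm ds$ and using independence of the $W_\tau$ shows $M^N$ converges to a continuous Gaussian martingale with quadratic variation $\int_0^\cdot W(\Phi(s))\,\mathrm ds$, where $W(\Phi(s))=\sum_\tau\upsilon_\tau\upsilon_\tau^\top\beta_\tau(\Phi(s))$ is exactly the matrix in Eqn~\eqref{LNAVar}.

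Combining these ingredients via the stability/continuous-mapping theorem for linear SDEs yields $V^N\Rightarrow V$ in the Skorokhod space $D([0,\infty),\mathbb{R}^{|\Lambda|})$, where $\mathrm dV(t)=J_F(\Phi(t))V(t)\,\mathrm dt+\sqrt{W(\Phi(t))}\,\mathrm dB(t)$ with $V(0)=0$. Being a linear SDE with deterministic coefficients started at a constant, its solution is Gaussian at each time $t$; taking expectations recovers Eqn~\eqref{LNAEx} (hence $\mathbb{E}[V(t)]=0$), and applying It\^o's formula to $V(t)V(t)^\top$ and taking expectations recovers Eqn~\eqref{LNAVar}, so $V(t)$ has the law of $G(t)$; in particular $N^{1/2}(\hat X^N(t)-\Phi(t))\Rightarrow G(t)$, which is the claim. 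I expect the main obstacle to be the drift step: the uniform control of the Taylor remainder $R_\tau^N$ and the attendant tightness of $\{V^N\}$, which is precisely where the localisation of the first step and the martingale estimates are needed.
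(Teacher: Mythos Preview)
The paper does not give its own proof of this theorem: it is quoted verbatim as a result from \cite{ethier2009markov} (Ethier and Kurtz), with no argument supplied. Your sketch is precisely the standard Kurtz functional CLT for density-dependent jump Markov processes that underlies that reference---random-time-change representation, fluid LLN plus localisation, Taylor linearisation of the drift producing $J_F(\Phi)$, martingale CLT for the compensated Poisson terms producing the diffusion with covariance $W(\Phi)$, and identification of the limiting linear SDE as the Gaussian process $G$ with moments \eqref{LNAEx}--\eqref{LNAVar}. So your proposal is correct and is, in fact, the argument behind the cited source; there is nothing in the paper itself to compare it against.
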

In the above, $\Rightarrow_{N\to \infty}$ indicates convergence in distribution as the system size parameter $N$ increases \cite{billingsley2013convergence}.
The CLA is exact in the limit of high populations, but has also been successfully used in many different scenarios showing surprisingly good results \cite{grima2015linear,Wallace2012}.
To compute the CLA it is necessary to solve $O(|\Lambda|^2)$ first order differential equations, and the complexity is independent of the initial number of molecules of each species. 
Therefore, one can avoid the exploration of the state space that methods based on uniformization rely upon, taking an important step towards scalable stochastic analysis of reaction systems. 

By Eqn \eqref{eq:hypothsis}, we have that the distribution of $Y^N(t)$ is Gaussian with expected value and covariance matrix given by:
\[ \mathbb{E}[Y^N(t)]=N\Phi(t)\]
\[ cov(Y^N(t))=N^{\frac{1}{2}} cov(G(t)) N^{\frac{1}{2}}=N cov(G(t)). \]
Then, the following standard proposition guarantees that a set of linear combinations of the components of $Y^N$ is still Gaussian.
\begin{proposition}[\cite{adler2010geometry}]
Let $B \in \mathbb{Z}^{m\times |\Lambda|}$ be a matrix and $Y^N$ a $|\Lambda|-$dimensional Gaussian process. Then, $Z^N=B\cdot Y^N$ is a m-dimensional Gaussian process. For any $t\in \mathbb{R}_{\geq 0}$, we have that $  Z^N(t)$ is characterized by the following mean and covariance:
\begin{equation}
		\mathbb{E}[  Z^N(t)]= B\mathbb{E}[{Y^N}(t)]
		\label{eq:excom}
\end{equation}
\begin{equation}
		 cov(  Z^N(t))= Bcov(Y^N(t))B^T .
		\label{eq:varcom}
\end{equation}
\end{proposition}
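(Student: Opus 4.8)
The plan is to reduce the statement to the elementary fact that an affine image of a multivariate Gaussian vector is again multivariate Gaussian, applied first pointwise in $t$ and then, to handle the process structure, to finite tuples of time points. Recall that a stochastic process $Y^N$ is \emph{Gaussian} precisely when for every finite set of times $t_1 < \dots < t_n$ the stacked random vector $(Y^N(t_1)^T,\dots,Y^N(t_n)^T)^T \in \mathbb{R}^{n|\Lambda|}$ has a multivariate normal law. Fixing such times, one has $(Z^N(t_1)^T,\dots,Z^N(t_n)^T)^T = \mathbf{B}\,(Y^N(t_1)^T,\dots,Y^N(t_n)^T)^T$, where $\mathbf{B} = \mathrm{diag}(B,\dots,B) \in \mathbb{R}^{nm\times n|\Lambda|}$ is block diagonal. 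Hence it suffices to show that $A\xi$ is Gaussian whenever $\xi$ is a Gaussian vector with mean $\mu$ and covariance $\Sigma$ and $A$ is a real matrix of compatible dimensions.

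For this last fact I would compute the characteristic function of $A\xi$: for every $\theta$ of the appropriate dimension,
$\mathbb{E}[\exp(i\theta^T A\xi)] = \mathbb{E}[\exp(i(A^T\theta)^T\xi)] = \exp\bigl(i(A^T\theta)^T\mu - \tfrac12 (A^T\theta)^T\Sigma(A^T\theta)\bigr) = \exp\bigl(i\theta^T(A\mu) - \tfrac12\theta^T(A\Sigma A^T)\theta\bigr)$,
which is exactly the characteristic function of the (possibly degenerate) normal law with mean $A\mu$ and covariance $A\Sigma A^T$; by uniqueness of characteristic functions this identifies the law of $A\xi$. Applying this with $A = \mathbf{B}$ shows that all finite-dimensional distributions of $Z^N$ are multivariate normal, so $Z^N$ is a Gaussian process. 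For the marginal at a fixed time $t$, linearity of expectation gives $\mathbb{E}[Z^N(t)] = \mathbb{E}[B Y^N(t)] = B\,\mathbb{E}[Y^N(t)]$, which is Eqn~\eqref{eq:excom}, and
$cov(Z^N(t)) = \mathbb{E}\bigl[(B(Y^N(t)-\mathbb{E}[Y^N(t)]))(B(Y^N(t)-\mathbb{E}[Y^N(t)]))^T\bigr] = B\,cov(Y^N(t))\,B^T$,
which is Eqn~\eqref{eq:varcom}.

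The argument has no genuine obstacle — this is a textbook fact, which is why the paper simply cites \cite{adler2010geometry}. The only point requiring mild care is to phrase ``Gaussian'' via characteristic functions rather than densities, so that the proof remains valid when $B\,cov(Y^N(t))\,B^T$ is singular (for instance when $m > |\Lambda|$, or when the rows of $B$ are linearly dependent), in which case $Z^N(t)$ is a degenerate Gaussian concentrated on an affine subspace; the fact that the entries of $B$ are integers plays no role whatsoever, and the same proof works for any real matrix.
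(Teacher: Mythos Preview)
Your argument is correct and is exactly the standard textbook proof of this fact; the paper itself does not prove the proposition but simply cites \cite{adler2010geometry}, so there is nothing further to compare. Your remark that the characteristic-function formulation handles the possibly degenerate case and that the integrality of $B$ is irrelevant is also accurate.
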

\begin{example}
Consider the SRN introduced in example \ref{RunningExample}. According to Theorem \ref{th:LNA} we can associate to $\mathcal{C}$ a Gaussian process $Y^N(t)$ with values in $\mathbb{R}^2.$ Suppose we want to know the distribution of $Z_{mRNA+Pro}^N(t)=Y^N_{mRNA}(t) + Y^N_{Pro}(t)$, where $Y^N_{mRNA}$ and  $Y^N_{Pro}$ are the components of $Y^N$ relative to $mRNA$ and $Pro$. Then, we have that $Z_{mRNA+Pro}^N(t)$ is still Gaussian and with mean and variance given by
$$ E[Z_{mRNA+Pro}^N(t)]=E[Y_{mRNA}^N(t)]+E[Y_{Pro}^N(t)], $$ 
$$  cov(  Z^N_{mRNA + Pro}(t))=[1,1]cov(Y^N(t))[1,1]^T.$$
\end{example}

Thus, $Z^N$ represents the time evolution of $m$ linear combinations of the population counts of the species defined by $B$ over time. Importantly, $Z^N$ is still a Gaussian process, and hence completely characterized by its mean and covariance matrix.
Note also that the distribution of $\hat Z^N=\frac{Z^N}{N}$ (concentrations) depends on $Y^N$ \emph{only via its mean and covariance}, which are obtained by solving ODEs in Eqns \eqref{LNAEx} and \eqref{LNAVar}.
This is a key feature that we will use to obtain an effective dimensionality reduction in our model checking algorithms.

\section{Continuous Stochastic Logic (CSL)}\label{Section:CSL}
Temporal properties of continuous time Markov chains can be expressed using \emph{Continuous Time Stochastic Logic (CSL)} \cite{aziz2000model}, which can thus be used for the CTMC $X^N$ induced from a SRN $C=(\Lambda,R)$.
We will develop approximate model checking algorithms for CSL based on the CLA.
Since the CLA is correct in the limit of diverging system size $N$,
we will define CSL for the \emph{normalized} process $\hat X^N = \frac{X^N}{N},$ as introduced in the previous section. Therefore, we will be working in terms of concentrations instead of population counts. This is not a limitation: if we are interested in a fixed value of $N$, population counts can always be rescaled to population densities, and vice versa, by dividing or multiplying them by ${N}$. In the following, we will thus refer to states and concentrations interchangeably without loss of generality.

Given a SRN ${C}=(\Lambda,R),$ a \emph{path} of the induced CTMC $\hat X^N$ is defined as $\omega=\hat x_0 t_0 \hat x_1 t_1...$ where $\hat x_k \in \mathbb{R}_{\geq 0}^{|\Lambda|}, t_k \in \mathbb{R}_{\geq 0}$ and for all $k\geq 0$
there exists $\tau \in R$ such that $\beta_{\tau}(\hat x_k)>0$ and $\hat x_k + \frac{\upsilon_{\tau}}{N}=\hat x_{k+1}$, where $\beta_{\tau}$ is the density dependent rate. That is, $\omega$ is an alternating sequence of states (equivalently, concentrations) and residence times in those states.   Let $\Omega$ be the set of all  paths of $\hat X^N$ and $\Omega_{\hat x_0}$ the set of all paths of $\hat X^N$ starting from $\hat x_0$.
Call $\omega(t)$ the state of the path at time $t$, i.e. $\omega(t)=\hat x_n$ where $\sum_{k=0}^{n} t_k \leq t \leq \sum_{k=0}^{n+1} t_k $. 
 Then, a probability measure, Prob, for $\hat X^N$ can be defined using cylinder sets of paths \cite{Kwiatkowska2007}. For further details on the measure-theoretic properties we refer to \cite{baier2003model}.

Since $\hat X^N$ takes values in $\mathbb{R}^{|\Lambda|}_{\geq 0}$, we will work with predicates over concentrations, similarly to how real-time signals are verified in \emph{Signal Temporal Logic (STL)} \cite{maler2004monitoring}, instead of the conventional atomic propositions defined in states of the Markov chain \cite{Kwiatkowska2007}.
\begin{definition}\label{Def:ConvexPredicate}{(Convex Predicate).}
Let $\eta:\mathbb{R}^{|\Lambda|}\to\{\mbox{\emph{true}},\mbox{\emph{false}} \}$ be a predicate. We call $\eta$ a \emph{convex predicate} if there exist $B_1,...,B_m \in \mathbb{Z}^{ |\Lambda|},l_1,...,l_m\in \mathbb{R}, m>0$, such that for $\hat x \in \mathbb{R}^{|\Lambda|}$ it holds that: 
$$\eta(\hat x)=(B_1\cdot  \hat x \leq l_1) \, \wedge . . . \wedge \, (B_m\cdot  \hat x \leq l_m )$$
\end{definition}
Hence, convex predicates are true for concentration $\hat x$ belonging to a, not necessarily bounded, convex polytope. We denote by $\Theta$ the set of all convex predicates with domain in $\mathbb{R}^{|\Lambda|}_{\geq 0}$.

We now define the time-bounded fragment of CSL for SRNs as follows. We do not consider time-unbounded properties because of the nature of the convergence of CLA, which is guaranteed just for finite time.
In Section \ref{SecRew} we extend this fragment with reward operators.
\begin{definition}{(CSL Syntax)}
\label{CSLSyntax}
Given a SRN $C=(\Lambda,R)$, and the induced CTMC $\hat X^N$, we define the syntax of CSL as:
\begin{align*}
    \Psi \, ::=\, \neg \Psi \, | \, \Psi_1 \, \wedge \Psi_2 \, | \,  P_{\sim p}(F^{[t_1,t_2]}\, \eta)   \, | \,  P_{\sim p} (\eta_1 \, U^{[t_1,t_2]}\, \eta_2 )  
\end{align*}
where $\eta,\eta_1,\eta_2\in \Theta$, $t_1, t_2 \in \mathbb{R}_{\geq 0}$, $ \in [0,1]$  and $\sim\in\{<,>\}$. 
\end{definition}

The above definition slightly differs from the usual definition of CSL in that the reachability ($F^{[t_1,t_2]}$) and until ($U^{[t_1,t_2]}$) operators work directly with predicates over concentrations, rather state labels. Note also that, in Definition \ref{Def:ConvexPredicate}, we do not allow nesting of CSL properties, and we restrict predicates to sets that are convex polytopes. This latter point does not limit the expressivity of the logic. However, it is a fundamental requirement for our model checking algorithms, which allows us to obtain an exponential speed up compared to existing algorithms. 

\begin{example}
Given the SRN $C$ of Example \ref{RunningExample} for $N=100$, the property "is the probability that the concentration of Pro remains below $0.1$  until there is a concentration of mRNA of at least $0.3$ in the first $50$ time units greater than $0.6$?" can be expressed as:
$$P_{>0.6}[(\hat{Pro}<0.1)\, U^{[0,50]}\, (\hat{mRNA}>0.3) ],$$
where with an abuse of notation we call $\hat{Pro}$ and $\hat{mRNA}$ the components of vector $\hat X^N$ relative to species $Pro$ and $mRNA$. Obviously, this property is equivalent to the following one, but expressed on the rescaled process $X^N$:
$$P_{>0.6}[({Pro}<10)\, U^{[0,50]}\, ({mRNA}>30) ].$$
\end{example}

\begin{definition}{(Semantics of CSL)}\label{CSLSemantics}
Let  $\hat X^N$ be the CTMC induced by SRN $C$. Given $\hat x \in \mathbb{R}^{|\Lambda|}_{\geq 0}$, the semantics of CSL is defined as follows: 
\begin{align*}
    \hat X^N,\hat x  \models \neg \Psi \quad   & \leftrightarrow  \quad     \hat X^N,\hat x \not\models \Psi \\
    \hat X^N,\hat x  \models \Psi_1 \wedge \Psi_2  \quad    &\leftrightarrow \quad     \hat X^N  \models \Psi_1 \wedge \hat X^N  \models \Psi_2  \\
    \hat X^N,\hat x \models P_{\sim p}(F^{[t_1,t_2]}\eta)   \quad    &\leftrightarrow \quad     Prob(\exists t \in [t_1,t_2]\, s.t.\, \eta (\omega(t))\, |\, \omega \in \Omega_{\hat x})\sim p \\
    \hat X^N,\hat x \models P_{\sim p} (\eta_1 U^{[t_1,t_2]}\eta_2)  \quad  &
\leftrightarrow  \quad   Prob(\exists t  \in [t_1,t_2]\,  s.t.\, \eta_2 (\omega(t)) \wedge \forall t' \in [0,t)\, \eta_1 (\omega(t'))\,|\,\omega \in \Omega_{\hat x}) )\sim p 
\end{align*}
\end{definition}
Note that the reachability operator can be expressed with the until. For example, $P_{>0.9}[F^{[0,1]}\, mRNA>0]$ is equivalent to $P_{>0.9}[mRNA \geq 0\, U^{[0,1]} \,mRNA>0]$.
Similarly to classical CSL, $\sim$ can be replaced with $=?,$  in the style of quantitative model checking, indicating the probability of satisfaction \cite{hillston2005compositional}. 

Model checking procedures for CTMCs against CSL specifications are well known \cite{Kwiatkowska2007,baier2008principles}.  They reduce to computing the probability of reaching a given set, and hence to solving Eqn \eqref{CME}, albeit resulting in the well known  state space explosion problem.   Here, we explore the usage of the CLA to derive approximate model checking procedures that converge to the original CTMC values but do not suffer from the state space explosion problem, therefore enabling fast stochastic characterization of the system. 

\section{The Reachability Operator}
In this section we define our CLA-based algorithm to verify the probabilistic reachability operator $P_{\sim p}(F^{[t_1,t_2]}\eta),$ which is the key procedure for model checking of more complex CSL properties. As $\eta$ is a convex predicate, in order to check this property, for a convex polytope $A$ defined as $A=\{x \in \mathbb{R}^{|\Lambda|}_{\geq 0}\, s.t.\,\forall i \in \{1,...,m \} (Bx)_i\leq b_i \}$ where $B\in \mathbb{Z}^{m\times |\Lambda|},b \in \mathbb{R}^m$, we need to compute:
$$  P_{reach}^{A}(\hat {x}_0,t_1,t_2)= Prob(\exists t \in [t_1,t_2]\, s.t.\, \omega(t) \in A \, | \, \omega \in \Omega_{\hat {x}_0})  ,$$
where $\Omega_{\hat {x}_0}$ is the set of paths of $\hat X^N$ starting from $\hat x_0$ as defined in Section \ref{Section:CSL}. 
We will compute such a probability for $\hat Y^N=\frac{Y^N}{N}$, the CLA of $X^N$ expressed in terms of concentrations, and then show how the computed measure converges to the original process $\hat X^N$, but guaranteeing much greater scalability.
Computing the reachability probability for $\hat Y^N$ is not straightforward, because the system evolves in continuous time and analytic solutions cannot be derived in general. As a consequence, we need to devise numerical algorithms and prove their correctness. Here, we derive a scalable numerical algorithm based on time and space discretization of linear projections of $\hat Y^N$, and, using properties of Gaussian processes, we then prove the convergence of the algorithm to the original measure.

In order to exploit the CLA, we  first discretize time for the Gaussian process given by the CLA, with a fixed (or adaptive) step size $h$,  which we can do effectively owing to the Markov property and the knowledge of its mean and covariance. As a result, we obtain a \emph{discrete-time, continuous-space}, Markov process with a Gaussian transition kernel. Then, by resorting to state space discretization with parameter $\Delta z>0$, we compute the reachability probability on this new process, obtaining an approximation in terms of time-inhomogeneous discrete-time Markov chain (DTMC)  converging to the CLA approximation  {uniformly, when $h$ and $\Delta z$ go to $0$}.
At first sight, there seems to be little gain, as we now have to deal with a $|\Lambda|$-dimensional continuous state space.
Indeed, for general regions this can be the case. However, if we restrict to regions defined by intersections of linear inequalities (i.e. polytopes), we can exploit  properties of Gaussian distributions (i.e. their closure with respect to linear combinations), reducing the dimension of the continuous space to the number of different linear combinations used in the definition of the linear inequalities (in fact, the same hyperplane can be used to fix both an upper and a lower bound).  As we are generally interested only in one or few projections, the complexity will then be dramatically reduced.



\subsection{Time Discretization Scheme}
Given $\hat Y^N$, the CLA of $\hat X^N$ expressed in terms of concentrations, and matrix $B\in \mathbb{Z}^{m \times |\Lambda|},$
we introduce an exact time discretization scheme for $\hat Z^N=B \hat Y^N$. For simplicity we assume $m=1,$ but all the results extend to $m>1.$ Fix a small time step $h>0$.  By sampling $\hat Y^N$ at step $h$ and invoking the Markov property,\footnote{The Gaussian process obtained by the Linear Noise Approximation is Markovian, as it is the solution of a linear Fokker-Planck equation (stochastic differential equation) \cite{Kampen1992b}.  } we obtain a \emph{discrete-time Markov process} (DTMP) $\hat{Y}^{h,N}(k) = \hat Y^N(kh)$ on continuous space. Applying the linear projection mapping $\hat Z^N$ to $\hat{Y}^{N}(k)$, and leveraging its Gaussian nature, we obtain a process $\hat {Z}^{h,N}(k) = \hat Z^N(kh)$ which is also a DTMP, though with a kernel depending on time through the mean and variance of $Y^N$. 
\begin{definition}\label{DTMP}
A \emph{(time-inhomogeneous) discrete-time Markov process (DTMP)} $(\hat {Z}^{h,N}(k), k\in [0,I] \subseteq \mathbb{N})$ is uniquely defined by a triple $(S,\mathcal{B}(S),\mathcal{T})$, where $(S,\mathcal{B}(S))$ is a measurable space and $\mathcal{T}:\mathcal{B}(S) \times S \times \mathbb{N}\rightarrow [0,1]$ is a transition kernel such that, for any $z\in S$, $A\in \mathcal{B}(S)$ and $k\in \mathbb{N}$, $\mathcal{T}(A,z,k)$ is the probability that $\hat {Z}^{h,N}(k+1)\in A$ conditioned on $\hat {Z}^{h,N}(k)=z$.
\end{definition}
From Definition \ref{DTMP}, it follows that, for $[0,I]\subseteq \mathbb{N}$, $\hat {Z}^{h,N}$ is a discrete-time stochastic process defined on the sample space given by the product space $\Omega = S^{I+1} $,
endowed  with  the sigma-algebra, $\mathcal{B}(\Omega)$,  generated  by  the
product  topology,   and  with  a  probability  measure $Prob^h$, which is uniquely defined by the transition kernel $\mathcal{T}$ and the initial condition \cite{bertsekas2004stochastic}.

Thus, in order to characterize $\hat {Z}^{h,N}$, we need to compute its transition kernel, $\mathcal{T}$. This is equivalent to computing $f_{\hat Z^N(t+h)|\hat Z^N(t)=\bar{z}}(z)$, i.e. the density function of $\hat Z^N(t+h)$ given the event $\hat Z^N(t)=\bar{z}$.

Consider the joint distribution $(\hat Y^N(t),\hat Y^N(t+h))$, which is Gaussian. Its projected counterpart $(\hat Z^N(t),\hat Z^N(t+h))$ is thus also Gaussian, with covariance function: 
\begin{align*}
cov(\hat Z^N(t),\hat Z^N(t+h))&=  B \, cov(\hat Y^N(t),\hat Y^N(t+h)) \, B^T \\
&=\frac{1}{N} B\,  cov(Y^N(t),Y^N(t+h))\, B^T,
\end{align*}
where $cov(Y^N(t),Y^N(t+h))$ is the covariance function of $Y^N$ at times $t$ and $t+h$. It follows by the closure properties of Gaussian processes that $(\hat Z^N(t+h)|\hat Z^N(t)=\bar{z})$ is Gaussian too, and thus fully characterized by its mean and variance. 
Hence, we need to derive $cov(Y^N(t),Y^N(t+h))$.
From now on, we denote $cov(Y^N(t+h),Y^N(t))=C_{Y^N}(t+h,t)$ and $cov(\hat Z^N(t+h),\hat Z^N(t))=C_{\hat Z^N}(t+h,t)$. 
Following \cite{ethier2009markov}, we introduce the following matrix differential equation:
\begin{equation}\label{AuxEq}
\frac{d \Upsilon(t,s)}{dt}=J_{F}(\Phi(t))\Upsilon(t,s)
\end{equation}
with $t\geq s$ and initial condition $\Upsilon(s,s)=Id$, where $Id$ is the identity matrix of dimension $|\Lambda|$. 
Then, as illustrated in \cite{ethier2009markov}, we have:
\begin{equation}\label{covariance}
C_{Y^N}(t,t+h)=\int_{0}^{t}\Upsilon(t,s)W(\Phi(s))[\Upsilon(t+h,s)]^T ds,
\end{equation}
where $W$ is the matrix introduced in Eqn \eqref{LNAVar}.
This is an integral equation, which has to be computed numerically. To simplify this task, we derive an equivalent representation in terms of differential equations. 
This is given by the following lemma.
\begin{lemma}\label{Cov}
Solution of Eqn \eqref{covariance} is given by the solution of the following differential equations:
\begin{align}\nonumber
    \frac{d C_{Y^N}(t,t+h)}{dt}=&W(\Phi(t))\Psi^T(t+h,t)+J_F(\Phi(t))C_{Y^N}(t,t+h)\\
    &+C_{Y^N}(t,t+h)J^T_F(\Phi(t+h)) \label{CovY}
\end{align}
with initial condition $C_{Y^N}(0,h)$ computed as the solution of:
\begin{equation*}
    \frac{d C_{Y^N}(0,s)}{ds}=C_{Y^N}(0,0+s)J^T_F(\Phi(s)).
\end{equation*}
\end{lemma}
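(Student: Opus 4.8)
The plan is to start from the integral representation in Eqn \eqref{covariance}, namely $C_{Y^N}(t,t+h)=\int_0^t \Upsilon(t,s)W(\Phi(s))[\Upsilon(t+h,s)]^T\,ds$, and differentiate it with respect to $t$ using the Leibniz integral rule. Differentiating under the integral sign produces two contributions: a boundary term from the upper limit $s=t$, and the integral of the $t$-derivative of the integrand. For the boundary term, setting $s=t$ makes $\Upsilon(t,t)=Id$, so it reduces to $W(\Phi(t))[\Upsilon(t+h,t)]^T$; here I would identify $\Upsilon(t+h,t)$ with the matrix called $\Psi(t+h,t)$ in the statement of Lemma \ref{Cov} (this is just the fundamental/transition matrix of the linearised flow from time $t$ to time $t+h$). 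For the integrand derivative, I would use Eqn \eqref{AuxEq}: $\frac{d}{dt}\Upsilon(t,s)=J_F(\Phi(t))\Upsilon(t,s)$, which pulls $J_F(\Phi(t))$ out on the left of the whole integral, and $\frac{d}{dt}[\Upsilon(t+h,s)]^T=[\Upsilon(t+h,s)]^T J_F^T(\Phi(t+h))$, which pulls $J_F^T(\Phi(t+h))$ out on the right. Collecting the three pieces yields exactly Eqn \eqref{CovY}.

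Next I would verify the initial condition. At $t=0$ the integral in Eqn \eqref{covariance} is over the empty interval, so $C_{Y^N}(0,h)$ is not simply zero as a function of $h$; rather $C_{Y^N}(0,0)=cov(Y^N(0))$, which is $0$ under the stated assumption $X^N(0)=x_0$ a.s., but one still needs $C_{Y^N}(0,h)$ for $h>0$ to initialise the ODE \eqref{CovY}. To get it, I would treat $C_{Y^N}(0,s)$ as a function of the second argument $s$ with the first argument frozen at $0$: from Eqn \eqref{covariance} with $t=0$ we trivially get $C_{Y^N}(0,s)=0$; more robustly, for the general case one differentiates $C_{Y^N}(t,t+s)$ — or rather the representation $C_{Y^N}(0,s)=\int_0^0(\cdots)$ — in $s$, and only the right factor $[\Upsilon(s,u)]^T$ depends on $s$, giving $\frac{d}{ds}C_{Y^N}(0,s)=C_{Y^N}(0,s)J_F^T(\Phi(s))$, the auxiliary equation in the lemma. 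With the natural initial value $C_{Y^N}(0,0)=cov(Y^N(0))$ this pins down $C_{Y^N}(0,h)$.

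The main obstacle I anticipate is bookkeeping rather than conceptual: getting the Leibniz-rule boundary term right (in particular making sure the surviving factor is $[\Upsilon(t+h,t)]^T$ and not $[\Upsilon(t+h,t+h)]^T$), and confirming that the object named $\Psi(t+h,t)$ in the lemma statement is indeed $\Upsilon(t+h,t)$, i.e.\ the solution of \eqref{AuxEq} with base point $s=t$ evaluated at $t+h$ — the notation switch between $\Upsilon$ and $\Psi$ is the only place ambiguity could creep in. A secondary subtlety is that Eqn \eqref{covariance} requires the mild regularity (real-analyticity of the rates, hence $C^1$ dependence of $\Phi$ and $J_F\circ\Phi$) already assumed in the paper, which justifies differentiating under the integral sign; I would note this in one line. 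Once those points are settled, the derivation is a direct computation and uniqueness of solutions of the linear matrix ODE \eqref{CovY} (with the computed initial datum) gives that the ODE solution coincides with the integral \eqref{covariance}.
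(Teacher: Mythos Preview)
Your proposal is correct and follows essentially the same route as the paper: differentiate the integral representation \eqref{covariance} in $t$ via the Leibniz rule, use $\Upsilon(t,t)=Id$ for the boundary term and Eqn~\eqref{AuxEq} to pull out $J_F(\Phi(t))$ and $J_F^T(\Phi(t+h))$ from the remaining integral, then handle the initial condition by differentiating in the second argument. Your observation that $\Psi(t+h,t)$ in the statement must be read as $\Upsilon(t+h,t)$ is also correct --- this is just a notational slip in the lemma.
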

\begin{proof}
Applying the general form of the Fundamental Theorem of Calculus to Eqn \eqref{covariance} with respect to $t$ we get:
\begin{align*}
 \frac{d C_{Y^N}(t,t+h)}{dt}&= \Upsilon(t,t)W(\Phi(t))\Upsilon(t+h,t)^T + \int_{0}^t \frac{d}{dt} (   \Upsilon(s,t)W(\Phi(s))\Upsilon(t+h,s)^T) ds\\
&= Id \cdot W(\Phi(t))\Upsilon(t+h,t)^T +  \int_{0}^t \frac{d \Upsilon(s,t)}{dt}W(\Phi(s))\Upsilon(t+h,s)^T ds\\
&\, \, \, \, \, \, + \int_{0}^t  \Upsilon(s,t)W(\Phi(s))\frac{d \Upsilon(t+h,s)}{dt}^T ds.
\end{align*}
As $\frac{d \Upsilon(t,s)}{dt}=J_F(\Phi(t)) \Upsilon(t,s) $,
we get
\begin{align*}
\frac{d C_{Y^N}(t,t+h)}{dt}=& W(\Phi(t))\Upsilon(t+h,t)^T \\
&+ J_F(\Phi (t)) \int_{0}^t\Upsilon(t,s) W(\Phi(s))\Upsilon(t+h,s)^T ds\\
&+\int_{0}^t  \Upsilon(s,t)W(\Phi(s))\Upsilon(t+h,s)^T  ds J_F(\Phi (t+h))^T.
\end{align*}
By substituting Eqn \eqref{covariance} we have the result.
Similarly, to derive the initial condition $C_{Y^N}(0,h)$ we can apply the Fundamental Theorem of Calculus to Eqn \eqref{covariance}, but with respect to $h$.  
\end{proof}
\noindent
$\Upsilon(t+h,t)$ can be computed by solving Eqn \eqref{AuxEq}. 
Knowledge of $C_{Y^N}(t,t+h)$ allows us to directly compute: $$C_{\hat Z^N}(t,t+h)=\frac{1}{N} B\, C_{Y^N}(t,t+h) \, B^T.$$
Then, by using the law for conditional expectation of a Gaussian distribution, we finally have:

\begin{align*}
\mathbb{E}[\hat Z^N(t+h)|\hat Z^N(t)=\bar{z}]&=\mathbb{E}[\hat Z^N(t+h)]\\
&+C_{\hat Z^N}(\hat Z^N(t+h),Z(t)){C[\hat Z^N(t)]}^{-1}(\bar{z}-\mathbb{E}[\hat Z^N(t)])\\
C[\hat Z^N(t+h)|\hat Z^N(t)=\bar{z}]&=C[\hat Z^N(t+h)]-{C_{\hat Z^N}(t,t+h)}{C_{\hat Z^N}(t,t)}^{-1}{C_{\hat Z^N}(t,t+h)}.
\end{align*}

As the kernel is Gaussian, it is completely determined by its expectation and covariance matrix over time.
Note that the resulting kernel is time-inhomogeneous. The dependence on time is via the mean and covariance of $Y^N$, which are functions of time and define completely the distribution of $Y^N$. The following result, which is a corollary of Theorem 3 in \cite{laurenti2017reachability}, guarantees the correctness of the approximation.
\begin{theorem}
\label{Time Discretization}
Given vector $B\in \mathbb{Z}^{|\Lambda|}$, $b \in \mathbb{R}$, measurable set $A=\{x \in \mathbb{R}^{|\Lambda|}_{\geq 0}\, Bx\leq b \}$ and process $\hat Z^N=B\hat Y^N$ with initial condition $z_0=B \hat x_0 \in \mathbb{R}$, call
$$  P_{reach}^{\hat Y^N,A}(\hat x_0,t_1,t_2)=Prob^{\hat Y^N}(\exists t \in [t_1,t_2 ] \, s.t.\,  \hat Y^N(t) \in A \,|\,\hat Y^N(0)=\hat x_0 ), $$
where $Prob^{\hat Y^N}$ is the Gaussian probabilisty measure of the process $\hat Y^N$.
Further, let $\hat {Z}^{h,N}$ be the DTMP obtained by discretizing $\hat Z^N$ at time step $h>0.$
Then, for $t_1,t_2 \in \mathbb{R}_{\geq 0}$, it holds that  
$$ | P_{reach}^{\hat Y^N,A}(\hat x_0,t_1,t_2)\, - \, Prob^h(\exists k \in [\lfloor \frac{t_1}{h}\rfloor ,\lceil \frac{t_2}{h}\rceil ]\, s.t.\, \hat {Z}^{h,N}(k) \leq b  )|  \to_{h\to 0} 0,$$
uniformly.
\end{theorem}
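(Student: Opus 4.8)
The plan is to collapse the $|\Lambda|$-dimensional reachability problem to a one-dimensional one and then invoke the general convergence result for time-discretisations of reachability, \cite[Theorem~3]{laurenti2017reachability}. Since $A=\{x\in\mathbb{R}^{|\Lambda|}_{\geq 0}:Bx\leq b\}$ is a half-space and $\hat Z^N=B\hat Y^N$, for every $t$ we have $\hat Y^N(t)\in A$ iff $\hat Z^N(t)\leq b$, so
\[
P_{reach}^{\hat Y^N,A}(\hat x_0,t_1,t_2)=Prob\big(\exists\, t\in[t_1,t_2]:\hat Z^N(t)\leq b\ \big|\ \hat Z^N(0)=z_0\big),
\]
and the statement becomes a claim purely about the scalar process $\hat Z^N$ and its time-$h$ sampling $\hat Z^{h,N}(k)=\hat Z^N(kh)$. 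Being an affine image of the LNA process, $\hat Z^N$ is a (time-inhomogeneous) one-dimensional Gauss--Markov process whose mean and covariance are computed, via the proposition on linear combinations of Gaussian processes, Eqns~\eqref{LNAEx}--\eqref{LNAVar} and Lemma~\ref{Cov}, from finitely many ODEs with real-analytic coefficients; it therefore admits a version with a.s.\ continuous paths, and the conditional-Gaussian kernel derived above is exactly the transition kernel $\mathcal{T}$ of the DTMP of Definition~\ref{DTMP}.

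\textbf{Checking the hypotheses.} Next I would verify the regularity conditions on a compact horizon $[0,T]$, $T\geq t_2$. There $\Phi$, $J_F(\Phi(\cdot))$ and $W(\Phi(\cdot))$ are bounded, so $\mathrm{var}(\hat Z^N(t))$ is bounded; and as long as the scalar diffusion coefficient $B\,W(\Phi(t))\,B^T$ is not identically zero along the fluid trajectory, $\mathrm{var}(\hat Z^N(t))>0$ for $t>0$, i.e.\ $\hat Z^N$ is non-degenerate. Non-degeneracy gives $Prob(\hat Z^N(t)=b)=0$ for each $t$, hence continuity of the hitting probability as a function of the endpoints $t_1,t_2$, and, together with path continuity, it rules out (up to a null set) trajectories that meet the level $b$ only tangentially.

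\textbf{The convergence estimate.} One inclusion is easy: if $\hat Z^{h,N}(k)\leq b$ for an admissible index $k$, then $\hat Z^N(kh)\leq b$ with $kh\in[t_1-h,t_2+h]$, giving a continuous-time witness in a slightly enlarged interval whose $O(h)$ excess is absorbed by continuity of the hitting probability in $t_1,t_2$. The reverse inclusion is the crux: I must bound the probability of paths that enter $(-\infty,b]$ strictly between two grid points $kh$ and $(k+1)h$ and climb back above $b$ before $(k+1)h$. By non-degeneracy such a path a.s.\ crosses strictly below $b$, hence spends a positive amount of time below $b$, so for $h$ small enough a grid point lands inside that sojourn; quantifying the leftover probability uniformly comes from a modulus-of-continuity estimate for the Gaussian process $\hat Z^N$ on $[0,T]$ (a Borell--TIS / Dudley-type bound), producing a bound on $|P_{reach}^{\hat Y^N,A}(\hat x_0,t_1,t_2)-Prob^h(\cdot)|$ that depends only on $h$ and not on $z_0,t_1,t_2$ --- which is the uniform convergence asserted. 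All of this is packaged in \cite[Theorem~3]{laurenti2017reachability}, so concretely the work is to check that $\hat Z^N$ meets its hypotheses and to handle the $\lfloor\cdot\rfloor,\lceil\cdot\rceil$ rounding at the interval endpoints.

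\textbf{Main obstacle.} The hard part is precisely this last inclusion: controlling \emph{unobserved} excursions of the continuous process below the threshold between sampling instants, with a rate uniform in the data. This is exactly where non-degeneracy of the LNA covariance (so the process does not linger on $\partial A$) and the sample-path regularity of Gaussian processes are indispensable; the rest --- the linear reduction through $B$ and the endpoint rounding --- is bookkeeping.
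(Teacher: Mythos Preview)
Your proposal is correct and follows exactly the paper's own route: the paper does not give an independent proof but states the theorem as a corollary of Theorem~3 in \cite{laurenti2017reachability}, which is precisely the external result you invoke after the linear reduction through $B$. You in fact spell out more than the paper does --- the hypothesis-checking (non-degeneracy of the projected LNA covariance, path continuity, endpoint rounding) and the sketch of the unobserved-excursion bound --- whereas the paper simply defers all of this to the cited reference.
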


\subsection{Space Discretization}\label{discretespace}
In order to compute the reachability probability for the DTMP $\hat {Z}^{h,N}$, we discretize its continuous state space into a countable set of non-overlapping cells (regions) of constant size $\Delta z>0$ (except for at most regions of measure $0$, i.e. the boundaries of the cells),  obtaining an abstraction in terms of a discrete-time Markov chain $\hat Z^{\Delta z,h,N}$ with state space $S^{\Delta z}$. Specifically, given $S$, the state space of $\hat {Z}^{h,N}$, $A=\{ x \in \mathbb{R}^{|\Lambda|}\,s.t.\, B x \leq b \}$ the target set for $B\in \mathbb{R}^{|\Lambda|}, b\in \mathbb{R}$, we $S\setminus A$ into a grid of cells of length $2 \Delta z$, where $\Delta z$ defines how fine our space discretization is. For each of the resulting regions we consider a representative point, given by the median of the set. We call the set of representative points $\hat S^{\Delta z} $. Then, we have $S^{\Delta z}=\hat S^{\Delta z} \cup \{z_d^{A} \}, $ where $z_d^{A}$ is the state representing the target set. Theorem \ref{Space Discretization} guarantees that for $\Delta z \rightarrow 0$ the error introduced by the space discretization tends to zero. However, for a fixed $N,$ a possible choice of $\Delta z$ is $\Delta z=\frac{0.5}{N}$, which means that the rescaled process $N \hat {Z}^{\Delta z,h,N}$ takes values in $ \mathbb{Z}$. Nevertheless,   when the population is of the order of hundreds or thousands, it can be beneficial to consider $\Delta z > \frac{0.5}{N}$, since a coarser state space aggregation is reasonable.

Similarly to the previous section (see Definition \ref{DTMP}), as $\hat Z^{\Delta z,h,N}$ is a discrete-time stochastic process, given $[0,I]\subseteq \mathbb{N}$ we can associate to $\hat Z^{\Delta z,h,N}$ a probability space with sample space given by the product space $ ({S}^{\Delta z})^{I+1} $,   and  with  a  probability  measure $Prob^{\Delta z, h}$ uniquely defined by $\mathcal{T}^{\Delta z}$, the transition kernel of $\hat Z^{\Delta z,h,N},$ which is defined as follows. For $z_d',z_d \in \hat S^{\Delta z}$, $\mathcal{T}^{\Delta z}(z_d',z_d,k)$ is defined as: 
\begin{equation}\label{kernel}
   \TransKernelDiscrete(z_d',z_d,k)= \int_{z_d'-\Delta z}^{z_d'+\Delta z} f_{\hat Z^N(hk+h)|\hat Z^N(hk)=z_d}(x) dx,
    \end{equation} 
where $h$ is the discrete time step, assumed to be fixed to simplify the notation. 
For $z_d \in \hat S^{\Delta z}$, we have: 
\begin{equation}\label{kernelUnsafe}
   \TransKernelDiscrete(z_d^{A},z_d,k)= \int_{A} f_{\hat Z^N(hk+h)|\hat Z^N(hk)=z_d}(x) dz,
    \end{equation} 
and for the last case, we have: 
\[\mathcal{T}^{\Delta z}(z_d,z_d^{A},k)=\begin{cases}
    1       & \quad \text{if } \text{$z_d=z_d^{A}$}\\
    0  & \quad \text{otherwise }\\
  \end{cases}.
\]
That is, $z_d^{A}$ is made absorbing. Finally, we define: 
\begin{align*}
P_{reach}^{\Delta z,h,A}(&z_d,t_1,t_2)=\\
&Prob^{\Delta z,h}(\exists k \in [\lfloor \frac{t_1}{h}  \rfloor,\lfloor \frac{t_2}{h}  \rfloor]\, s.t.\, {\hat Z^{\Delta z,h,N}(k)} \in z_d^{A}\, |\, \hat Z^{\Delta z,h,N}(0)=z_d ).
\end{align*}
The following theorem, which is a corollary  of Theorem 2 in \cite{abate2010approximate}, guarantees that the error introduced by the state space approximation tends to zero, decreasing $\Delta z$.
\begin{theorem}
\label{Space Discretization}
Let $\hat {Z}^{h,N}$ be a DTMP, and $\hat Z^{\Delta z,h,N}$ the DTMC obtained by space discretization of $\hat {Z}^{h,N}$ with space discretization step $\Delta z >0.$ Call $z_0$ the initial state of $\hat {Z}^{h,N}$ and $z_{d,0} \in S^{\Delta z}$ the discrete state representing the region containing $z_0.$ 
Then, for $t_1,t_2 \in \mathbb{R}_{\geq 0}$, and measurable set $A \subseteq \mathbb{R},$ 
$$ |Prob^h(\exists k \in [\lfloor \frac{t_1}{h}\rfloor ,\lceil \frac{t_2}{h}\rceil ]\, s.t.\, \hat {Z}^{h,N}(k) \in A | \hat {Z}^{h,N}(0)=z_0 ) - P_{reach}^{\Delta z,h,A}(z_{d,0},t_1,t_2)|  \to_{\Delta z} 0,$$
uniformly.
\end{theorem}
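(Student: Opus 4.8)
The plan is to recognize the statement as an instance of the abstraction framework for discrete-time Markov processes developed in \cite{abate2010approximate}, and to verify that the regularity hypotheses required by that framework hold for the DTMP $\hat{Z}^{h,N}$ arising here. Concretely, \cite{abate2010approximate} bounds the error of a finite (or countable) partition-based abstraction of a DTMP on a continuous state space by a quantity of the form $\text{(number of time steps)} \times \text{(Lipschitz constant of the transition density)} \times \Delta z$, provided the one-step conditional density $f_{\hat Z^N(hk+h)\mid \hat Z^N(hk)=z}(x)$ is Lipschitz continuous in the conditioning variable $z$, uniformly over a bounded portion of the state space, and the process is appropriately concentrated. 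So the first step is to state precisely the quantitative bound of \cite[Theorem 2]{abate2010approximate} and show that $P^h_{reach}$ and $P^{\Delta z,h,A}_{reach}$ are exactly the reachability probabilities that that theorem compares, once the target set $A$ is made absorbing (which is how both the continuous DTMP reachability and the DTMC reachability here are defined).

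Second, I would verify the Lipschitz/regularity hypothesis. Here this is where the Gaussian structure does all the work: by the computation preceding the theorem, the conditional law of $\hat Z^N(hk+h)$ given $\hat Z^N(hk)=z$ is Gaussian with mean that is an \emph{affine} function of $z$ (with coefficient $C_{\hat Z^N}(t,t+h)\,C_{\hat Z^N}(t,t)^{-1}$) and with variance $C[\hat Z^N(t+h)] - C_{\hat Z^N}(t,t+h)C_{\hat Z^N}(t,t)^{-1}C_{\hat Z^N}(t,t+h)$ that does \emph{not} depend on $z$ at all. A Gaussian density with fixed positive variance $\sigma^2$ and mean depending affinely on a parameter is globally Lipschitz in that parameter, with constant controlled by $1/\sigma^2$ times the affine coefficient; and on any finite time horizon $[0,\lceil t_2/h\rceil]$ the relevant means, covariances, and the non-degeneracy of $C_{\hat Z^N}(t,t)$ follow from continuity (indeed real-analyticity) of the solutions of Eqns \eqref{eq:ODE}, \eqref{LNAVar} and \eqref{CovY} on a compact interval, so all the constants are uniformly bounded. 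One should also note that only finitely many time steps are involved (since $t_2 < \infty$), so summing the per-step errors keeps the bound finite and proportional to $\Delta z$.

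Third, I would handle the mild mismatch between the \emph{countable} grid used here and the finite partitions of \cite{abate2010approximate}: outside a large compact set the Gaussian tails are exponentially small, so truncating to finitely many cells contributes an error that vanishes as the truncation radius grows and can be absorbed; alternatively one invokes the countable-partition version directly. Assembling these pieces gives $|P^h - P^{\Delta z,h,A}_{reach}| \le c(t_1,t_2,h,N)\,\Delta z \to 0$ as $\Delta z \to 0$, and since the constant $c$ does not depend on the initial state $z_0$ (only on the horizon and on sup-bounds of the LNA coefficients over $[0,t_2]$), the convergence is uniform, as claimed.

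I expect the main obstacle to be purely bookkeeping rather than conceptual: carefully matching the hypotheses and the precise form of the error bound in \cite{abate2010approximate} to the present setting — in particular checking uniform non-degeneracy of the one-step covariance $C[\hat Z^N(t+h)\mid \hat Z^N(t)=\bar z]$ on the time horizon (so that the Lipschitz constant of the Gaussian kernel stays finite) and dealing cleanly with the unbounded, countable state space via tail estimates. Once those technical points are pinned down, the theorem is a direct corollary.
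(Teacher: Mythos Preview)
Your proposal is correct and follows exactly the approach the paper takes: the paper does not give a standalone proof but simply states that the result ``is a corollary of Theorem~2 in \cite{abate2010approximate}''. Your write-up is in fact more detailed than the paper's treatment, since you spell out the verification of the Lipschitz hypothesis for the Gaussian transition kernel and the handling of the unbounded state space, whereas the paper leaves those checks implicit.
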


\subsection{Correctness}
To prove the correctness of our numerical algorithm we need to show that, for any measurable set, the reachability measure computed on $\hat X^N$ converges to that computed on $\hat Y^N$. This is guaranteed by the following theorem.
\begin{theorem}
\label{COnvergenceReach}
Let $\mathcal{C}=(\Lambda,R)$ be a SRN with induced CTMC $\hat X^N$ and $\hat Z^{\Delta z,h,N}$ be the DTMC obtained by space and time discretization of $B \hat Y^N$. Assume $\hat X^N(0)=\hat x_0$ and the corresponding initial state for $\hat Z^{\Delta z,h,N}$ is $z_{d,0}$.
Then, for $t_1,t_2 \in \mathbb{R}_{\geq 0}$, $B\in \mathbb{R}^{m\times |\Lambda|}$ and $b \in \mathbb{R}^m$ and $A=\{x \in \mathbb{R}^{|\Lambda|}_{\geq 0}\, s.t.\,\forall i \in \{1,...,m \} (Bx)_i\leq b_i \},$ it holds that:
\begin{align*}
     \lim_{N\to \infty}\lim_{h \to 0}\lim_{\Delta z \to 0}| P_{reach}^A(\hat x_0,t_1,t_2)\, - \, 
     &P_{reach}^{\Delta z,h,A}(z_{d,0},t_1,t_2)|  =0.
\end{align*}
\end{theorem}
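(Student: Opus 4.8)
The plan is to establish the three limits in succession, peeling them off from the inside out, so that the two inner limits ($\Delta z \to 0$ and $h \to 0$) reduce the DTMC reachability probability to the continuous-space, continuous-time reachability probability on the Gaussian process $\hat Y^N$, and the outer limit ($N \to \infty$) is handled by weak convergence of $\hat X^N$ to (a process related to) $\hat Y^N$ together with a continuity argument for the reachability functional. Concretely, I would write
\[
| P_{reach}^A(\hat x_0,t_1,t_2) - P_{reach}^{\Delta z,h,A}(z_{d,0},t_1,t_2)|
\le T_1 + T_2 + T_3,
\]
where $T_3 = |P_{reach}^{\Delta z,h,A}(z_{d,0},t_1,t_2) - P_{reach}^{h,A}(z_0,t_1,t_2)|$ is the space-discretization error, $T_2 = |P_{reach}^{h,A}(z_0,t_1,t_2) - P_{reach}^{\hat Y^N,A}(\hat x_0,t_1,t_2)|$ is the time-discretization error, and $T_1 = |P_{reach}^{\hat Y^N,A}(\hat x_0,t_1,t_2) - P_{reach}^A(\hat x_0,t_1,t_2)|$ is the approximation error of replacing the CTMC by its CLA. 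By Theorem \ref{Space Discretization}, $T_3 \to 0$ uniformly as $\Delta z \to 0$ (for fixed $h$ and $N$); by Theorem \ref{Time Discretization}, $T_2 \to 0$ uniformly as $h \to 0$ (for fixed $N$); hence the two inner limits collapse the expression to $T_1$, and it remains to show $\lim_{N\to\infty} T_1 = 0$. Since the statement is for general $m$, I would first observe that the reachability event for a polytope $A$ defined by $B \in \mathbb{Z}^{m\times|\Lambda|}$ is the event that the $m$-dimensional projected Gaussian process $\hat Z^N = B\hat Y^N$ enters the orthant-type region $\{z : z_i \le b_i,\ i=1,\dots,m\}$, so the reduction to linear projections (and the corresponding $m$-dimensional DTMC) is exactly the setting already developed; the inner two theorems are cited in the $m=1$ form but are asserted in the text to extend to $m>1$, and I would invoke that.

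For the outer limit, the key is a weak-convergence statement for the scaled process. By Theorem \ref{th:LNA}, $N^{1/2}(\hat X^N(t) - \Phi(t)) \Rightarrow G(t)$ for each fixed $t$, and in fact (this is the form actually needed, and is the process-level version from \cite{ethier2009markov}) the convergence holds at the level of the processes on path space $D([0,\infty),\mathbb{R}^{|\Lambda|})$ with the Skorokhod topology. Writing $\hat Y^N(t) = \Phi(t) + N^{-1/2} G(t)$, both $N^{1/2}(\hat X^N - \Phi)$ and $N^{1/2}(\hat Y^N - \Phi) = G$ converge weakly to $G$; equivalently, the laws of $\hat X^N$ and of $\hat Y^N$ on path space become asymptotically indistinguishable. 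Then I would express each reachability probability as the expectation of an indicator functional $\chi_A$ on path space, $\chi_A(\omega) = \mathbf{1}\{\exists t \in [t_1,t_2] : \omega(t) \in A\}$, so that $P_{reach}^A(\hat x_0,t_1,t_2) = \E[\chi_A(\hat X^N)]$ and $P_{reach}^{\hat Y^N,A}(\hat x_0,t_1,t_2) = \E[\chi_A(\hat Y^N)]$. To pass the limit through, I need $\chi_A$ to be continuous (in the Skorokhod topology) at almost every path under the limiting law — i.e. the set of discontinuities of $\chi_A$ must be a null set for the law of the limit process $\Phi + N^{-1/2}G$, or more precisely one argues directly on the sequence. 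The discontinuity set of $\chi_A$ consists of paths that touch the boundary $\partial A$ during $[t_1,t_2]$ without crossing into the interior (or that touch only at the endpoints); the crucial fact is that a non-degenerate Gaussian process assigns probability zero to the event of being tangent to a hyperplane, so $\chi_A$ is a.s. continuous under the relevant laws. This is where I would use convexity of $A$ (so $\partial A$ is a finite union of hyperplane pieces) and non-degeneracy of the covariance of $\hat Z^N$ (so the one-dimensional projections $B_i \hat Y^N$ have strictly positive variance, hence a.s. no time interval spent exactly at level $b_i$). Combining the continuous-mapping / portmanteau argument with the weak convergence then gives $\E[\chi_A(\hat X^N)] - \E[\chi_A(\hat Y^N)] \to 0$, i.e. $T_1 \to 0$.

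The main obstacle is precisely this last step: justifying that the reachability functional is a.s. continuous and that weak convergence on path space is available in the form needed. Two subtleties require care. First, Theorem \ref{th:LNA} as quoted is a finite-dimensional (fixed-$t$) convergence; I would need either to strengthen it to functional convergence (which is standard — Kurtz's theorem in \cite{ethier2009markov} gives convergence in $D([0,T],\mathbb{R}^{|\Lambda|})$) or to argue via the time-discretized chains, where only finitely many time points $kh$ matter, pass $N\to\infty$ there first, and only afterwards let $h\to0$ — but the theorem's stated order of limits is $N$ outermost, so the functional version is cleaner. Second, the boundary-hitting null-set argument: one must ensure that the polytope $A$ together with the (time-dependent, but non-degenerate) Gaussian law of $\hat Y^N$ genuinely has the property that paths tangent to $\partial A$ form a null set, and that the "touch only at $t_1$ or $t_2$" boundary cases are also null (this uses that the mean trajectory $\Phi$ is not identically on a face of $\partial A$ over a time interval, or a short absolutely-continuous-marginal argument). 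I would handle the degenerate cases — e.g. $B$ such that $B\hat Y^N$ has zero variance, or $\Phi$ lying in $\partial A$ — by noting they form a measure-zero exceptional set in the relevant parameters, or simply by assuming $W(\Phi(t))$ keeps the projected covariance positive definite, which is the generic situation and consistent with the standing assumptions on the SRN. With these points addressed, the three-term decomposition closes the proof.
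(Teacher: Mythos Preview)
Your overall architecture matches the paper's: the same three-term triangle inequality, with $T_3$ and $T_2$ dispatched by Theorems~\ref{Space Discretization} and~\ref{Time Discretization}. The divergence is entirely in how you handle $T_1$.

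Your plan for $T_1$ is to apply the continuous-mapping/portmanteau argument directly to $\hat X^N$ and $\hat Y^N$: both converge weakly on path space, and the reachability indicator $\chi_A$ is a.s.\ continuous under ``the relevant laws'', so $\E[\chi_A(\hat X^N)]-\E[\chi_A(\hat Y^N)]\to 0$. The gap is that the \emph{limit} law of both $\hat X^N$ and $\hat Y^N$ is the point mass at the deterministic trajectory $\Phi$, and the portmanteau theorem requires $\chi_A$ to be continuous at a.e.\ point of that limit law, i.e.\ at $\Phi$ itself. This holds when $B\Phi(t)$ stays strictly away from $b$ on $[t_1,t_2]$ (then both reachability probabilities tend to $0$) or crosses strictly below $b$ (both tend to $1$). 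But if $B\Phi$ is \emph{tangent} to the face $b$ at some time, $\chi_A$ is discontinuous at $\Phi$ with probability one under the limit law, and your argument collapses. You note this case and propose to dismiss it as ``measure-zero in the parameters'' or to assume it away; however, $b$ is a fixed datum of the theorem, not a random or integrated parameter, so this is not a valid escape. Moreover, this tangent case is not pathological but central: it is precisely the regime where the reachability probability for $\hat X^N$ converges to a value strictly between $0$ and $1$, determined by the Gaussian fluctuations, and where naive mean-field methods are known to fail.

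The paper's proof works at exactly this point by rescaling: it passes to $G^N=\sqrt N(\hat X^N-\Phi)\Rightarrow G$ and rewrites the event $B\hat X^N(t)\le b$ as $BG^N(t)\le b^N(t)=\sqrt N(b-B\Phi(t))$, so reachability becomes $P^N(R_N)$ for an $N$-dependent target set $R_N$ in trajectory space. The non-tangent times send $b^N(t)\to\pm\infty$ and are controlled via uniform tightness; the tangent times (finitely many, by real-analyticity of the rates) leave a nontrivial limiting set $R$. The paper then sandwiches $R_N$ between $R$ and an over-approximation $R_\epsilon$ and proves two lemmas showing that these sets are $G$-continuity sets (using absolute continuity of the infimum of a Gaussian process), yielding $P^N(R_N)\to P(R)$ and $P(R_N)\to P(R)$. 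Your proposal does contain the rescaling observation, but you do not follow it through to the $N$-varying threshold and the resulting analysis; that is the missing idea.
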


The proof of Theorem \ref{COnvergenceReach} is detailed in the Appendix. The main idea is to use Theorems \ref{Space Discretization} and \ref{Time Discretization} to show that the numerical model checking algorithms on the Gaussian process $\hat Y^N$ are sound. Then, we employ Theorem \ref{th:LNA} and the theory of weak convergence to show the convergence in distribution of the reachability measure on $\hat X^N$ to that on $\hat Y^N$. The proof is complicated by the fact that both $\hat Y^N$ and $\hat X^N$ depend on $N$.



\subsection{Computation of Reachability Probabilities}
Our approach for computing reachability probabilities is summarized in Algorithm \ref{General}.
\begin{figure}
	\centering
     \includegraphics[width=1.2\linewidth]{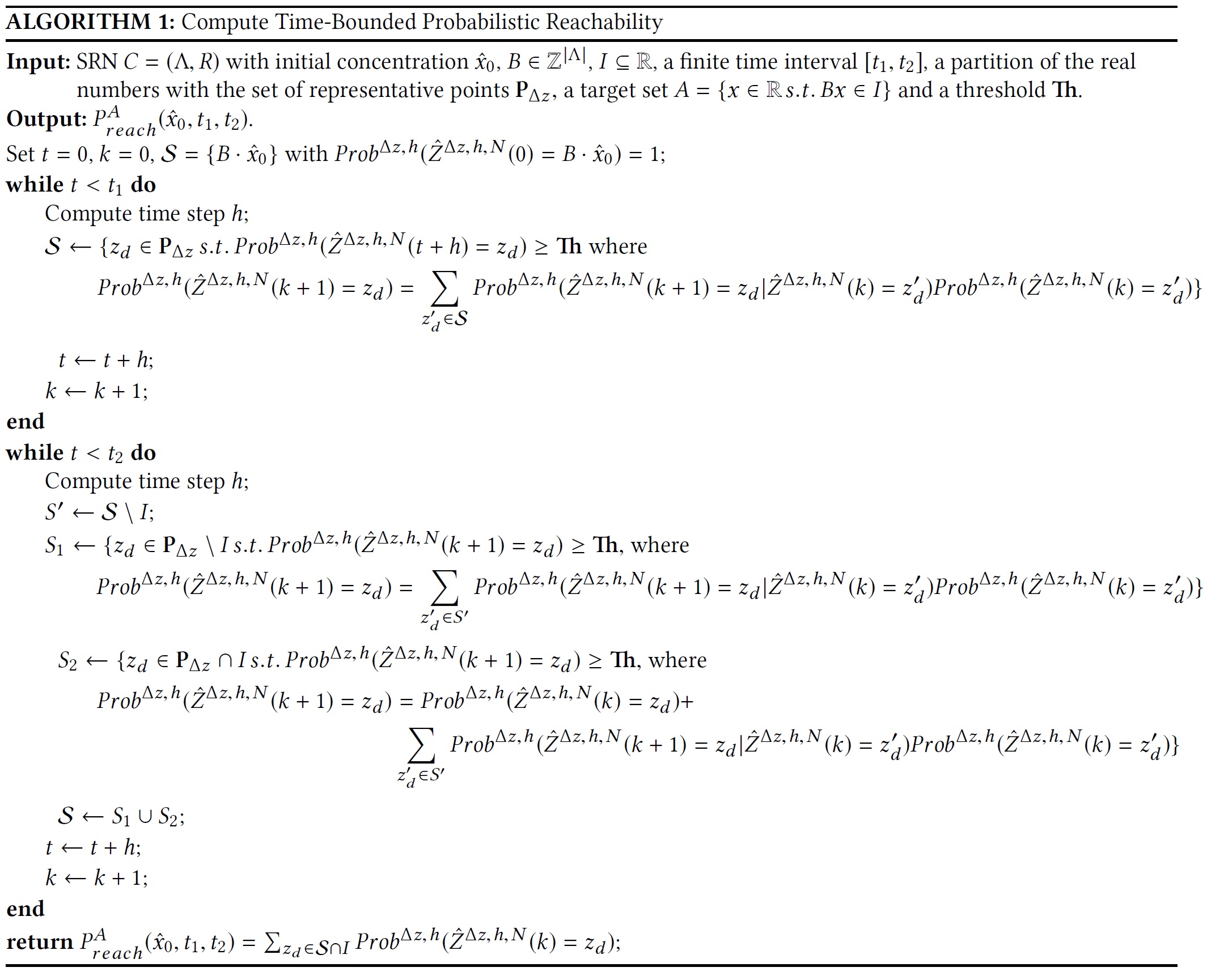}
\label{General}
\end{figure}

In Line $1$, we initialize the system at time $0$. In the context of the algorithm, $\mathcal{S}$ is a set containing the states at a particular time with probability mass greater than the threshold $\mathbf{Th}$.  In our implementation we partition $\mathbb{R}$ with a grid of cells of constant size $\Delta z>0$. Then, for each cell we select a representative point, and denote the set of representative points $\mathbf{P}_{\Delta z}.$ $\mathcal{S}$, for any time $t>0$, will be a subset of this set. $\mathbf{Th}$ equals $10^{-14}$ in all our experiments. The use of a threshold guarantees that the algorithm always terminates in finite time. This introduces a truncation error, which can be easily estimated as shown in \cite{wolf2010solving}. Initially, we have that $\mathcal{S}$ contains only one state $B \cdot \hat x_0$. Then, in Lines $3-7$, we propagate the probability for any discrete step while $t<t_1$, according to classical algorithms for DTMCs \cite{Kwiatkowska2007}. For generality, we assume that the time step $h$ is chosen adaptively, according to the system dynamics.
Propagating probability is possible, as for any $z_d' \in \mathcal{S}$,  $Prob^{\Delta z,h}(\hat Z^{\Delta z,h,N}(k+1)=z_d'|\hat Z^{\Delta z,h,N}(k)=z_d)=\mathcal{T}^{\Delta z}(z_d',z_d,k)$.
From Line $8$ to $15$, we compute the probabilistic reachability, $P_{reach}^A(\hat x_0,t_1,t_2),$ by propagating the probability only for states that are not in A. In fact, states in A are made absorbing.
When we reach $t\geq t_2$, we have that $P_{reach}^A(\hat x_0,t_1,t_2)\approx \sum_{z\in S\cap I}Prob^{\Delta z,h}(\hat Z^{\Delta z,h,N}( \lceil\frac{t_2}{h}\rceil)=z|\hat Z^{\Delta z,h,N}(0)=z_{d,0})$.


\begin{example}
We return to the SRN introduced in Example \ref{RunningExample}, and, for $N=100,$ we consider the following reachability property:
$$ P_{=?}(F^{[0,T]}\, \hat mRNA>\hat Pro+0.2),\, T\in[0,100]$$
where $=?$, in the style of PRISM \cite{kwiatkowska2011prism} or PEPA \cite{ciocchetta2009bio}, represents the quantitative value of a property. The above formula asks for the probability that, during the first $100$ seconds, the system reaches a state where the mRNA concentration exceeds the protein concentration by more than $0.2$.
In Figure \ref{FigureReachability} we compare the probability value computed by Algorithm \ref{General} with the same property computed on the CTMC $\hat X^N$ using PRISM for different values of $h$. We assume $\Delta _z=\frac{0.5}{N},$ which is justified by the fact that the number of molecules is an integer. 
\begin{figure}
	\centering
     \includegraphics[width=1\linewidth]{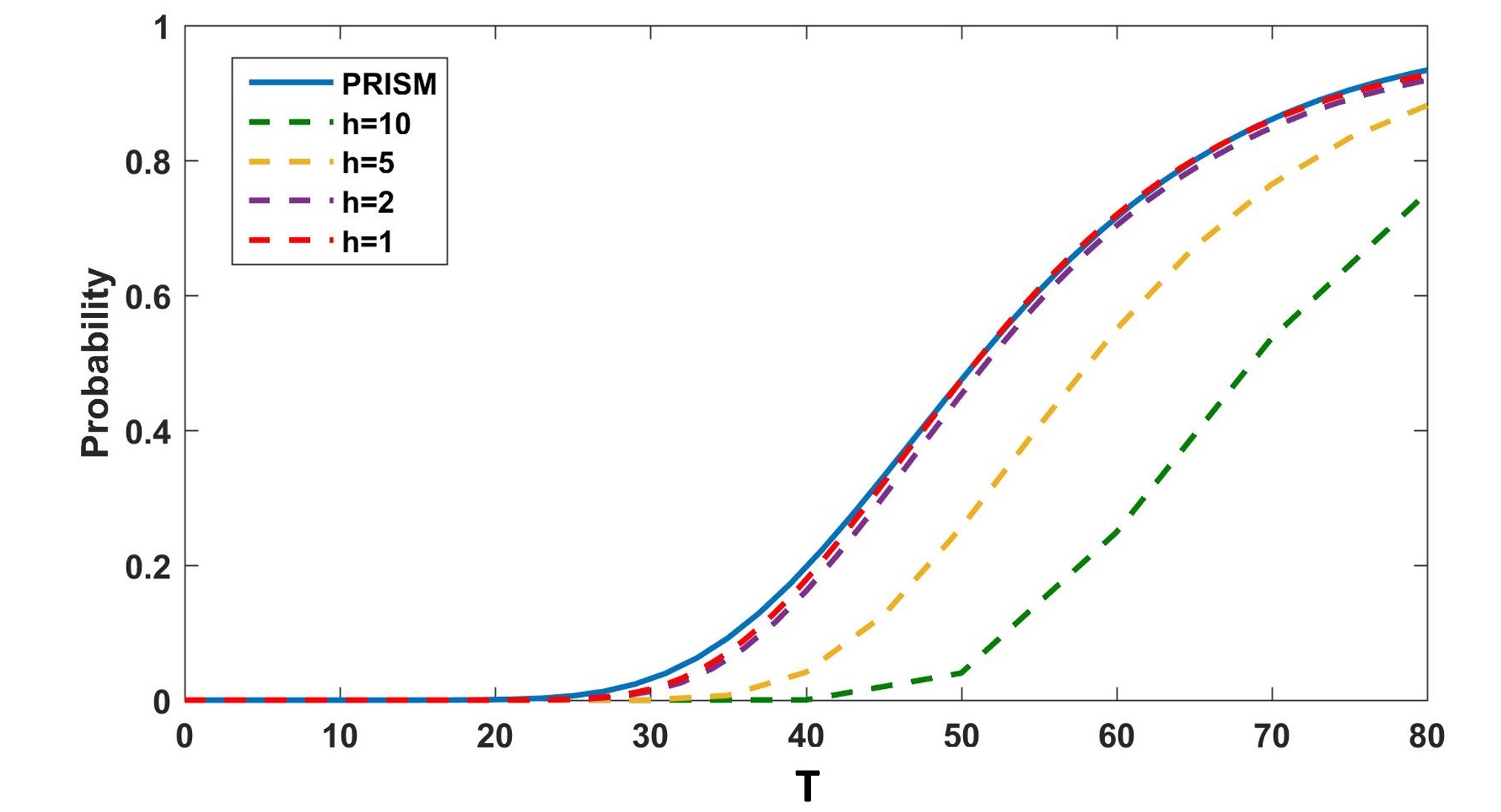}
\caption{Comparison of the evaluation of $P_{=?}(F^{[0,T]} mRNA>Pro+20), T\in[0,100],$ on the CTMC as estimated by PRISM \cite{kwiatkowska2011prism}, and on the CLA approximation for a fixed $\Delta z$ and four different values of $h$. }
\label{FigureReachability}
\end{figure}%
\end{example}

{

\section{Until operator}
We show how to generalize the computation of the reachability probabilities of the previous section to  the until operator. For $\hat x \in \mathbb{R}^{|\Lambda|}_{\geq 0}$,
let $\eta_1(\hat x)=B_1 \hat x \leq l_1$ and $\eta_2(\hat x)=B_2 \hat x \leq l_2$, then,  by definition we have: 
\begin{align*}
   \hat X^N, \hat x \models & P_{\sim p}(\eta_1 U^{[t_1,t_2]}\eta_2) \quad \iff \\
 & Prob(\exists t \in [t_1,t_2] \, s.t. \, \eta_2(\omega(t)) \wedge \forall t' \in [0,t), \eta_2(\omega(t))|\omega \in \Omega_{\hat x}),
\end{align*}
where $\Omega_{\hat x}$ is the set of paths of $\hat X^N$ starting in $\hat x$.
To solve this problem we can construct the following stochastic process:
$$\hat Z^N=B \hat Y^N$$
where $B=(B_1,B_2)^T$, and $\hat Y^N$ is the CLA of $\hat X^N$. 
By the properties of multivariate Gaussian distribution, $\hat Z^N$ is still a Gaussian process with mean and covariance matrix given by
$$ \mathbb{E}[\hat Z^N(t)]=B\mathbb{E}[Y^N(t)] \quad C_{\hat Z^N}(t)=\frac{1}{N}B\, C_{Y^N}(t)\, B^T, \, t \in \mathbb{R}_{\geq 0} .$$
Note that $\hat Z^N$ is again a time-inhomogeneous Markov process, as its kernel depends on the statistics of $Y^N$. 
Following the approach of the previous section, we can discretize time and space for $\hat Z^N$, and thus obtain a DTMC $\hat Z^{\Delta z,h,N}$. At this point, the problem reduces to computing the probability for until on the DTMC. Algorithms for computing the resulting measure on a time-inomhogeneous DTMC exist and are well studied \cite{chen2009ltl}. In fact, to compute $P_{\sim p}(\eta_1 U^{[t_1,t_2]}\eta_2),$
we can simply make the regions that do not satisfy $\eta_1$ and those for which $\eta_2$ holds absorbing, and then compute the probability of reaching a region for which $\eta_2$ is satisfied. This can be computed by resorting on Algorithm \ref{General}, as presented in the previous section. Theorem \ref{COnvergenceReach} then guarantees the following proposition.
\begin{proposition}\label{Proposition:Until}
Let $\eta_1(\hat x)=B_1 \hat x \sim l_1$,  $\eta_2(\hat x)=B_2 \hat x \sim l_2$, and $B=\begin{bmatrix}
      B_1          \\ B_2
     \end{bmatrix}$. For $\hat x_0 \in \mathbb{R}^{|\Lambda|}_{\geq 0},$ let $z_{d,0}$ be the state in the state space of $Z^{\Delta z, h, N}$ corresponding to the region containing $B \hat x_0.$ Call
\begin{align*}
 &P_{until}((\hat x_0,\eta_1,\eta_2, \hat X^N,t_1,t_2) )=\\
 &\quad \quad Prob(\exists t \in [t_1,t_2] \, s.t. \, \eta_2(\omega(t)) \wedge \forall t' \in [0,t), \eta_1(\omega(t))\, | \, \omega \in \Omega_{\hat x_0}),\\
&    P_{until}^{\Delta z, h}((z_{d,0},\eta_1,\eta_2, {\hat Z^{\Delta z,h,N}},t_1,t_2) )=\\
    & \quad \quad  Prob^{\Delta z,h}(\exists k \in [\lfloor \frac{t_1}{h}  \rfloor,\lfloor \frac{t_2}{h}  \rfloor] \, s.t. \, \eta_2(Z^{\Delta z, h, N}(k)) \wedge \forall k' \in [0,k-1],\\
    & \quad \quad \quad \quad \quad \quad \quad \quad \quad \quad \quad \quad \quad \quad \quad \quad \eta_1(Z^{\Delta z, h, N}(k'))\,| \,Z^{\Delta z, h, N}(0)=z_{d,0}).
\end{align*} 
 Then, it holds that
\begin{align*}
\label{Until}
 \lim_{N \to \infty}\lim_{h \to 0}\lim_{\Delta z \to 0} |P_{until}&(((\hat x_0,\eta_1,\eta_2, \hat X^N_1,[t_1,t_2]) )-P_{until}^{\Delta z,h}((z_{d,0},\eta_1,\eta_2, {\hat Z^{\Delta z,h,N}},[t_1,t_2]) )|=0.
\end{align*}
\end{proposition}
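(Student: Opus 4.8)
The plan is to reduce the computation of the until probability to a time-bounded reachability computation of exactly the kind already covered by Theorem \ref{COnvergenceReach}, and then to verify that the one extra ingredient demanded by the reduction --- an additional absorbing region --- leaves all three limits intact. Throughout I work with the stacked matrix $B=\begin{bmatrix}B_1\\ B_2\end{bmatrix}$, so that $\hat Z^N=B\hat Y^N$ is a two-dimensional, time-inhomogeneous Gaussian (Markov) process whose components track $B_1\hat Y^N$ and $B_2\hat Y^N$; the sets $A_1=\{\hat x:\neg\eta_1(\hat x)\}$ and $A_2=\{\hat x:\eta_2(\hat x)\}$ then correspond to half-spaces, hence convex polytopes, in the $\hat Z^N$-space, and are therefore of the form handled by our discretization schemes.

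First I would set up the reduction. By the standard treatment of bounded until for (time-inhomogeneous) Markov chains \cite{chen2009ltl,baier2008principles}, a path satisfies $\eta_1\,U^{[t_1,t_2]}\,\eta_2$ if and only if, in the process obtained by making $A_1$ absorbing (a ``failure'' sink) and $A_2$ absorbing (a ``success'' sink, with priority over the failure sink), the success sink is occupied during $[t_1,t_2]$ --- using, when $t_1>0$, the usual two-phase transient computation. Indeed, once $A_1$ is absorbing, occupying the success sink at the first time $t$ forces $\eta_1$ to have held throughout $[0,t)$, which is precisely the semantics of Definition \ref{CSLSemantics}. Since this rewriting is a functional of the underlying continuous-time path alone, it applies verbatim to $\hat X^N$, to $\hat Y^N$, and --- after time and space discretization --- to $\hat Z^{\Delta z,h,N}$. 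Consequently $P_{until}$ and $P_{until}^{\Delta z,h}$ coincide, respectively, with the time-bounded reachability probabilities of target $A_2$ on $\hat X^N$ and on $\hat Z^{\Delta z,h,N}$, both with the region $A_1$ additionally made absorbing.

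It then remains to invoke Theorem \ref{COnvergenceReach} essentially as a black box, but with its proof re-read for the process in which the half-space $A_1$ is also absorbing. I would argue that none of the three steps degrades: (i) Theorem \ref{Space Discretization} (Theorem~2 of \cite{abate2010approximate}) already treats arbitrary measurable targets and the absorbing-set construction, so a second absorbing half-space merely replaces one DTMP by another of the same type and the $\Delta z\to0$ error still vanishes uniformly; (ii) the proof of Theorem \ref{Time Discretization} (via Theorem~3 of \cite{laurenti2017reachability}) bounds the time-discretization error for first-passage-type events of the Gaussian process, and ``enter $A_2$ before leaving the complement of $A_1$ within $[t_1,t_2]$'' is precisely such an event, so the $h\to0$ error still vanishes uniformly; (iii) the weak-convergence ($N\to\infty$) argument of Theorem \ref{COnvergenceReach}, which combines Theorem \ref{th:LNA} with a Portmanteau/continuous-mapping step, goes through once one notes that making a half-space absorbing is again a continuous operation on the relevant path functional, and that the hyperplanes $\{B_1\hat x=l_1\}$ and $\{B_2\hat x=l_2\}$ carry zero mass under the limiting Gaussian measure, by non-degeneracy of the covariance matrix obtained from Eqn \eqref{LNAVar}. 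Chaining the three bounds yields the stated triple limit.

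The main obstacle is step (iii), exactly as in Theorem \ref{COnvergenceReach}: one must show that the until event, viewed as a functional on path space, is almost surely continuous with respect to the limiting measure, so that weak convergence of $\hat Y^N$ transfers to convergence of the probabilities. The delicate point is excluding ``grazing'' trajectories that touch $\{B_1\hat x=l_1\}$ or $\{B_2\hat x=l_2\}$ tangentially on a set of positive probability; this is controlled by the non-degeneracy of the covariance together with the fact that both $\hat X^N$ and $\hat Y^N$ depend on $N$ --- precisely the technical complication already resolved in the Appendix proof of Theorem \ref{COnvergenceReach}, which here only needs to be applied to the two half-spaces simultaneously.
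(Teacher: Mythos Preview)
Your proposal is correct and follows essentially the same route as the paper: reduce the until to a reachability computation by making both $\neg\eta_1$ and $\eta_2$ absorbing on the projected process $\hat Z^N=B\hat Y^N$, then invoke Theorem~\ref{COnvergenceReach}. In fact your write-up is considerably more detailed than the paper's own justification, which is the single sentence ``Theorem~\ref{COnvergenceReach} then guarantees the following proposition''; in particular, your explicit check that the three limits (space, time, CLA) survive the extra absorbing half-space, and your identification of the grazing-trajectory issue as the only delicate point, go beyond what the paper spells out.
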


\begin{example}\label{RunningUntil}
Let us return to the SRN introduced in Example \ref{RunningExample}. We consider the following quantitative property:
$$ P_{=?}[(Pro<10)\, U^{[0,T]}\, ( mRNA > 30)], T \in [0,100], $$
which is satisfied for those paths in which the mRNA population becomes greater than $30$ before the protein population hits $10$ molecules. In Figure \ref{FigureUntil} we evaluate the property for different values of $h$ and fixed $N=100$.
\begin{figure}
	\centering
     \includegraphics[width=1\linewidth]{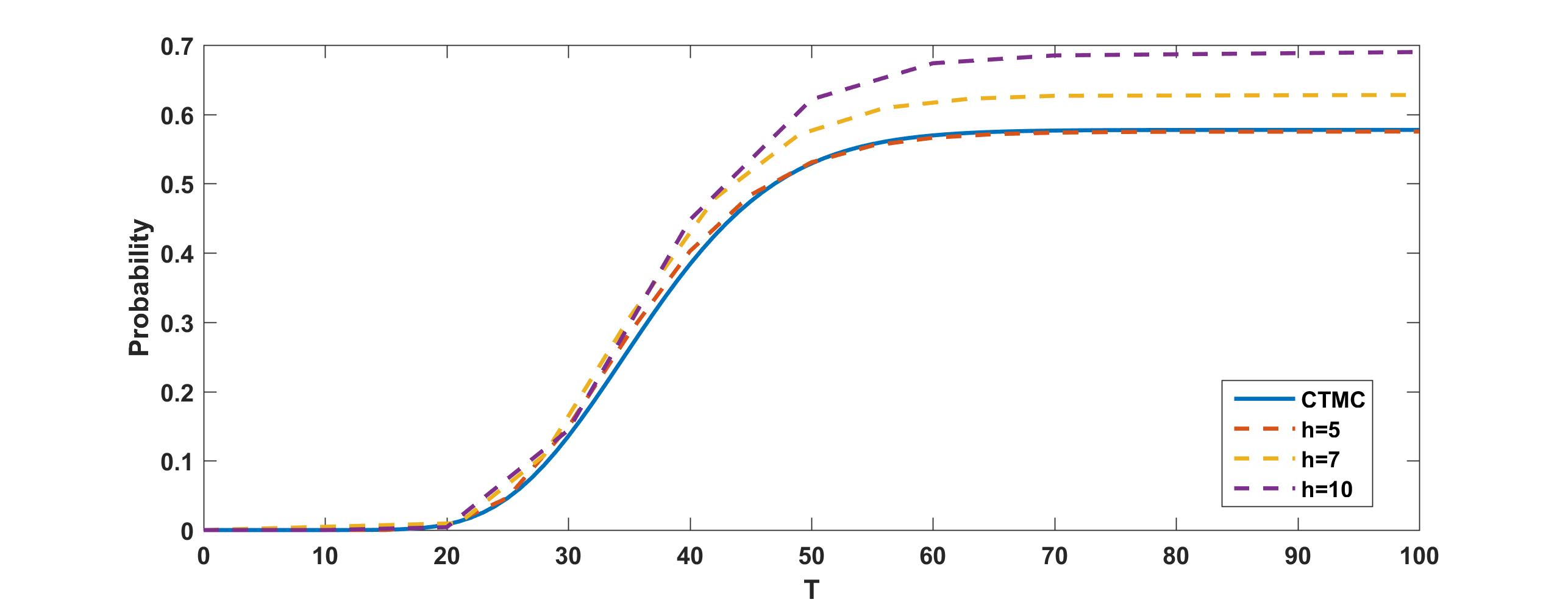}
\caption{Comparison of the evaluation of $P_{[0,T]}[(Pro<10)\,  U^{[0,T]}\,( mRNA > 30)]$ on a CTMC as estimated by PRISM \cite{kwiatkowska2011prism}, and on the CLA approximation for $\Delta z=\frac{0.5}{N}$ and three different values of $h$. }
\label{FigureUntil}
\end{figure}%
Already for $h=5$ the property is surprisingly close to the same measure computed on the original CTMC using uniformization techniques as implemented in PRISM \cite{kwiatkowska2011prism}. Note that here the property is expressed in terms of number of molecules. In fact, as we explained in Section \ref{Section:CSL} for the CSL properties considered in here the two representations are equivalent.
\end{example}

}
\section{Correctness }

The method we present is approximate. In particular, errors are introduced in two ways: by resorting to the CLA and by discretisation of time and space of the CLA.
The quality of these two approximations is controlled by three parameters: (a) $N$, the system size, which influences the accuracy of CLA, (b) $h$, the time step size, and (c) $\Delta z$, the space discretization step, which influences the quality of the approximation of the reachability probability of the CLA.
Then, we have the following result.
\begin{theorem}
\label{cor:correctness}
Let $\Psi$ be a CSL formula as defined in Definition \ref{CSLSyntax}, $\hat x \in \mathbb{R}^{|\Lambda|}_{\geq 0}$ and $z_{0,d}$ be the state in $Z^{\Delta z,h,N}$ corresponding to the region containing $\hat x_0$. Then, for $N\to \infty,h\to 0,\Delta z \to 0,$ it holds that:
$$\hat X^N ,\hat x \models \Psi \, \leftrightarrow \, \hat Z^{\Delta z,h,N},z_{d_0} \models \Psi,$$
{almost surely.}
\end{theorem}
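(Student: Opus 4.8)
The plan is to argue by structural induction on the CSL formula $\Psi$. Since Definition~\ref{CSLSyntax} forbids nesting the probabilistic operators inside one another, the induction is short: the base cases are $\Psi = P_{\sim p}(F^{[t_1,t_2]}\eta)$ and $\Psi = P_{\sim p}(\eta_1\,U^{[t_1,t_2]}\,\eta_2)$, and the inductive cases are the Boolean connectives $\Psi = \neg\Psi'$ and $\Psi = \Psi_1\wedge\Psi_2$.

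For the first base case I would unfold the semantics of Definition~\ref{CSLSemantics}: $\hat X^N,\hat x\models P_{\sim p}(F^{[t_1,t_2]}\eta)$ holds iff $P^A_{reach}(\hat x,t_1,t_2)\sim p$, where $A$ is the convex polytope associated with $\eta$, and likewise $\hat Z^{\Delta z,h,N},z_{d,0}\models P_{\sim p}(F^{[t_1,t_2]}\eta)$ holds iff $P^{\Delta z,h,A}_{reach}(z_{d,0},t_1,t_2)\sim p$. Theorem~\ref{COnvergenceReach} states exactly that the latter probability converges, in the iterated limit $N\to\infty$, $h\to 0$, $\Delta z\to 0$, to the former. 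Hence, as soon as $P^A_{reach}(\hat x,t_1,t_2)\neq p$, the truth value of $\,\cdot\sim p\,$ is the same for the approximate and the exact probability once $\Delta z,h$ are small enough and $N$ large enough, so the two satisfaction relations agree in the limit. The only situation in which this reasoning fails is the knife-edge $P^A_{reach}(\hat x,t_1,t_2) = p$, which occurs for the single threshold value $p = P^A_{reach}(\hat x,t_1,t_2)$; discarding it is precisely the content of the qualifier \emph{almost surely}. The until base case is handled identically, replacing Theorem~\ref{COnvergenceReach} by Proposition~\ref{Proposition:Until} and using the matrix $B$ stacked from $B_1$ and $B_2$ together with the absorbing-region construction described there.

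For the inductive step, the semantics of $\neg$ and $\wedge$ commute with the corresponding Boolean operations on truth values, so the equivalences for the subformulas supplied by the induction hypothesis immediately give the equivalence for $\Psi$: for $\Delta z$ small, then $h$ small, then $N$ large, each subformula has stabilised to its limiting truth value, hence so has any Boolean combination of them. Since $\Psi$ contains only finitely many probabilistic subformulas, only finitely many knife-edge threshold conditions are excluded, and their union is still negligible.

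The step I expect to require the most care is the threshold-crossing case $P^A_{reach}(\hat x,t_1,t_2)=p$ (and its analogue for the until operator): here no finite-precision approximation can decide the inequality $\cdot\sim p$, so the equivalence genuinely fails, and this is exactly the measure-zero set of threshold values that \emph{almost surely} removes. The remaining analytic substance — the simultaneous control of the CLA error, the time-discretization error and the space-discretization error in the iterated limit, which must invoke the convergence results in the order $\Delta z\to 0$, then $h\to 0$, then $N\to\infty$ with the uniformity asserted in Theorems~\ref{Time Discretization} and~\ref{Space Discretization} and the weak-convergence argument behind Theorem~\ref{th:LNA} — is already packaged inside Theorem~\ref{COnvergenceReach} and Proposition~\ref{Proposition:Until}, so reusing it here is mechanical, as is tracking, for each $\Delta z$, the discrete state $z_{d,0}$ representing the cell that contains $B\hat x_0$ (Section~\ref{discretespace}).
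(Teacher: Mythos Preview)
Your proposal is correct and follows essentially the same approach as the paper: structural induction on $\Psi$, with the probabilistic operators handled via Theorem~\ref{COnvergenceReach} (and, as you note, Proposition~\ref{Proposition:Until} for the until case), the knife-edge threshold $p$ equal to the exact reachability value dismissed as a measure-zero exception, and the Boolean connectives carried by the induction hypothesis. If anything, your write-up is more explicit than the paper's own proof, which is quite terse and only sketches the same argument.
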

\begin{proof}
The proof is by induction on the terms in Definition \ref{CSLSyntax}. The interesting cases are $\Psi=P_{\sim p}(F^{[t_1,t_2]}\eta)$ and $\Psi=P_{\sim p}(\eta_1\,U^{[t_1,t_2]}\, \eta_2)$. Theorem \ref{COnvergenceReach} guarantees that, for  $N\to \infty,h\to 0,\Delta z \to 0,$ the difference between the probability of the above properties computed on $\hat Y^N$, the CLA of $\hat X^N$, and on $\hat X^N$ is equal to $\epsilon\to 0.$
Assume $Prob(\exists t \in [t_1,t_2] \, s.t.\, \eta(\omega(t))|\omega \in \Omega_{\hat x})=q$, and consider the reachability property  $P_{\sim q}(F^{[t_1,t_2]}\eta)$. In this case, no approximation algorithm can guarantee to give the right answer, as the threshold is exactly the value of the reachability property. However, the point $q$ is a set of measure zero with respect to the set of all possible thresholds, which is a subset of the reals. Same reasoning can be applied to the until case.
\end{proof}

\noindent
The convergence stated in Theorem \ref{cor:correctness} means that, since $N$ is fixed for a given SRN, then, even if we have control over $h$ and $\Delta z$, the quality of the approximation depends on how well the CLA approximates the SRN. 
Error bounds would be a viable companion to estimate the committed error, and although these could be extimated for time and space discretization following the approaches in \cite{abate2010approximate,laurenti2017reachability}, we are not aware of any explicit formulation of them for the convergence of the CLA.
 However, experimental results in Section \ref{Exp} show that the error committed is generally limited also for moderately small $N$ and quite large $h$.

\subsection{Complexity}
 Complexity of the method depends on the following: (a) the equations we need to solve, (b) the time step size $h$, and (c) the space discretization step $\Delta z$. 
 Algorithm \ref{General} requires solving  Eqns \eqref{CovY} and \eqref{LNAVar}, that is, a set of differential equations quadratic in the number of species. In fact, solving these equations requires computing $J_F$, Jacobian of $F$. However, the number of equations we need to solve is independent of the number of molecules in the system. This guarantees the scalability of our approach. An important point is that Eqn \eqref{CovY} requires solving Eqn \eqref{covariance} once for each sampling point of the numerical solution of Eqn \eqref{CovY}.
A possible way to avoid this is to consider approximate solutions to Eqn \eqref{covariance}, which are accurate in the limit of $h\rightarrow 0$. However, to keep this  approximation under control, $h$ has to be chosen really small, slowing down the computation. Moreover, for any sample point, Eqn \eqref{covariance} is solved only for a small time interval (between $t$ and $t+h$). As a consequence, in practice, the computational cost introduced in solving Eqn \eqref{covariance} is under control.

A smaller value of $h$ implies that, for a given time interval, we have a greater number of discrete time steps, which can slow down the computation in some cases. The value of $\Delta z$ determines the number of states of the resulting DTMC. However, we stress that we discretize $\hat Z^N(t)$, a uni-dimensional distribution (or $m$-dimensional in the case we have $m>1$ linear inequalities). As a consequence, the number of reachable states with significant probability mass is generally limited and under control. Obviously, if the number of molecules is large and $\Delta z$ extremely small, then this is detrimental to performance.  

\section{Rewards}\label{SecRew}
We extend CSL properties with reward operators as in \cite{Kwiatkowska2007}. As for probabilistic reachability, we will define the reward structure on the normalised process $\hat X^N$. 
Formally, we define the \emph{state reward} function $\rho:\mathbb{R}^{|\Lambda|}\to \mathbb{R}$, which associates a real-valued reward with any point of  the normalised state space of $\hat X^N(t),t\in\mathbb{R}_{\geq 0}$. 
In this paper, we make a few  assumptions about the regularity of $\rho$:
\begin{itemize}
    \item $\rho$ is bounded, i.e. there exists a constant $C>0$ such that $\rho(\hat x)\leq C$ for each $\hat x\in\mathbb{R}^{|\Lambda|}$;
    \item $\rho$ is Lipschitz continuous on $\mathbb{R}^{|\Lambda|}$, i.e. there is a constant $L_\rho$ such that, for each $\hat x,\hat x'\in\mathbb{R}^{|\Lambda|}$, $|\rho(\hat x)-\rho(\hat x')|\leq L_\rho |\hat x- \hat x'|$. 
\end{itemize}
These requirements are important to show the convergence of rewards computed on $\hat X^N$ with the same measure but computed on the normalised CLA $\hat Y^N = \frac{Y^N}{N}$. Moreover, they do not limit the expressiveness of our framework: 
for a fixed $N$, we are interested only in the value of $\rho$ at a finite number of points. Such a function can always be extended to a Lipschitz continuous one. Boundedness is also not problematic, as we can always assume an upper bound on a physically meaningful population size, meaning that we can restrict ourselves to a compact set and define $\rho$ to be constant outside such a set. 

Given a reward structure  $\rho$, we consider the following three kinds of rewards. 
\begin{itemize} 
\item \textbf{Instantaneous Rewards} up to finite time $T$. $\rho_I(\hat x_0,\hat X^N,T)$ is the expectation of $\rho(\hat X^N(T))$, i.e., the expectation of the reward structure at time $T$ over all the trajectories of $\hat X^N$ that start from state $\hat x_0$. More precisely, for $\Omega_{\hat x_0}$, the set of paths of $\hat X^N$ starting from $\hat x_0$:
    \begin{align}
        \rho_I(\hat x_0,\hat X^N,T)=\sum_{x\in \mathbb{R}^{|\Lambda|}}\rho(\hat x)Prob(\omega(T)=\hat x|\omega \in \Omega_{\hat x_0}).  
    \end{align}
    \item \textbf{Cumulative Rewards} up to a finite time $T$. Given $\omega:\mathbb{R}_{\geq 0} \to \mathbb{N}^{|\Lambda|}$, a path of $\hat X^N$, the cumulative reward for $\omega$ is defined as:
    \begin{align*}
        \rho_C(\omega,T)=\int_{0}^T \rho(\omega(t))dt=\sum_{i=1}^{|jumps(\omega)|}&\rho_(\omega(t_{i-1}))(t_i-t_{i-1}) +\\ &\rho_(\omega(T))(T-t_{|jumps(\omega,t)|})
    \end{align*}
    where $jumps(\omega,t)\subset \mathbb{R}_{\geq 0}$ is the set of time instants at which $\omega$ changes state between $0$ and $T$. Then, we define: $$\rho_C(\hat x_0, \hat X^N,T)=\mathbb{E}[\rho_C(\omega,T)\,|\, \omega(0)=\Omega_{\hat x_0}],$$
   where the expectation is intended over the trajectories of $\hat X^N$ starting from state $\hat x_0$
    \item \textbf{Bounded Reachability Rewards.} 
Given a target set $A\in\mathbb{R}^{|\Lambda|}$, for the normalized process $\hat X^N$, define $\rho_{reach}(\hat x_0, X^N,A,T)$, the cumulative reward until we enter the target set $A$ within time $T$. Formally, we can define $\rho_{reach}(\hat x_0, X^N,A,T)$ as the cumulative reward until time  $T$ for the modified process $\bar{X}^N$ where all states in $A$ are made absorbing, and where we consider the modified state rewards:
    $$ \bar{\rho}(\hat x)=\begin{cases}
    0       & \quad \text{if } \hat x \in A\\
    \rho(\hat x)  & \quad \text{otherwise} \\
  \end{cases}  $$ 
\end{itemize}

\begin{remark}
Note that here $\rho$ is a state reward, that is, a function that associates a real value with any given state of the process. An alternative reward structure could be based on the \emph{transition reward} function \cite{bortolussi2015efficient}, which can be used for checking how many times a given reaction fires up to a certain time. However, in the context of SRNs, such a quantity can be easily expressed with an additional species counting how many times a subset of the transitions fire. Then, instantaneous rewards can be used to "read" its value. 
For instance, in Example \ref{RunningExample}, if we want to count the number of times a mRNA molecule is produced, we can consider an additional species $C$ and modify the SRN such that:
$$ \to^{0.5} mRNA + C .$$
Then, for $\hat x \in \mathbb{R}^{|\Lambda|}_{\geq 0}$ we have $\rho(x)=\hat x_C,$ where $\hat x_C$ is the component of vector $\hat x$ relative to species $C$, and $N \rho_I(\hat x_0,\hat X^N,T)$ will give the desired measure at time $T$ for $\hat x_0$, initial concentration of the species.\footnote{The multiplication of $\rho_I$ by $N$ is needed to convert from the normalized process back to the integer population count.}.
\end{remark}

\subsection{Extending CSL with Rewards}
In order to incorporate rewards in our framework, given a SRN $C=(\Lambda,R)$  and the induced CTMC $X^N,$ we extend CSL with the following formulae, whose semantics will depend on the particular reward structure $\rho$:
$$ R_{\sim r}[C_{\rho}^{[\leq T]}] \, | \, R_{\sim r} [I_{\rho}^{=T}]\, | \, R_{\sim r}[F_{\rho}^{\leq T} \eta]$$
where $\eta:\mathbb{R}^{|\Lambda|}\to \{\mbox{\emph{true}}, \mbox{\emph{false}} \}$ is a convex predicate over $\hat X^N$, $T\in \mathbb{R}_{\geq 0},$ $r\in \mathbb{R}_{\geq 0},$ and $\sim \in \{>,<\}$. 
For $\hat x \in \mathbb{R}^{|\Lambda|}_{\geq 0},$ the semantics of such formulae is as follows:
$$\hat X^N, \hat x \models R_{\sim r}[C_{\rho}^{[\leq T]}] \quad \text{iff} \quad \rho_C(\hat x, \hat X^N,T)\sim r $$
$$\hat X^N, \hat x \models  R_{\sim r} [I_{\rho}^{=T}] \quad \text{iff} \quad \rho_I(\hat x,\hat X^N,T)\sim r $$
$$\hat X^N, \hat x \models R_{\sim r}[F_{\rho}^{\leq T} \eta] \quad \text{iff} \quad \rho_{reach}(\hat x, X^N,A,T)\sim r, $$
where $A=\{\hat x' \in \mathbb{R}^{|\Lambda|}\, s.t.\, \eta(\hat x')\}$. 
\subsection{Computing Expectation and Variance Using Reward Operators}
Two of the most common statistics needed when studying stochastic processes are expectation and variance (or covariance) of some set of variables. Suppose we have a CTMC $X^N$ with values in $\mathbb{R}^{|\Lambda|}$, and we want to compute expectation and variance of one of its components $X_i^N$ at time $t$. Then, for $\hat x\in \mathbb{R}^{|\Lambda|}$, we define the following reward structures on the normalised process:
$$ \rho^{size}(\hat x)=
  \begin{cases}
    \hat x_i       & \quad \text{if $\hat x_i< K$} \\
    K  & \quad \text{if } \hat x_i\geq K
  \end{cases}\quad 
  \rho^{size^2}(\hat x)=
  \begin{cases}
\hat    x_i^2      & \quad \text{if } \hat x_i < K\\
    K^2  & \quad \text{if $\hat x_i\geq K$},
  \end{cases} $$
where $K\in \mathbb{R}$ can be any real number, typically an upper bound on the physically admissible population size. For instance, for biochemical processes, we can choose $K$ to be of the order of $10^{80}$, the estimated number of atoms of the universe.  Then, we have 
$$ \mathbb{E}[X_i^N(t)]=N R_{= ?} [I_{\rho_{size}}^{=t}] $$
$$ cov[X_i^N(t)]= N( R_{= ?} [I_{\rho_{size^2}}^{=t}]-(R_{= ?} [I_{\rho_{size}}^{=t}])^2). $$
 $T$ being finite and $K$ any non-negative real,  the above equality holds for any SRN  whose species count remains finite at least for a finite time interval. Note that, as rewards are defined for the normalised process, we need to rescale them back to population counts to compute variance and average of the non-normalised process. 
\subsection{Computing Rewards through Central Limit Approximation}
Computing reward properties over $\hat X^N$ is generally not possible because of the state space explosion problem. As a consequence, we compute such properties using $\hat Y^N,$ the CLA of $\hat X^N$. We show that the values computed on $\hat Y^N$ converge (weakly) to those computed on $\hat X^N$. 
We stress again how working in terms of the normalised processes is not a limitation. For instance, consider the reward  for expectation. If we are interested in the expectation of population counts for a fixed $ N$, we can either define the reward for $\hat x$ in the normalised space as $\rho(x) =  N \hat x$, for $N$ fixed, or rescale the computed reward as done in the previous section. 


\subsubsection{Instantaneous Rewards}
Given a reward structure $\rho$, instantaneous rewards can be computed on $\hat Y^N = \frac{Y^N}{N}$ as:
\begin{align*}
\rho^{CLA}_I(\hat x,\hat Y^N,t) \approx \mathbb{E}[\rho(\hat Y^N(t))&] = \int_{K}\rho(x)\mathcal{N}(x|\mathbb{E}[\hat Y^N(t)],cov[\hat Y^N(t)]])dx,
\end{align*}
where $\mathcal{N}(x|\mathbb{E}[\hat Y^N(t)],cov[\hat Y^N(t)]])$ is the normal distribution with mean and covariance matrix respectively, $\mathbb{E}[\hat Y^N(t)],cov[\hat Y^N(t)]]$ for $\hat Y^N(0)=\hat x$. Furthermore,  $K\subseteq \mathbb{R}^{|\Lambda|}$ is a  compact set in which we restrict integration for numerical purposes. The choice of $K$ is such that the error we commit is bounded by a chosen tolerance level.  The following proposition guarantees that $\rho^{CLA}_I(\hat x,\hat Y^N,t)$ converges to $\rho_I(\hat x,\hat X^N, T)$.
\begin{proposition}
\label{ProofInstantaeosReward}
Let $ T \in \mathbb{R}_{\geq 0}$, then it holds that:
$$ \lim_{N\to \infty} \rho_I(\hat x,\hat X^N,T)=\lim_{N\to \infty} \rho^{CLA}_I(\hat x,\hat Y^N,t)$$.
\end{proposition}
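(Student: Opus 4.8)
The plan is to show that \emph{both} limits in the statement equal $\rho(\Phi(T))$, the reward evaluated at the deterministic fluid limit $\Phi(T)$; equality of the two limits then follows immediately. (I read the $t$ on the right-hand side as $T$.) First, unfolding the definitions, $\rho_I(\hat x,\hat X^N,T)=\mathbb{E}\bigl[\rho(\hat X^N(T))\mid\hat X^N(0)=\hat x\bigr]$, whereas $\rho^{CLA}_I(\hat x,\hat Y^N,T)$ equals $\mathbb{E}\bigl[\rho(\hat Y^N(T))\mid\hat Y^N(0)=\hat x\bigr]$ up to the Gaussian mass lying outside the compact truncation set $K$. Since $\rho$ is bounded by $C$, this truncation gap is at most $C\cdot Prob(\hat Y^N(T)\notin K)$; because $\mathbb{E}[\hat Y^N(T)]=\Phi(T)$ and $cov(\hat Y^N(T))=\tfrac1N cov(G(T))\to 0$, for any fixed $K$ whose interior contains $\Phi(T)$ this probability tends to $0$ as $N\to\infty$. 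Hence it suffices to prove $\mathbb{E}[\rho(\hat X^N(T))]\to\rho(\Phi(T))$ and $\mathbb{E}[\rho(\hat Y^N(T))]\to\rho(\Phi(T))$.

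Second, I would show that $\hat X^N(T)$ and $\hat Y^N(T)$ both converge weakly to $\delta_{\Phi(T)}$. For $\hat Y^N$ this is immediate from Eqn \eqref{eq:hypothsis}: $\hat Y^N(T)=\Phi(T)+N^{-1/2}G(T)$, where $G(T)$ has an $N$-independent law, so $N^{-1/2}G(T)\to 0$ almost surely and $\hat Y^N(T)\Rightarrow\Phi(T)$. For $\hat X^N$, Theorem \ref{th:LNA} gives $N^{1/2}\bigl(\hat X^N(T)-\Phi(T)\bigr)\Rightarrow G(T)$; a weakly convergent sequence in $\mathbb{R}^{|\Lambda|}$ is tight, so $\hat X^N(T)-\Phi(T)$ is $N^{-1/2}$ times a tight sequence and therefore converges to $0$ in probability, which again yields $\hat X^N(T)\Rightarrow\Phi(T)$. (The plain fluid-limit convergence of Eqn \eqref{eq:ODE} to $\hat X^N$ would in fact already suffice for this step.)

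Third, since $\rho$ is bounded and continuous (being Lipschitz), it is an admissible test function for weak convergence, so by the definition of weak convergence (Portmanteau theorem) applied to the two limits above,
\[
\lim_{N\to\infty}\mathbb{E}\bigl[\rho(\hat X^N(T))\bigr]=\rho(\Phi(T))=\lim_{N\to\infty}\mathbb{E}\bigl[\rho(\hat Y^N(T))\bigr].
\]
Combined with the first step (vanishing truncation error), this gives $\lim_{N\to\infty}\rho_I(\hat x,\hat X^N,T)=\rho(\Phi(T))=\lim_{N\to\infty}\rho^{CLA}_I(\hat x,\hat Y^N,T)$, which is the claim.

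This argument is short and no step is a serious obstacle; the only places requiring care are the joint handling of the two distinct sources of $N$-dependence (in the law of $\hat X^N$ and in the law of $\hat Y^N$), dealt with here by collapsing both laws onto the same deterministic point $\Phi(T)$, and the control of the compact truncation $K$ in the numerical definition of $\rho^{CLA}_I$. I note that only boundedness and continuity of $\rho$ are actually used; the Lipschitz hypothesis would become relevant only if one sought an explicit rate of convergence, which would require quantitative CLT-type bounds and is not attempted here.
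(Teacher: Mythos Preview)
Your proof is correct and follows essentially the same route as the paper: both arguments insert the deterministic fluid limit $\Phi(T)$ as an intermediate point, establish $\hat X^N(T)\Rightarrow\Phi(T)$ and $\hat Y^N(T)\Rightarrow\Phi(T)$, and then apply the Portmanteau theorem using that $\rho$ is bounded and continuous. Your version is slightly more careful in that you explicitly address the compact truncation set $K$ appearing in the numerical definition of $\rho^{CLA}_I$ (which the paper's proof glosses over), and you derive $\hat X^N(T)\Rightarrow\Phi(T)$ from Theorem~\ref{th:LNA} via tightness rather than citing the fluid approximation theorem directly, but these are minor variations on the same argument.
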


\begin{proof}

We want to prove that, for fixed $T>0$, $\mathbb{E}[\rho(\hat X^N)]$ converges to $\mathbb{E}[\rho(\hat Y^N(T))]$ as $N$ tends to infinity.  Using the triangular inequality, it holds that:
\begin{eqnarray*}
|\mathbb{E}[\rho(\hat X^N(T))] &- &\mathbb{E}[\rho(\hat Y^N(T))]| \leq\\  |\mathbb{E}[\rho(\hat X^N))]-\mathbb{E}[\rho(\Phi(T))]| 
&+& | \mathbb{E}[\rho(\Phi(T))]-\mathbb{E}[\rho(\hat Y^N(T))]|,
\end{eqnarray*}
where $\rho(\Phi(T))$ is the reward $\rho$ evaluated on the fluid limit $\Phi(T)$.
Invoking the fluid approximation theorem \cite{bortolussi2013continuous}, it holds that $\hat X^N(T)\Rightarrow_{N\to \infty} \Phi(T)$ (as convergence in probability implies weak convergence). Furthermore,  $\hat Y^N(T)=\frac{G(T)}{\sqrt{N}} + \Phi(T) \Rightarrow_{N \to \infty} \Phi(T)$, as $G$ is independent of $N$ and it has a bounded covariance matrix for each $T$ (which implies convergence in probability).  Therefore, recalling that $\rho$ is bounded and continuous by assumption, both terms on the right hand side of the triangular inequality converge to zero by virtue of the Portmanteau theorem \cite{billingsley2013convergence} stating that, for any continuous and bounded functional $f$ on $\mathcal{D}$, the space of $\mathbb{R}^{|\Lambda|}-$valued Cadlag functions (i.e. right continuous functions with left limit) \cite{billingsley2013convergence},  it holds that $\mathbb{E}[f(X^N)] \rightarrow_{N \to \infty} \mathbb{E}[f(X)]$ whenever $X^N\Rightarrow X$. 
Thus, we can conclude:

$$  \rho_I(\hat x,\hat X^N,T) \to_{N\to \infty} \rho_I^{CLA}(\hat x, \hat Y^N,T).$$
\end{proof}

\begin{example}
\label{Example-InstanRew}
We consider the SRN introduced in Example \ref{RunningExample}. We are interested in knowing the expectation and variance of $mRNA-P$ over time. This can be computed using the following reward structures: $$\rho^{mRNA-P}(\hat x)=\begin{cases}
     {\hat x(mRNA)-\hat x(P)}       & \quad \text{if $\hat x(mRNA)-\hat x(P)< 10^{80}$} \\
    10^{80}  & \quad \text{otherwise} 
  \end{cases}$$ $$\rho^{(mRNA-P)^2}(\hat x)=\begin{cases}
    {(\hat x(mRNA)-\hat x(P))^2}      & \quad \text{if $(\hat x(mRNA)- \hat x(P))^2< 10^{80}$} \\
    10^{80}  & \quad \text{otherwise}
  \end{cases}.$$ Then, we have:  $$\mathbb{E}[X_{mRNA}^N(t)-X_P^N(t)]=N R_{= ?}[I^{=t}_{\rho^{mRNA-P}}], \quad   t \in [0,100],$$
  $$Cov{(X_{mRNA}^N(t)-X_P^N(t))}= N( R_{= ?} [I_{\rho^{(mRNA-P)^2}}^{=t}]-(R_{= ?} [I_{\rho_size}^{=t}])^2), \quad t \in [0,100], $$
  where the rewards are computed on the normalised process $\hat X^N.$
The resulting expectation and variance is plotted in Figure \ref{FigureInstantaneousReward}.
\begin{figure}
	\centering
     \includegraphics[width=1\linewidth]{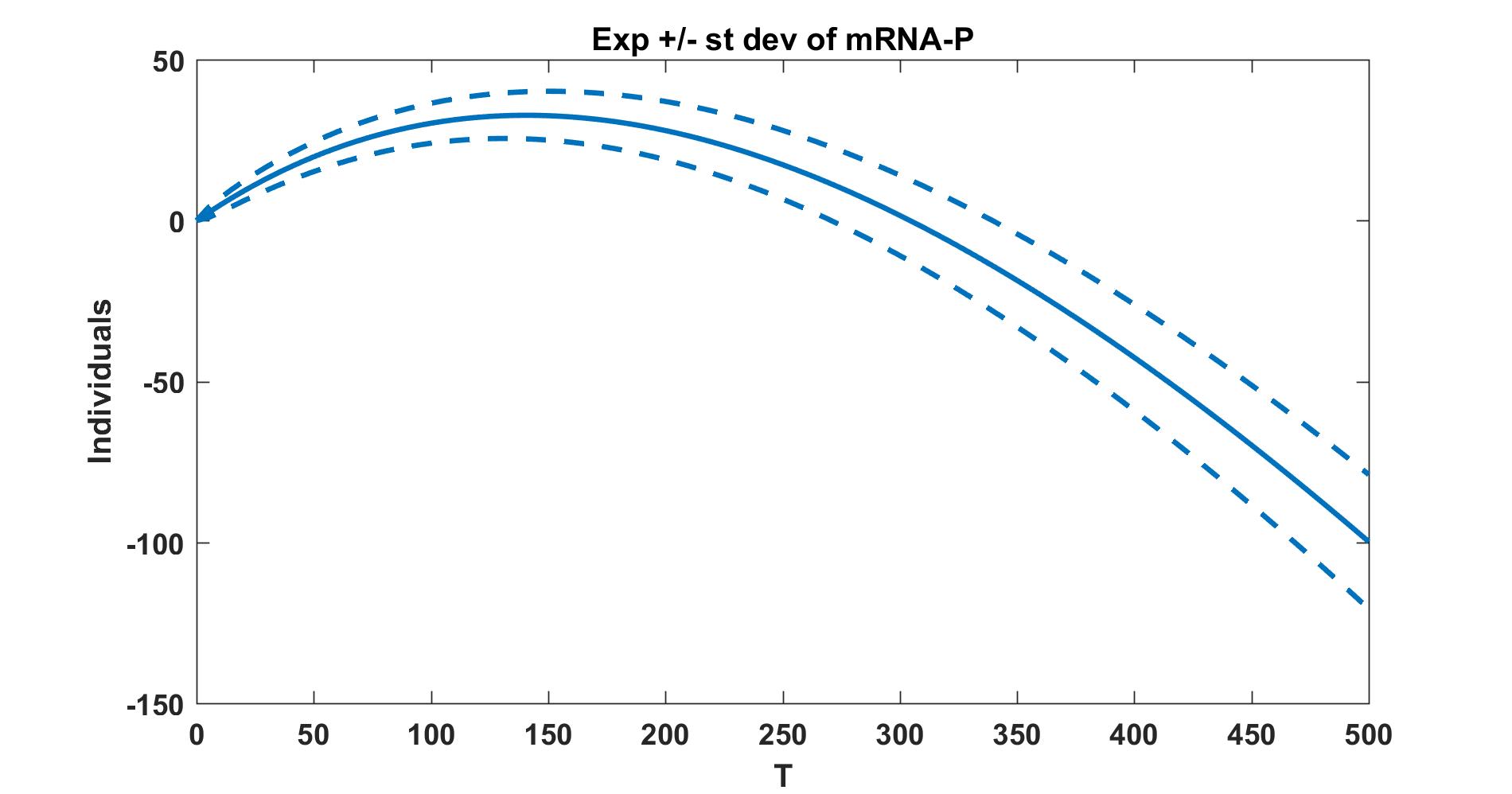}
\caption{$N \rho_{I}(\hat X^N,T),T \in [0,100]$ for reward structure  $\rho^{mRNA-P}$. }
\label{FigureInstantaneousReward}
\end{figure}%
Note that, in this case, the resulting variance and expectation, as estimated by the CLA, are guaranteed to be exact for any $N$. This is because the SRN is linear \cite{grima2015linear}. It is easy to observe that our algorithms are exponentially faster than the computation of the same measures  on the original CTMC, because of the continuous nature of the CLA.
\end{example}

\subsubsection{Cumulative Rewards}
Cumulative rewards can also be computed exploiting $\hat Y^N,$ the CLA approximation of $\hat X^N$, as shown in the following proposition
\begin{proposition}
Let $T\in\mathbb{R}_{\geq 0}.$ Then, $\rho^{CLA}_C(\hat x,\hat Y^N,T)$, the cumulative reward for $\hat Y^N$ starting from $\hat Y^N=\hat x$, can be computed as follows 
\begin{align*}
\rho^{CLA}_C(\hat x,\hat Y^N,T)=\int_{0}^T\rho_I^{CLA}(\hat x,\hat Y^N,s) ds,
\end{align*}
\end{proposition}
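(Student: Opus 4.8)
The plan is to establish the identity $\rho^{CLA}_C(\hat x,\hat Y^N,T)=\int_{0}^{T}\rho_I^{CLA}(\hat x,\hat Y^N,s)\,ds$ by unwinding the definition of the cumulative reward on the Gaussian process $\hat Y^N$ and exchanging the order of a double integration via Fubini's theorem. Concretely, $\rho^{CLA}_C(\hat x,\hat Y^N,T)$ is defined as $\mathbb{E}\bigl[\int_{0}^{T}\rho(\hat Y^N(t))\,dt\bigr]$, the expectation over trajectories of $\hat Y^N$ starting at $\hat x$ of the path functional $\rho_C(\omega,T)=\int_{0}^{T}\rho(\omega(t))\,dt$. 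First I would write this expectation explicitly as an integral against the law of the process and observe that the integrand $(t,\omega)\mapsto \rho(\omega(t))$ is jointly measurable and, because $\rho$ is bounded by the constant $C$ (one of the standing regularity assumptions on reward structures), the double integral $\int\!\!\int_{[0,T]\times\Omega}|\rho(\omega(t))|\,dt\,d\mathrm{Prob}(\omega)\le C\,T<\infty$ is finite.

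Given that finiteness bound, Fubini's theorem applies and lets me swap expectation and time-integration:
\begin{align*}
\rho^{CLA}_C(\hat x,\hat Y^N,T)
&=\mathbb{E}\!\left[\int_{0}^{T}\rho(\hat Y^N(t))\,dt\,\Big|\,\hat Y^N(0)=\hat x\right]\\
&=\int_{0}^{T}\mathbb{E}\bigl[\rho(\hat Y^N(t))\,\big|\,\hat Y^N(0)=\hat x\bigr]\,dt.
\end{align*}
The inner expectation $\mathbb{E}\bigl[\rho(\hat Y^N(t))\,\big|\,\hat Y^N(0)=\hat x\bigr]$ is, by the very definition given earlier in the Instantaneous Rewards subsection, exactly $\rho_I^{CLA}(\hat x,\hat Y^N,t)=\int_{K}\rho(x)\,\mathcal{N}(x\,|\,\mathbb{E}[\hat Y^N(t)],cov[\hat Y^N(t)])\,dx$ (up to the benign truncation to the compact set $K$, which only contributes an error below the chosen tolerance and can be absorbed into the $\approx$ already present in that definition). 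Substituting this back gives the claimed formula. A small remark worth including is that this also shows $\rho^{CLA}_C$ is continuous, indeed differentiable, in $T$, with derivative $\rho_I^{CLA}(\hat x,\hat Y^N,T)$, which is convenient for the numerical scheme.

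The main obstacle, such as it is, is purely a matter of measurability and integrability bookkeeping: one must check that $t\mapsto\rho(\hat Y^N(t))$ has sufficiently regular sample paths for the path integral $\rho_C(\omega,T)$ to be well-defined pathwise and for joint measurability to hold. This follows because $\hat Y^N$ has continuous sample paths (it is a Gaussian diffusion, the solution of a linear Fokker--Planck/stochastic differential equation, as noted in Section~\ref{lna-sec}) and $\rho$ is Lipschitz continuous, so $t\mapsto\rho(\hat Y^N(t))$ is continuous, hence Riemann-integrable on $[0,T]$ and jointly measurable in $(t,\omega)$. Combined with the uniform bound $|\rho|\le C$, all hypotheses of Fubini's theorem are met, so the interchange is justified and the proof is complete. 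One could alternatively phrase the whole argument via Tonelli's theorem applied to $|\rho|$ first to get the finiteness, then Fubini for $\rho$ itself; either route is routine.
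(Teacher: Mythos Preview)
Your proof is correct and follows essentially the same route as the paper: unwind the definition of the cumulative reward as $\mathbb{E}[\int_0^T\rho(\hat Y^N(t))\,dt]$, use the boundedness of $\rho$ to verify the integrability hypothesis of Fubini's theorem, and swap expectation with time-integration to recover $\int_0^T\rho_I^{CLA}(\hat x,\hat Y^N,s)\,ds$. Your additional remarks on joint measurability and sample-path continuity are sound and simply make explicit what the paper leaves implicit.
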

\begin{proof}
Let $\omega:\mathbb{R}_{\geq 0}\to \mathbb{R}^{|\Lambda|}$ be a path of $\hat Y^N$. Then, we have that
$$\rho^{CLA}_C(\hat x,\hat Y^N,T)=\mathbb{E}[\rho_C(\omega,T)|\omega(0)=\hat x]=\mathbb{E}[\int_{0}^T \rho(\omega(t))dt\,|\, \omega(0)=\hat x].$$
Now, in order to conclude, we can apply Fubini's theorem \cite{schilling2017measures}, which allows us to compute a double integral using iterated integrals. Thus, switching the order of integration. Being both a probability measure and the Lebesque measure over the reals $\sigma-$finite measures, a sufficient condition for application of Fubini's theorem is that
$\mathbb{E}[\int_{0}^T |\rho(\omega(t))|dt]$ is finite. Owing to boundedness of $\rho$, there is an $M>0$ such that, 
for all $x \in \mathbb{R}^{|\Lambda|}$, we have that $|\rho(x)|\leq M$. Thus, we have,
$$\mathbb{E}\left[\int_{0}^T |\rho(\omega(t))|dt\right]\leq \mathbb{E}\left[\int_0^T M dt\right]= MT,$$
which is finite as $T$ and $M$ are both finite.
\end{proof}

The following proposition, for $\hat x\in \mathbb{R}^{|\Lambda|}$, guarantees that $\rho^{CLA}_C(\hat x,\hat Y^N,T)$ converges to $\rho_C(\hat x, \hat X^N,T)$
\begin{proposition}
\label{PropositionCumulativeRewards}
Let $ T \in \mathbb{R}_{\geq 0}$, then it holds that
$$ \lim_{N\to \infty} \rho_C(\hat x, \hat X^N,T)=\lim_{N\to \infty} \rho^{CLA}_C(\hat x,\hat Y^N,T)$$
\end{proposition}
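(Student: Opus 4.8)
The plan is to reduce the statement to the instantaneous–reward convergence already proved in Proposition~\ref{ProofInstantaeosReward}, by writing both cumulative rewards as time integrals of instantaneous rewards and then exchanging the limit in $N$ with the integral over $[0,T]$.

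First I would observe that, exactly as in the proof of the preceding proposition, Fubini's theorem applies to $\hat X^N$ as well: $\rho$ is bounded by some $M>0$ and $T<\infty$, so $\mathbb{E}[\int_0^T|\rho(\omega(t))|\,dt]\le MT<\infty$, and both the path measure and the Lebesgue measure on $[0,T]$ are $\sigma$-finite. Hence
$$\rho_C(\hat x,\hat X^N,T)=\mathbb{E}\Big[\int_0^T \rho(\omega(t))\,dt \,\Big|\, \omega(0)=\hat x\Big]=\int_0^T \mathbb{E}[\rho(\hat X^N(s))]\,ds=\int_0^T \rho_I(\hat x,\hat X^N,s)\,ds.$$
Combining this with the identity $\rho_C^{CLA}(\hat x,\hat Y^N,T)=\int_0^T \rho_I^{CLA}(\hat x,\hat Y^N,s)\,ds$ from the previous proposition, the quantity to control becomes
$$\big|\rho_C(\hat x,\hat X^N,T)-\rho_C^{CLA}(\hat x,\hat Y^N,T)\big|\le \int_0^T \big|\,\mathbb{E}[\rho(\hat X^N(s))]-\mathbb{E}[\rho(\hat Y^N(s))]\,\big|\,ds.$$

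Next I would establish pointwise convergence of the integrand: for every fixed $s\in[0,T]$, the argument of Proposition~\ref{ProofInstantaeosReward}, run with $s$ in place of $T$, gives $\big|\mathbb{E}[\rho(\hat X^N(s))]-\mathbb{E}[\rho(\hat Y^N(s))]\big|\to 0$ as $N\to\infty$; this uses only that $\hat X^N(s)\Rightarrow_{N\to\infty}\Phi(s)$ by the fluid limit theorem, that $\hat Y^N(s)=\Phi(s)+G(s)/\sqrt{N}\Rightarrow_{N\to\infty}\Phi(s)$ since $\mathrm{cov}(G(s))$ is finite, and that $\rho$ is bounded and continuous (Portmanteau). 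To pass the limit through the integral I would invoke the dominated convergence theorem on $[0,T]$: the integrand is bounded uniformly in $s$ and $N$ by the constant $2M$, which is integrable over the finite interval $[0,T]$. Hence the right-hand side above tends to $0$, and since, e.g., $\rho_C^{CLA}(\hat x,\hat Y^N,T)\to\int_0^T\rho(\Phi(s))\,ds$ by the same reasoning, both limits in the statement exist and coincide.

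The only mildly delicate point, and the step I would be most careful about, is the interchange of $\lim_{N\to\infty}$ and $\int_0^T$: the convergence in Proposition~\ref{ProofInstantaeosReward} is not claimed to be uniform in $s$, so one genuinely needs a domination argument rather than uniform convergence, and it is the boundedness of $\rho$ together with the finiteness of $T$ that supplies the integrable majorant. Everything else — the Fubini step for $\hat X^N$, the Portmanteau application, and the finiteness of $\mathrm{cov}(G(s))$ on the compact interval $[0,T]$ — is routine and already appears in the surrounding proofs.
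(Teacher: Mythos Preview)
Your proof is correct, but it takes a genuinely different route from the paper's. The paper works at the path-space level: it defines the functional $\mathcal{R}_C(T,\omega)=\int_0^T\rho(\omega(t))\,dt$ on the Skorokhod space $\mathcal{D}$, observes that it is bounded (since $\rho$ is bounded and $T$ finite) and continuous (since $\rho$ is continuous), and then applies the Portmanteau theorem directly to this path-functional, using the triangular inequality through the deterministic fluid limit $\Phi$ exactly as in Proposition~\ref{ProofInstantaeosReward}. This requires weak convergence of $\hat X^N$ and $\hat Y^N$ to $\Phi$ as random elements of $\mathcal{D}$, which is available from the fluid approximation theorem.

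By contrast, you first push the expectation inside the time integral via Fubini on both sides, reducing the problem to the already-proved convergence of instantaneous rewards at each fixed time $s$, and then use dominated convergence (with the constant majorant $2M$ on $[0,T]$) to exchange $\lim_{N\to\infty}$ with $\int_0^T$. This is more elementary in that it only needs the marginal weak convergence $\hat X^N(s)\Rightarrow\Phi(s)$ and $\hat Y^N(s)\Rightarrow\Phi(s)$ for each fixed $s$, rather than process-level convergence in the Skorokhod topology, and it avoids having to argue continuity of $\mathcal{R}_C$ on $\mathcal{D}$. The paper's approach is shorter and conceptually cleaner once one accepts the path-space machinery; yours is more self-contained and makes the dependence on Proposition~\ref{ProofInstantaeosReward} explicit.
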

\begin{proof}
For a path $\omega:\mathbb{R}_{\geq 0}\to \mathbb{R}^{|\Lambda|}$, define the following functional $\mathcal{R}_C (T,\omega)=\int_0^T \rho(\omega(t))dt,$ which is defined on $\mathcal{D},$ the space of $\mathbb{R}^{|\Lambda|}-$valued Cadlag functions (i.e. right continuous functions with left limit) \cite{billingsley2013convergence}. 
$\rho_C(\hat x, \hat X^N,T)=\mathbb{E}[\mathcal{R}_C (T,\omega)]$, where the expectation is taken over $\Omega_{\hat x}$, the paths of $\hat X^N$ starting from $\hat x$. 
As $T$ and $\rho$ are bounded, for each  $\omega$, $\mathcal{R}_C (T,\omega)$ is bounded. It is also continuous, due to the continuity of $\rho$.  Thus, we can apply same reasoning as in the proof of Proposition \ref{ProofInstantaeosReward},  applying Portmanteau theorem to conclude.
\end{proof}

\begin{example}
We consider the SRN introduced in Example \ref{RunningExample}. We are interested in knowing the expected cumulative reward of $mRNA-P$ to understand if during the time interval there have been, on average, more $mRNA$ or more $P$ molecules in the system. This can be computed using the reward structure $\rho^{mRNA-P}$ introduced in Example \ref{Example-InstanRew}, and the following cumulative reward: $$N R_{=?}[C^{\leq T}_{\rho^{mRNA-P}}],\,T \in [0,500],$$
where $R_{=?}[C^{\leq T}_{\rho^{mRNA-P}}]$ is intended to be computed on $\hat X^N.$ 
The resulting expectation and variance are plotted in Figure \ref{FigureCumulativeReward}.
\begin{figure}
	\centering
     \includegraphics[width=1\linewidth]{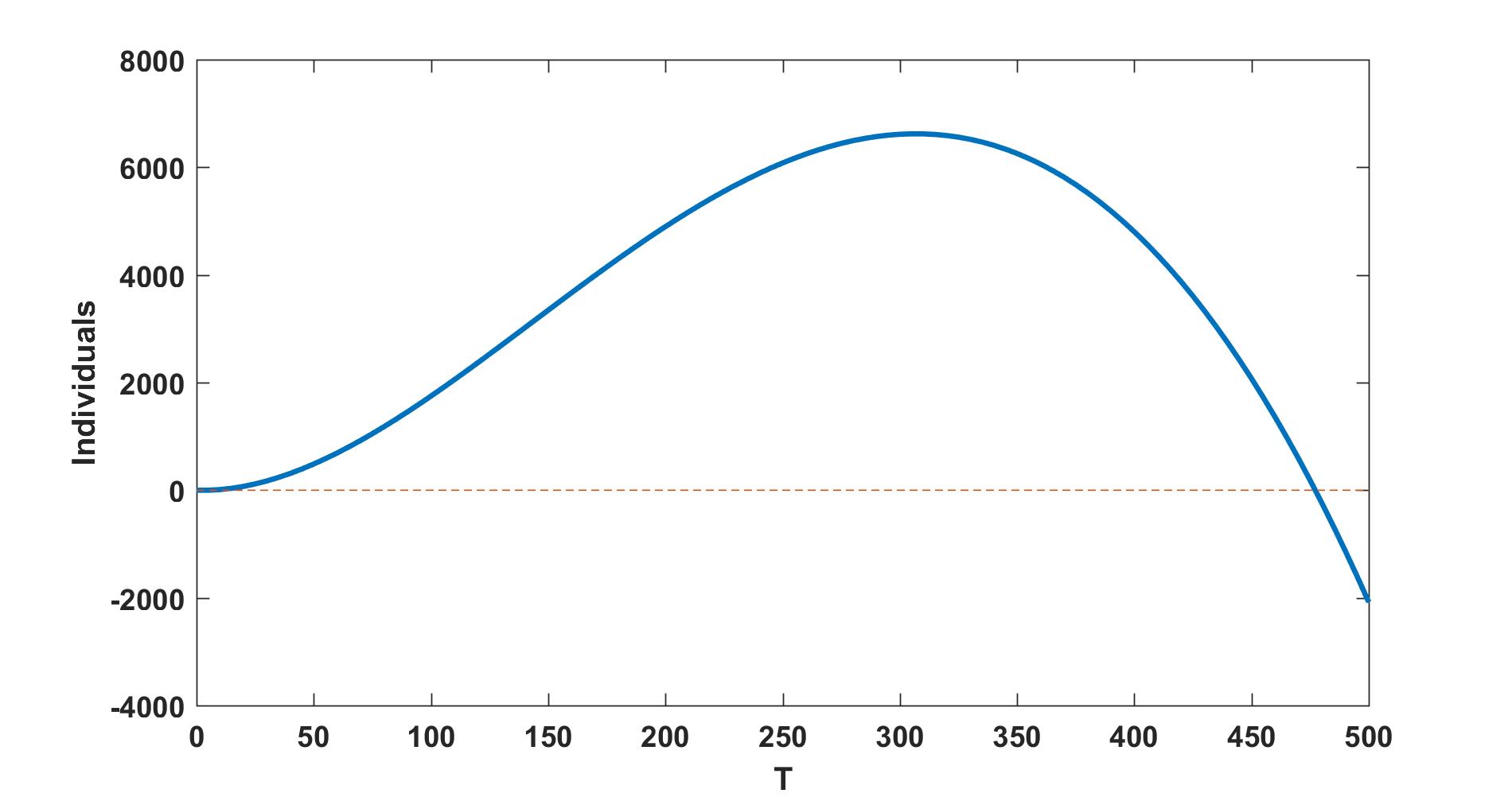}
\caption{$N\rho_{C}(\hat X^N,T),T \in [0,500],$ for reward structure  $\rho^{mRNA-P}$. }
\label{FigureCumulativeReward}
\end{figure}%
We stress again how in this case, since the SRN is linear, the measure estimated by the CLA is exact for any $N.$
\end{example}

\subsubsection{Bounded Reachability Rewards}
Bounded reachability rewards can be computed efficiently on the CLA under a further assumption on the reward structure $\rho$. Specifically, for $\hat x \in \mathbb{R}^{|\Lambda|}_{\geq 0}$, consider the predicate $\eta(\hat x)=B \hat x<b,$  $b \in (\mathbb{R}\cup\{\infty\})^m$, $m>0$. We assume that the reward structure is defined on the projection of $\hat X^N$ spanned by the colums of matrix defining the $\eta$ predicate, namely $\rho: \mathbb{R}^m \to \mathbb{R}$ assigns a reward to each state of $B \hat X^N$.
Consider the CSL reward property $R_{\sim r}[F_{\rho}^{\leq T} \eta]$. That is, $R_{\sim r}[F_{\rho}^{\leq T} \eta]$ is the bounded reachability reward until reaching a state in $A=\{\hat x\in \mathbb{R}^{|\Lambda|} \,s.t.\, \forall i \in \{1,..,m \}, (B \hat x)_i\geq b_i \}$ during the time interval $[0,T]$. Such a reward can be computed by exploring the approximation of the CLA in terms of the DTMC $\hat Z^{\Delta z,h,N},$ which is obtained by time and space discretization of the process $\hat Z^N=B \hat X^N$.
We call $\rho_{reach}(\hat x_0, \hat Z^{\Delta z,h,N},\lfloor \frac{T}{h}  \rfloor,A)$ the bounded reachability reward computed on $\hat Z^{\Delta z,h,N}$ for a number of discrete steps $\lfloor \frac{T}{h}  \rfloor,$ where $h>0$. Then 
$\rho_{reach}(\hat x_0, \hat Z^{\Delta z,h,N},\lfloor \frac{T}{h}  \rfloor,A)$ can be computed by considering the modified process $\hat Z^{\Delta z,h,N}$ where the target states are made absorbing, and modifying the reward structure $\rho$ to $\bar{\rho}$ such that $\bar{\rho}(\hat x)=0$ for all absorbing states. Then, for $\hat x_0 \in \mathbb{R}^{|\Lambda|}$ and $z_{d,0},$ the state in the state space if $\hat Z^{\Delta z,h,N}$ corresponding to the region containing $\hat x_0,$ cumulative rewards can be computed with the following algorithm for $n>0$:
\begin{align}\nonumber
\rho_{reach}(&\hat x_0, \hat Z^{\Delta z,h,N},n,A)=\\
&\sum_{ z \in S^{\Delta z}}\bar{\rho}(z)Prob(\hat Z^{\Delta z,h,N}(n-1)=z| \hat Z^{\Delta z,h,N}(0)=z_{d,0})h\nonumber \\
&+ \rho_{reach}(\hat Z^{\Delta z,h,N},\hat x_0,n-1,A).\label{EqnBoundedReach}
\end{align}
and such that $\rho_{reach}(\hat Z^{\Delta z,h,N},\hat x_0,0,A)=0$ for $\hat x_0 \not \in A$.
The proof of the following proposition can be found in the Appendix. 
\begin{proposition}
\label{PropositionBoundedRewa} 
For $T\in \mathbb{R}_{\geq 0}$ and $B\in\mathbb{R}^{|\Lambda|\times k}$ let $A$ be the set defined as $A=\{x\in \mathbb{R}^{|\Lambda|} \,s.t.\, \forall i \in \{1,..,k \}, (Bx)_i\geq b_i \}$. Then, for $\hat x_0 \in \mathbb{R}^{|\Lambda|}$ and $z_{d,0}$, the state in the state space of $\hat Z^{\Delta z,h,N}$ corresponding to the region containing $\hat x_0,$ it holds that:
$$ \lim_{N\to \infty}\lim_{h\to 0}\lim_{\Delta z \to 0}|\rho_{reach}(\hat x_0,\hat X^N,T,A)- \rho_{reach}(z_{d,0}, \hat Z^{\Delta z,h,N},\lfloor \frac{T}{h}  \rfloor,A)|=0.$$
\end{proposition}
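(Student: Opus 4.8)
The plan is to peel off the three nested limits one at a time, in the stated order, exploiting that --- since $\rho$ is bounded and Lipschitz on the projected space --- the bounded reachability reward is a bounded, \emph{almost} continuous functional of the relevant process, so that distributional convergence transfers to convergence of rewards. Two reformulations prepare the ground. Unrolling the recursion \eqref{EqnBoundedReach} gives the Riemann sum $\rho_{reach}(z_{d,0},\hat Z^{\Delta z,h,N},\lfloor T/h\rfloor,A)=h\sum_{j=0}^{\lfloor T/h\rfloor-1}\mathbb{E}[\bar\rho(\hat Z^{\Delta z,h,N}(j))]$. And, writing $\tau_A$ for the first hitting time of $A$ by the projected Gaussian process $\hat Z^N=B\hat Y^N$ and using that $\bar\rho\equiv 0$ on the absorbing set $A$, Fubini's theorem (justified as in the proof of Proposition \ref{PropositionCumulativeRewards}, from boundedness of $\bar\rho$) gives the continuous-time CLA reward $\rho_{reach}^{CLA}(\hat x_0,\hat Y^N,T,A)=\int_0^T\mathbb{E}[\rho(\hat Z^N(t))\,\mathbf{1}\{\tau_A>t\}]\,dt$; likewise $\rho_{reach}(\hat x_0,\hat X^N,T,A)$ is the corresponding integral against the projected CTMC $B\hat X^N$.

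For the innermost limit ($\Delta z\to 0$, with $h$ and $N$ fixed), it suffices, as the sum has finitely many terms, to show $\mathbb{E}[\bar\rho(\hat Z^{\Delta z,h,N}(j))]\to\mathbb{E}[\rho(\hat Z^N(jh))\mathbf{1}\{\tau_A^h>jh\}]$ for each fixed $j\le\lfloor T/h\rfloor$, where $\tau_A^h$ is the first sampled time at which $\hat Z^N$ enters $A$. By induction on $j$, the estimates behind Theorem \ref{Space Discretization} (from \cite{abate2010approximate}) give weak convergence of the $j$-step law of the absorbing discretized chain to that of the time-discretized absorbing DTMP; the discontinuity set of $\bar\rho$ lies in $\partial A$, a finite union of hyperplane pieces, hence Lebesgue-null and null for the non-degenerate Gaussian $j$-step law, so the Portmanteau theorem applies; and the error from evaluating $\bar\rho$ at cell representatives instead of integrating over cells is at most $L_\rho\,\Delta z$. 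Both errors vanish.

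For the middle limit ($h\to 0$, with $N$ fixed) --- which I expect to be the main obstacle --- I would split $h\sum_{j}\mathbb{E}[\rho(\hat Z^N(jh))\mathbf{1}\{\tau_A^h>jh\}]-\int_0^T\mathbb{E}[\rho(\hat Z^N(t))\mathbf{1}\{\tau_A>t\}]\,dt$ into two parts. First, since $\tau_A^h\ge\tau_A$ always, $|\mathbb{E}[\rho(\hat Z^N(jh))(\mathbf{1}\{\tau_A^h>jh\}-\mathbf{1}\{\tau_A>jh\})]|\le\sup|\rho|\cdot\mathrm{Prob}(\tau_A\le jh<\tau_A^h)$, the probability of crossing into $A$ strictly between two consecutive sample points while lying outside $A$ at both endpoints; this tends to $0$ uniformly in $h$, which is exactly Theorem \ref{Time Discretization} specialized to $t_1=0$, so summed over $j$ this part contributes at most $\sup|\rho|\cdot T$ times a vanishing bound. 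Second, the remaining Riemann sum of $g(t):=\mathbb{E}[\rho(\hat Z^N(t))\mathbf{1}\{\tau_A>t\}]$ converges to $\int_0^T g$: $g$ is bounded by $\sup|\rho|$ and, since the mean and covariance of $\hat Z^N$ vary continuously in $t$ (they solve \eqref{LNAEx}, \eqref{LNAVar} and \eqref{CovY}) and $\hat Z^N$ has a.s.\ continuous paths, $g$ is continuous at every $t$ with $\mathrm{Prob}(\tau_A=t)=0$, hence at all but countably many $t$, which makes the bounded $g$ Riemann integrable. The delicate point is precisely controlling the hitting-time discrepancy \emph{inside} the reward integral --- subtler than the plain reachability estimate, and unlike the instantaneous and cumulative reward cases there is no way to sidestep absorption.

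For the outer limit ($N\to\infty$), I would bound $|\rho_{reach}(\hat x_0,\hat X^N,T,A)-\rho_{reach}^{CLA}(\hat x_0,\hat Y^N,T,A)|$ by inserting the fluid limit, exactly as in the proofs of Propositions \ref{ProofInstantaeosReward} and \ref{PropositionCumulativeRewards}: it suffices to show that both terms converge to the same deterministic value $\int_0^{\tau_A(B\Phi)\wedge T}\rho(B\Phi(t))\,dt$. The fluid approximation theorem \cite{bortolussi2013continuous,ethier2009markov} gives $\hat X^N\Rightarrow\Phi$, and $\hat Y^N=\Phi+G/\sqrt N$ with $G$ of bounded covariance gives $\hat Y^N\Rightarrow\Phi$; applying the continuous linear map $B$, both $B\hat X^N$ and $B\hat Y^N$ converge weakly to $B\Phi$. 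The functional $\omega\mapsto\int_0^{\tau_A(\omega)\wedge T}\rho(\omega(t))\,dt$ on right-continuous paths with left limits is bounded by $\sup|\rho|\cdot T$ and continuous at $B\Phi$ whenever $B\Phi$ either stays outside the closed set $A$ throughout $[0,T]$ or enters its interior before time $T$; the only excluded case, a degenerate tangency of $B\Phi$ with $\partial A$, is non-generic and is handled exactly as the measure-zero threshold issue in Theorem \ref{cor:correctness}. Under this condition the Portmanteau theorem \cite{billingsley2013convergence} yields the common limit, which completes the argument.
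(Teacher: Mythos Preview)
Your three-way decomposition and the arguments for the inner ($\Delta z\to 0$) and middle ($h\to 0$) limits are sound and match the paper's in spirit, though you work with integrated indicators $\int_0^T\mathbb{E}[\rho(\cdot)\mathbf{1}\{\tau_A>t\}]\,dt$ whereas the paper writes the same quantity as $\mathbb{E}[\mathcal{R}(\omega,\T)]$ with $\T$ the hitting time and $\mathcal{R}$ the cumulative reward functional. The paper then treats the hitting time as a random variable in its own right, proves its weak convergence (from Theorem~\ref{COnvergenceReach}), and separates the two sources of change (path law and stopping time) by a further triangular inequality. Your indicator splitting for the middle limit is a legitimate alternative and arguably more direct.

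The outer limit ($N\to\infty$) is where your argument has a genuine gap. You route both $\rho_{reach}(\hat x_0,\hat X^N,T,A)$ and $\rho_{reach}^{CLA}(\hat x_0,\hat Y^N,T,A)$ through the \emph{deterministic} value $\int_0^{\tau_A(B\Phi)\wedge T}\rho(B\Phi(t))\,dt$, invoking Portmanteau for the path functional at the point mass $B\Phi$. This is correct when $B\Phi$ either avoids $A$ on $[0,T]$ or enters its interior, but in the tangency case the limit is \emph{not} deterministic: the relevant limiting hitting time is that of the Gaussian fluctuation process $G$ into the rescaled region (call it $\T^G$), which is a non-degenerate random variable precisely when $B\Phi$ is tangent to $\partial A$. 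Consequently the correct limiting reward is $\mathbb{E}_{\T^G}[\int_0^{\T^G\wedge T}\rho(B\Phi(t))\,dt]$, not the single number you claim, and your Portmanteau step fails because the functional is discontinuous at $B\Phi$. Your appeal to Theorem~\ref{cor:correctness} is misplaced: that theorem handles the measure-zero choice of the CSL threshold $p$, which is unrelated to geometric tangency of the fluid trajectory. The paper closes this gap by establishing weak convergence $\T^N\Rightarrow\T^G$ and $\bar\T^N\Rightarrow\T^G$ (both inherited from the reachability convergence of Theorem~\ref{COnvergenceReach}, whose proof treats tangency explicitly), and then applying Portmanteau to the bounded continuous map $t\mapsto\int_0^t\rho(\Phi(s))\,ds$ against the laws of $\T^N$ and $\bar\T^N$. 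To repair your argument you would need to reproduce this step, or else add a non-tangency hypothesis that the proposition as stated does not assume.
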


\begin{example}\label{example:reachabilityRewards}
We consider the SRN introduced in Example \ref{RunningExample}. We are interested in knowing the expected cumulative reward of $mRNA-P$ for all those paths for which the $mRNA$ does not reach $30$ individuals within $[0,T]$ for $T\in [0,50]$.
We consider the reward structure $\rho^{mRNA-P}(x)$ introduced in Example \ref{Example-InstanRew}, and the following cumulative reward $\rho_{reach}( X^N,T,mRNA<30)=N \rho_{reach}( \hat X^N,T,mRNA<\frac{30}{N}),T \in [0,50].$
To compute such a reward we explore the CLA approximation of $X^N$. We consider $B_1=[1, 0],$ $B_2=[-1,1]$ and $B=[B_1,B_2],$ where we assume the first component of $X^N$ represents the number of protein molecules in that state. Then, we consider $\hat Z^N$, the projection of the CLA of $\hat X^N$ over B, namely, $\hat Z^N=B \hat Y^N$. At this point we discretize $\hat Z^N$ with sampling time $h>0$ and a grid of cells of size $dz>0$, and compute the above rewards using Eqn \eqref{EqnBoundedReach}.
\begin{figure}
	\centering
     \includegraphics[width=0.85\linewidth]{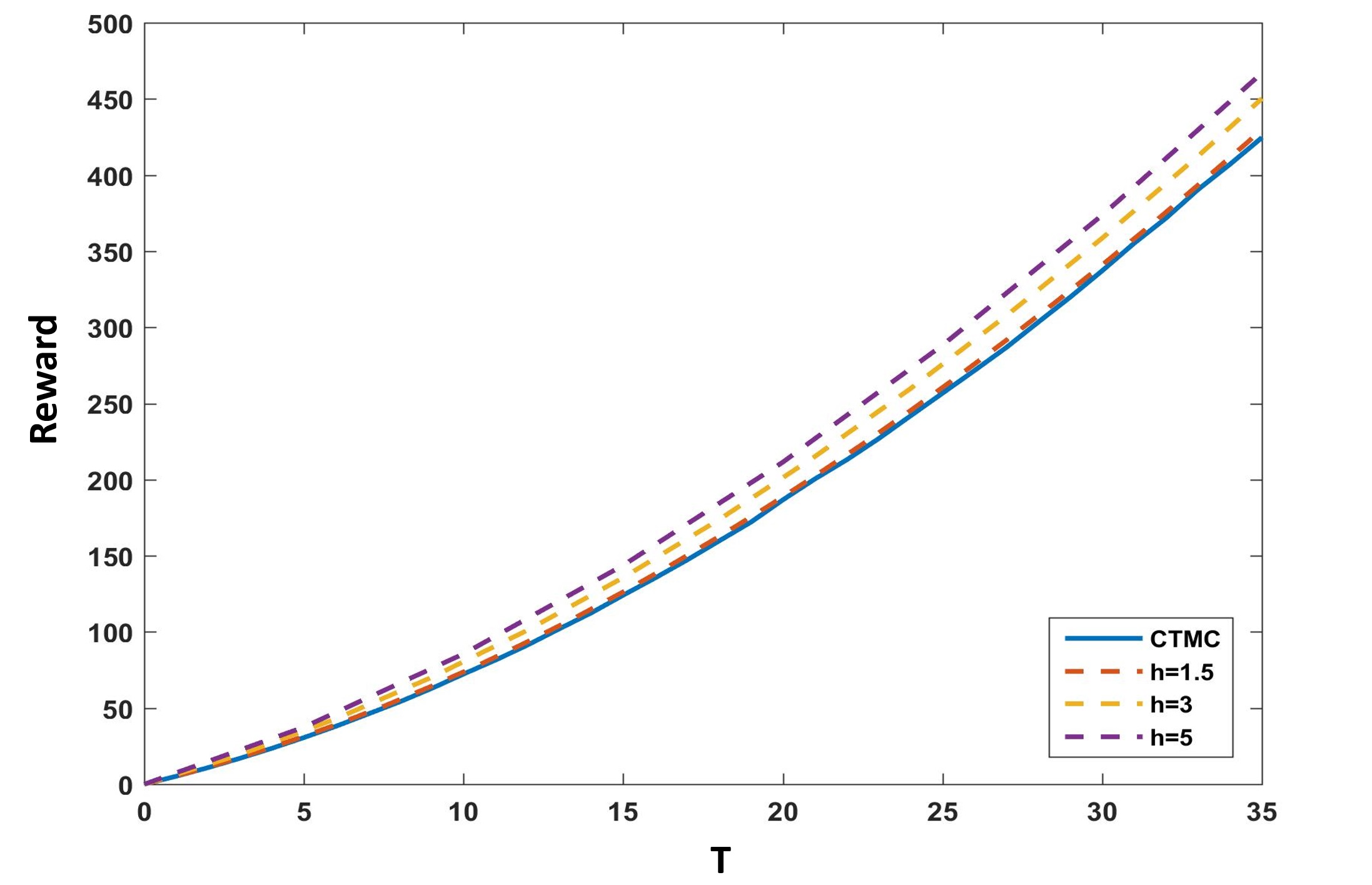}
\caption{$\rho_{reach}(x_0,T),T \in [0,35],$ for reward structure  $\rho^{mRNA-P}$ estimated using $10000$ simulations compared with the CLA approximation for different values of sampling time $h$. $dz=0.5$ for all experiments. }
\label{FigureCumulativeBOundedReward}
\end{figure}%
The solution to Eqn \eqref{EqnBoundedReach} is approximate, and errors are introduced by two factors: firstly, by the usage of the CLA approximation of $X^N$, and, secondly, by the discretization of the resulting Gaussian process. Thus, we compare our reward value with the value computed on $X^N$ using $10000$ simulations for each time point. The resulting values are plotted in Figure \ref{FigureCumulativeBOundedReward}. To perform a further comparison we employ the following metrics, $\epsilon_{max}^{rel}$ and $\epsilon_{avg}^{rel}$, defined as follows: $$\epsilon_{max}^{rel}=max_{T\in \Sigma_h}\frac{|Rew^{CLA}_T-Rew_T)|}{|Rew|},\quad \epsilon_{avg}^{rel}=\sum_{T \in \Sigma_h}\frac{|Rew^{CLA}_T-Rew_T)|}{|Rew|}\frac{1}{|\Sigma_h|}$$
where $\Sigma_h$ is the set of sampling points for sampling time $h,$ $Rew^{CLA}_T=\rho_{reach}^{mRNA-P}$ $(\hat Z^{\Delta z,h,N}, \hat x_0,\lfloor \frac{T}{h}  \rfloor,A)$ and $Rew_T=\rho_{reach}^{mRNA-P}(\hat X, \hat x_0,{T} ,A)$. The calculated metrics are summarised in the table below for three different values of $h$.
\begin{center}
	\begin{tabular}{|l|l|l|}
	\hline
	h            &  \,	 	    $\epsilon_{avg}^{rel}$ & \,$\epsilon_{max}^{rel}$ \, \\ 
	\hline
$5$       	 & 	 $1.5468$  &  $7.96$  \\
$3$        &  $0.196$	  & $0.88$   \\
$1.5$       &  	  $0.041$	  & $0.24$ \\
	\hline
	\end{tabular}
\end{center}
It is possible to observe how the two measures converge very fast. In fact, already for $h=1.5$, which corresponds to $25$ sampling times, the two measures have an average relative error of less than $0.041.$

\end{example}


\section{Experimental Results}\label{Exp}
We implemented our algorithms in Matlab and evaluated them on two case studies. All the experiments were run on an Intel Dual Core i$7$ machine with $8$ GB of RAM. The first case study is a Gene Expression Network as introduced in Example \ref{RunningExample}. We use this example to demonstrate that our approach is more powerful than existing approximate techniques. Specifically, we show how our CLA approach, based on a Gaussian process approximation, is able to correctly evaluate properties that methods based on Fluid Limit Approximation \cite{bortolussi2012fluid} cannot, while still guaranteeing comparable scalability.  The second example is a  Phospohorelay Network with $7$ species. We use this example to show the trade-off between the different parameters and the molecular population. More precisely, we show that the accuracy of our approach increases as the number of molecules grows, but can still give fast and accurate results when the molecular population is relatively small. We validate our results by comparing our method with statistical model checking (SMC) as implemented in {PRISM} \cite{kwiatkowska2011prism}. In fact, for both examples, exact numerical computation of the reachability probabilities using uniformisation techniques on the induced CTMC is not feasible because of state space explosion.

\subsection{Gene Expression}
We consider the following gene expression model, as introduced in Example \ref{RunningExample}
with initial counts of all the species equal to $0$. 
We consider the property $P_{=?}(F^{[0,Time] }(mRNA \geq 175)$,
which quantifies the probability that at least $175$ $mRNA$ molecules are produced during the first $Time$ seconds, for $Time\in[0,1000]$.
This is a particularly difficult property because the trajectory of the mean-field of the model, and so the expected value of the CLA, does not enter the target region. As a consequence, approximate approaches introduced in \cite{ethier2009markov} and \cite{bortolussi2014stochastic}, which are based on the hitting times of the mean-field model, fail and evaluate the probability as always equal to $0$.

\begin{figure}
  \centering
\includegraphics[width=0.72\linewidth]{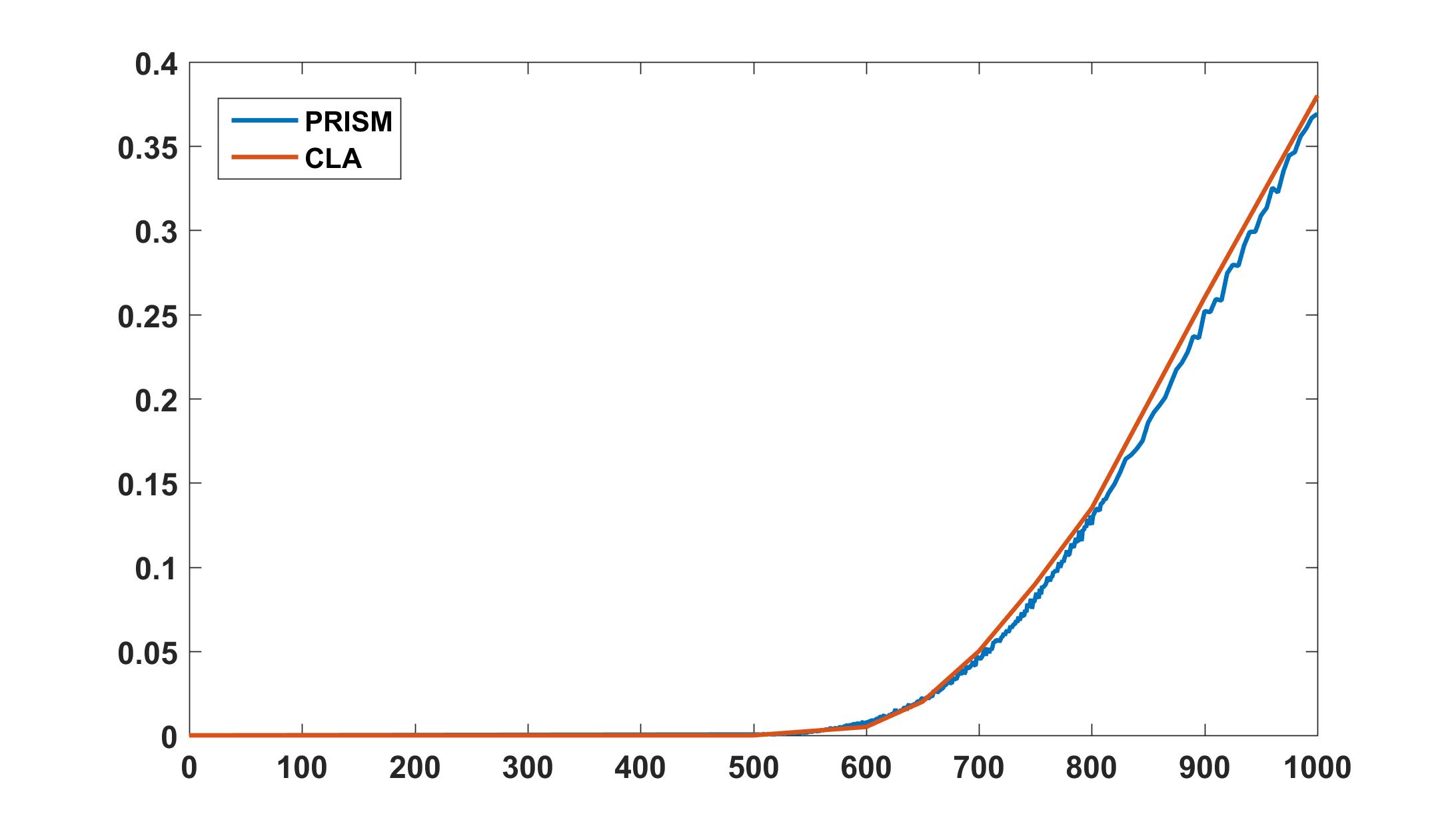}
  \caption{The figure plots $F_{=?}[mRNA\geq 174]_{[0,Time]}$ for $h=1.85$ and $\Delta z=0.5$. The x-axis represents the value of $Time$ and the y-axis the quantitative value of the formula for that value of $Time$.}
    \label{fig:Gene}
\end{figure}
Conversely, our approach is able to correctly evaluate such a property. Figure \ref{fig:Gene} compares the value computed by our method with statistical model checking of the same property as implemented in PRISM over $30000$ simulations for each time point and confidence interval $0.01$.
In Figure \ref{fig:Gene} we consider $h=1.8$ and $\Delta z=0.5$ and demonstrate that
our approach is able to correctly estimate such a difficult property.
Note that, as the mean-field does not enter the target region, for each time point the probability to enter the target region depends on a portion of the tail of the Gaussian given by the CLA. 
As a consequence, the accuracy of our results strictly depends on how well the CLA approximates the original CTMC, much more than for properties where the mean-field enters the target region. In the following table, we evaluate our results for two different values of $h$ and $\Delta z=0.5$. In order to compare the accuracy we consider the following metrics as defined in Example \ref{example:reachabilityRewards}, $\epsilon_{avg}^{rel}$ and $\epsilon_{max}^{rel}$. 
The comparison is shown in the table below.
\begin{center}
	\begin{tabular}{|l|l|l|l|l|}
	\hline
	Property            &  \,	Ex.\,	Time	 \,  & \, h  \, 	  &	\, 	    $\epsilon_{avg}^{rel}$ & \,$\epsilon_{avg}^{max}$ \, \\ 
	\hline
$F_{=?}[mRNA\geq 174]_{[0,Time]}$, $Time\in [0,100]$       & 	$298$	   sec           	 & 	$1.85$ &  	$0.0075$  & $0.022$  \\
$F_{=?}[mRNA\geq 174]_{[0,Time]}$, $Time\in [0,100]$       &  $152$		   sec           	 & 	$5$ &  	$0.0147$  & $0.13$  \\
	\hline
	\end{tabular}
\end{center}

\subsection{Phosphorelay Network}
We now consider a three-layer phosphorelay network consisting of 7 species given by the following reactions:
\begin{align*}
    & L1 + B \to^{\frac{0.01}{N}\cdot L1 \cdot B} B + L1p;     \quad L2 + L1p \to^{\frac{0.01}{N} \cdot L2 \cdot L1p} L1 + L2p;\\
    &\quad  L3  + L2p \to^{\frac{0.01}{N}\cdot L3 \cdot L2p} L3p + L2;\quad  L3p \to^{0.1 \cdot L3p} L3;\\
    &\quad L2p  \to^{0.01 \cdot L2p} L2 ; \quad L1p \to^{0.01 \cdot L1p} L1 .
\end{align*}
There are $3$ layers, $(L1,L2,L3)$, which can be found in phosphorylate form $(L1p,L2p,L3p)$, and the ligand $B$.
We consider the initial condition $x_0\in \mathbb{N}^{7}$ such that $ x_0(L1)=x_0(L2)=x_0(L3)=0.25 N $, $ x_0(L1p)=x_0(L2p)=x_0(L3p)=0$ and $x_0(B)=0.15 N$. In Figure \ref{Phopho}, we compare the estimates obtained by our approach for two different initial conditions ($N=400$ and $N=800$) with statistical model checking as implemented in PRISM \cite{kwiatkowska2011prism}, with $30000$ simulations and confidence interval equal to $0.01$. In both experiments we set $\Delta z=0.5$.
\begin{figure}
	\centering
\begin{subfigure}{.5\textwidth}
     \centering
     \includegraphics[width=1.\linewidth]{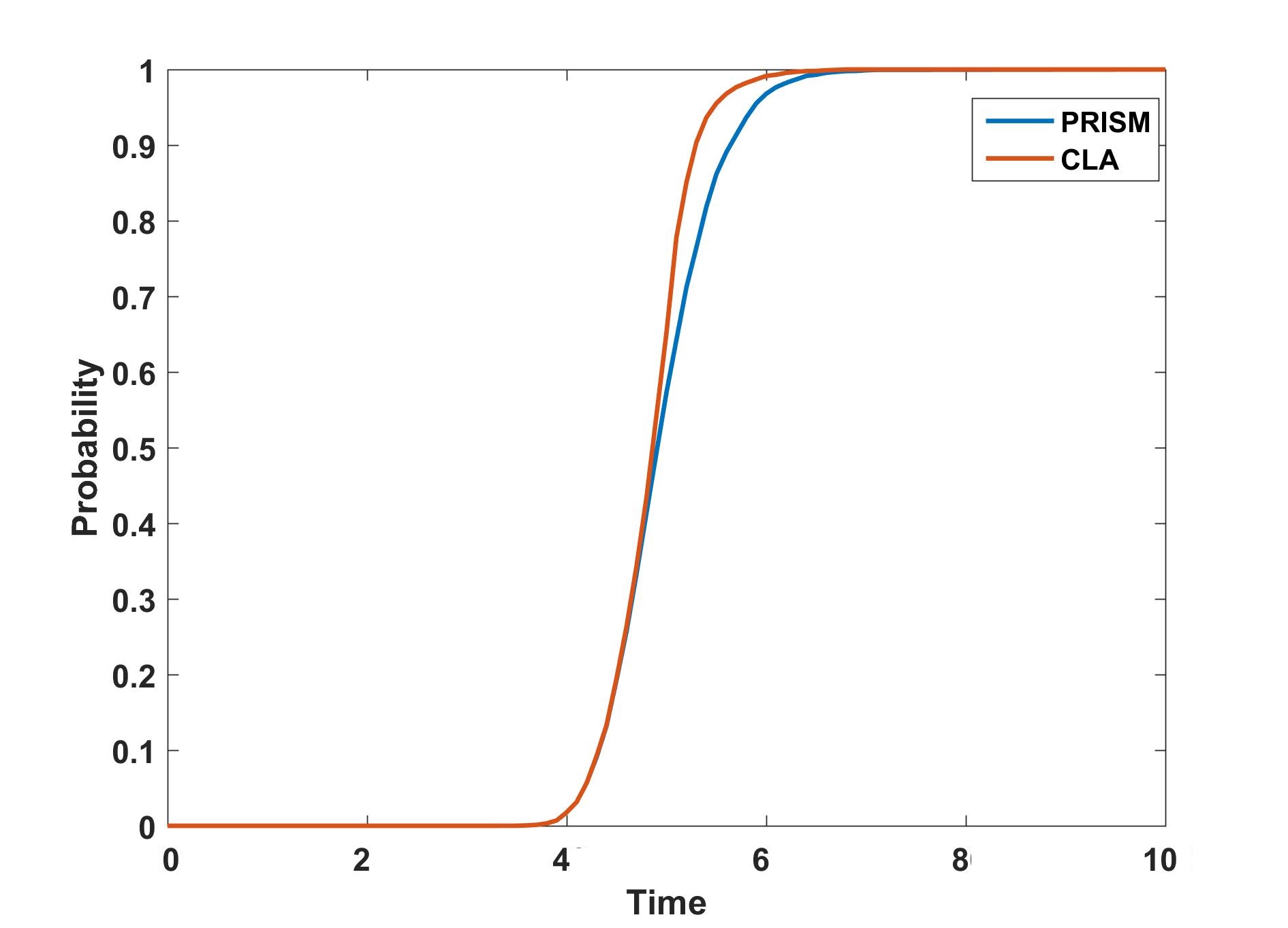}
     \caption{}
     \label{fig:100}
   \end{subfigure}%
\begin{subfigure}{.5\textwidth}
  \centering
  \includegraphics[width=1\linewidth]{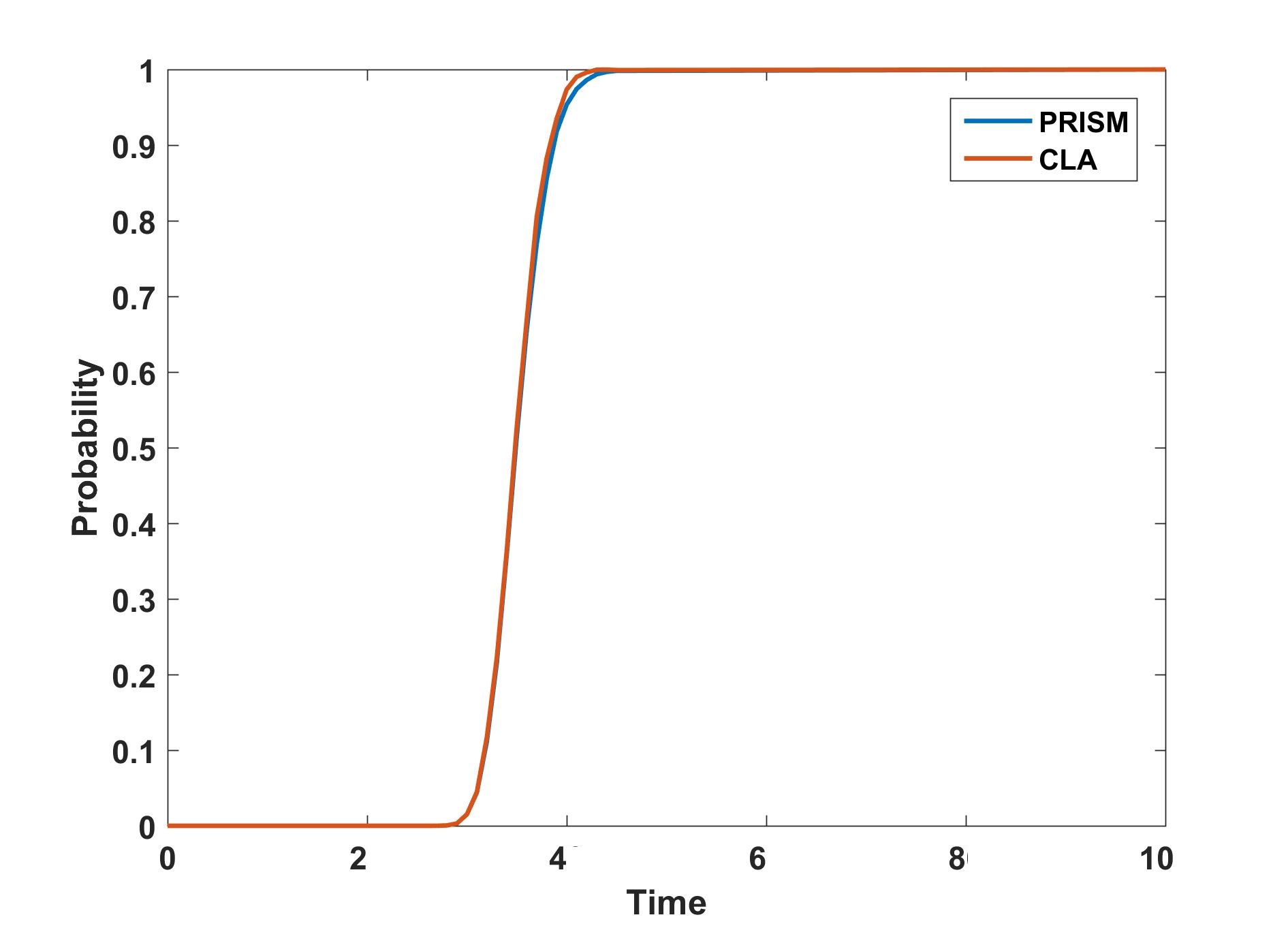}
  \caption{}
  \label{fig:180}
\end{subfigure}
\caption{Comparison of the evaluation of $F_{[0,Time]}[L3p>80]$ (a) 
with $N=400$ and $F_{[0,Time]}[L3p>180]$ (b) 
with $N=800$ using statistical model checking as implemented in PRISM and our approach. In both figures we considered $h=0.1$, $\Delta z=0.5$. }
\label{Phopho}
\end{figure}%

In Figure \ref{fig:100} we can see that, if we increase the time interval of interest, the error tends to increase. This is because, for $N=400$, the CLA and CME do not have perfect convergence. As a consequence, at every step of the discretized DTMC, a small error is introduced. This source of error is present until we enter the target region with probability $1$. If we increase $N$ the error disappears, and the inaccuracies are due to the finiteness of $h$ and $\Delta z$. However, already for $h=0.1$ and $N=800$, the CLA produces a fast and reasonably accurate approximation. 
In the following table we compare our approach and PRISM evaluations for different values of $N$ and $h$ and $\Delta z=\frac{0.5}{N}$ in the normalised space, which implies the resulting discrete state process takes values in $\mathbb{Z}$.
\begin{center}
	\begin{tabular}{|l|l|l|l|l|l|}
	\hline
	Property            &  \,		Ex.\,Time	 \,  & \, h  \, 	  & \, N \, &	\, 	    $\epsilon_{avg}^{rel}$ & \,$\epsilon_{avg}^{max}$ \, \\ 
	\hline
$F_{=?}[L3p>80]_{[0,Time]}$, $Time\in [0,10]$       & 	$97$	   sec           	 & 	$0.1$ & $400$ & 	$0.0088$  & $0.11$  \\ 
	$F_{=?}[L3p>180]_{[0,Time]}$, $Time\in [0,10]$    & 	$130$	     sec      	 & 	 	$0.1$     &  $800$ & $0.0015$  & $0.0217$\\ 
	$F_{=?}[L3p>80]_{[0,Time]}$, $Time\in [0,10]$    & 	$28$	     sec      	 & 	 	$0.5$     &  $400$ & $0.0381$& $0.24$  \\  
		$F_{=?}[L3p>180]_{[0,Time]}$, $Time\in [0,10]$    & 	$39$	    sec      	 & 	 	$0.5$     &  $800$ & $0.0289$ & $0.14$  \\  
	\hline
	\end{tabular}
\end{center}
The results show that the best accuracy is obtained for $h=0.1$ and $N=800$, where $h=0.1$ induces a finer time discretization, whereas the worst are for $h=0.5$ and  $N=400$. 
We comment that the numerical solution of the CME using PRISM is not feasible for this model, and our method is several orders of magnitude faster than statistical model checking with PRISM ($30000$ simulations for each time point), which takes several hours for each property.

\section{Conclusion}
We presented a framework for approximate model checking of a time-bounded fragment of CSL extended with rewards for CTMCs that are induced from Stochastic Reaction Networks.
Our approach employs an abstraction based on the Central Limit Approximation to approximate the CTMC as a Gaussian process. Then, numerical procedures for model checking CSL formulae on the resulting Gaussian process are derived. 
We do not consider time-unbounded properties because of the nature of the convergence of CLA, which is guaranteed just for finite time.
Since the CLA requires solving a number of differential equations that is quadratic in the number of species and independent of the population size, our methods enable formal analysis of possibly infinite-state CTMCs that cannot be analysed using classical methods based on uniformization \cite{DHK14,wolf2010solving}.  

Deriving model checking algorithms was challenging because the CLA yields a continuous time stochastic process with an uncountable state space.
As a consequence, existing methods that rely on finite state spaces cannot be used directly and discretizing the uncountable state space induced by the CLA naturally leads to state space explosion. In order to overcome these issues, we considered reachability regions defined as  polytopes. Using the fact that the CLA is a Gaussian Markov process, for a given linear combination of the species of a SRN we are able to project the original, multi-dimensional Gaussian process onto a uni-dimensional stochastic process. We then derived an abstraction in terms of a time-inhomogeneous DTMC, whose state space is independent of the number of the species of a SRN, as it is derived by discretizing linear combinations of the species. This ensures scalability.
On different case studies, we showed that our approach outperforms existing methods and permits fast and scalable probabilistic analysis of SRNs. 
The accuracy depends on parameters controlling space and time discretization, as well as on the accuracy of the CLA. Using the theory of convergence in distribution we proved the convergence of our algorithms in the limit of high populations. 
As a future work we plan to release a tool for scalable model checking of SRNs. Moreover, we wish to investigate the speed of convergence of our methods.  

\appendix
\section{Proofs}

\noindent\textbf{Theorem \ref{COnvergenceReach}} Let $\mathcal{C}=(\Lambda,R)$ be a SRN with induced CTMC $\hat X^N$ and $\hat Z^{\Delta z,h,N}$ be the DTMC obtained by space and time discretization of $B \hat Y^N$. Assume $\hat X^N(0)=\hat x_0$ and the corresponding initial state for $\hat Z^{\Delta z,h,N}$ is $z_{d,0}$.
Then, for $t_1,t_2 \in \mathbb{R}_{\geq 0}$, and $A=\{x \in \mathbb{R}^{|\Lambda|}_{\geq 0}\, s.t.\,\forall i \in \{1,...,m \} (Bx)_i\leq b_i \},$ for $B\in \mathbb{R}^{m\times |\Lambda|}$ and $b \in \mathbb{R}^m$, it holds that:
\begin{align*}
     \lim_{N\to \infty}\lim_{h \to 0}\lim_{\Delta z \to 0}| P_{reach}^A(\hat x_0,t_1,t_2)\, - \, 
     &P_{reach}^{\Delta z,h,A}(z_{d,0},t_1,t_2)|  =0.
\end{align*}

\noindent\textbf{Proof.}
Without any loss of generality, we assume $t_1=0,t_2=T$.
Call $$P_{reach}^{h,A}(\hat x_0,0,T)=Prob^h(\exists t \in [0,\lceil\frac{T}{h}\rceil ] \, s.t.\,  \hat {Z}^{h,N}(k) \leq b\,|\,\hat {Z}^{h,N}(0)=B\hat x_0 ),$$ 
and 
$$  P_{reach}^{\hat Y^N,A}(\hat x_0,0,T)=Prob^{\hat Y^N}(\exists t \in [0,T ] \, s.t.\, \hat Y^N(t) \in A \,|\,\hat Y^N(0)=\hat x_0 ), $$
where $Prob^{Y^N}$ is the Gaussian probability measure of $\hat Y^N$.

By application of the triangular inequality we have that:

\begin{align*}
|P_{reach}^A(\hat x_0,0,T)  -&P_{reach}^{\Delta z,h,A}(z_{d,0},0,T) |\leq \\
&|P_{reach}^A(\hat x_0,0,T)  -P_{reach}^{\hat Y^N,A}(\hat x_0,0,T) | +\\
&|P_{reach}^{\hat Y^N,A}(\hat x_0,0,T)  -P_{reach}^{h,A}(\hat x_0,0,T) | + \\
&|P_{reach}^{h,A}(\hat x_0,0,T)  -P_{reach}^{\Delta z,h,A}(z_{d,0},0,T) |.
\end{align*}

The convergence of the third and second components is a consequence of Theorems \ref{Space Discretization} and \ref{Time Discretization}. We need to show that: 

$$ \lim_{N \to \infty}|P_{reach}^A(\hat x_0,0,T)  -P_{reach}^{\hat Y^N,A}(\hat x_0,0,T) |=0 .$$

Note that we removed the limits for $\Delta z$ and $h$, as this term is independent of time and space discretization. 
In what follows we assume that $BX^N$ is a uni-dimensional process. Generalization for $m>1$ follows from this case. 
Intuitively, this holds due to the convergence of $X^N$ to its CLA $Y^N$. A formal proof requires a more involved machinery. In fact, Theorem~\ref{lnconv} states that: 

\[ \sqrt{N}\left(\hat X^N(t) - \Phi(t) \right) \Rightarrow G(t), \]
hence, to rely on it, we need to reason on the modified stochastic model: \[G^N(t) = \sqrt{N}\left({\hat X_N(t)} - \Phi(t) \right),\] rather than on the original CTMC $\hat X^N(t)$. Now, consider the reachability problem $B \hat X^N \leq b$; rephrasing it in terms of $G^N$ we get:

\[B {\hat X^N(t)} \leq b\ \text{iff}\ B G^N(t) \leq \sqrt{N}(b - B\Phi(t)) = b^N(t). \]

As we can see, the reachability problem for $G^N$ has a different nature: the threshold $b$ becomes both $N$ dependent and time dependent! In addition, we see that for the CLA, $B Y^N(t) \leq b$ iff $B G(t) \leq b^N(t)$. Let's look at this reachability problem from the point of view of the trajectory space, i.e. the space of cadlag function  $\omega : \mathbb{R}_{\geq 0}\to \mathbb{R}$. Both $G^N$ and $G$ can be seen as probability measures over this space.  The reachable set in the trajectory space depends on $N$, precisely being $R_N = \{\omega~|~\exists t \in [0,T]: \omega(t) \leq b^N(t) \}$. We also consider the complement of this set, $R_N^c = \{\omega~|~\forall t \in [0,T]: \omega(t) > b^N(t) \}$, and the boundary of the set $\partial R_N = \{\omega~|~\forall t \in [0,T]: \omega(t) \geq b^N(t) \wedge \exists t \in [0,T]: \omega(t) = b^N(t) \}$.

Before proceeding further, we need to understand how the set $R_N$ changes as $N$ goes to infinity. Consider the threshold $b^N(t) = \sqrt{N}(b - B\Phi(t))$. There are three cases: 

\begin{enumerate}

    \item if $b > B\Phi(t)$, then $b^N(t) \rightarrow +\infty$;

    \item if $b < B\Phi(t)$, then $b^N(t) \rightarrow -\infty$;

    \item if $b = B\Phi(t)$, then $b^N(t)=0$. 

\end{enumerate}
In the first case, the reachable set at time $t$ converges to $\mathbb{R}$, in the second case to the empty set, in the third case to $(-\infty,0]$. Therefore, the limit reachable set $R$ in the trajectory space will be the union for each $t$ of one of these three kind of sets.


By the assumption that rate functions are real analytic, it follows that $\Phi(t)$ is also a real analytic function, and therefore $B \Phi(t)$ will equal $b$ only in a finite number of points of $[0,T]$, or in the whole interval (a degenerate case which is easily tractable) \cite{krantz_realanalytic_2002}. It then follows that $b(t) = \lim_{N\rightarrow\infty} b^N(t)$ changes value a finite number of times, say at times $t_1, \ldots, t_n$, where it equals zero. Outside these points, it is either plus or minus infinity. The reachable set $R$, in the limit of infinite $N$, is thus a finite union of sets of the form $t_i\times (-\infty,0]$ at times $t_i$ and either $\emptyset$ or $\mathbb{R}$ for each $t$ in between $t_{i-1}$ and $t_i$. 

Now, if in such a union the set $(t_{i-1},t_i)\times \mathbb{R}$ is present at least once, then the reachability probability in the limit equals exactly one. This is because any trajectory $\omega$ will enter the set $R$ in that subregion. In this case, convergence is easily shown. In fact, being the Skorokhod space a Polish space, 
any converging sequence  $G^N\Rightarrow G$ of random variables in that space is uniformly tight, meaning that for each $\epsilon$ there is a compact space $K_\epsilon$ such that, outside it, all random variables and the limit have probability less than $\epsilon$. In particular, a compact set of trajectories is bounded in $[0,T]$ with respect to the sup norm  \cite{adler2010geometry}, meaning that for each $\epsilon$ there is a $k_\epsilon>0$ such that the probability that a trajectory $\omega$ has modulus $|\omega(t)|\leq k$ uniformly in $[0,T]$ is more than $1-\epsilon$ for all $N$. Now, consider the time interval $(t_{i-1},t_i)$ where the reachable set converges to  $(t_{i-1},t_i)\times \mathbb{R}$ in the limit. As the threshold $b^N(t)$ is an analytic function of $t$, removing a region of length $\Delta $ near $t_{i-1}$ and $t_i$ (i.e. restricting to $[t_{i-1}+\Delta ,t_i-\Delta ]$), we can find an $N_0$ such that, for $N>N_0$, $b^N(t)$ is greater than $k_\epsilon$ uniformly in $[t_{i-1}+\Delta ,t_i-\Delta ]$. Then the limit of the reachability for $G^N$ is greater than $1-\epsilon$ for any epsilon, that is, it equals one. The case in which the limit region $R$ is the empty set for every $t$ is easily proved along the same lines.

The interesting case is the one in which there are some $t_i$'s where $b^N(t_i) = 0$ for all $N$, but it is always negative outside them, implying the reachable region $R$ converges to the empty set everywhere but in the $t_i$'s, where it equals $(-\infty,0]$. This corresponds to the scenario in which the fluid limit $\Phi(t)$ is tangent to the reachable set, but never enters it, a scenario known to cause trouble in the use of mean field to estimate hitting times \cite{bortolussi2016hybrid}.

To deal with this last case, let us denote with $P^N$ the probability in the trajectory space for $B G^N$, and with $P$ the probability for $B G$. 

As before, denote with $R_N$ the reachability set for $G^N$ and with $R$ the limit set, taking the threshold $b^N$ to infinity. We now introduce a set which over-approximates $R_N$ for $N$ large. This set is defined as follows:  invoking uniform tightness, we fix a large value $k_\epsilon$ as before, so that trajectories of $B G^N$ and of $B G$ are contained in $[-k_\epsilon,k_\epsilon]$ with probability $1-\epsilon$, uniformly for $t \in [0, T]$. Furthermore, we consider points $t_i$ where $b^N(t_i)$ is zero, and take a small neighborhood $B^\Delta _i$ of width $\Delta $ around them. Define the set $R_\epsilon$ in the trajectory space as:

\begin{align*}
    R_{\epsilon} = \{\omega(t)|\omega(t) \leq 0,  \text{for }t \in B^\Delta _i, \omega(t) \leq - k_\epsilon\ \text{elsewhere in }[0, T]\}.
\end{align*}

By relying on the continuity of the set $R$ for $G$, we can choose $\Delta $ small enough so as to enforce that $|P (R_\epsilon) - P (R)| \leq \epsilon$. The continuity of $R$ for $G$ follows from the fact that $\omega\in R$ if and only if $\omega(t_i)\leq 0$ for $i=1,\ldots,n$, i.e. $R$ is a finite dimensional projection on $t_i$'s. Therefore, its boundary is a set of topological dimension less than $n$ in $\mathbb{R}^n$, which has probability zero under the finite dimensional projection of $G$ on $t_i$'s (which is Gaussian).  Now, using triangular inequality, we get:

\begin{eqnarray*}
|P^N(R_N) - P(R) | &\leq & |P^N(R_N) - P^N(R)| + |P^N(R) - P(R)| \\ 
& \leq & |P^N(R_\epsilon) - P^N(R)| + |P^N(R) - P(R)| \\
& \leq & |P^N(R _\epsilon) - P(R _\epsilon)| \\
& + & |P(R) - P(R _\epsilon)| + 2 |P^N(R) - P(R)|.
\end{eqnarray*}
The second inequality above follows from the monotonic behaviour of probability distributions, as for each $\Delta $ and $k_\epsilon$ there is an $N_0$ such that, for all $N\geq N_0$, $R\subset R_N\subset R_\epsilon$, hence $|P^N(R^N) - P^N(R)| \leq   |P^N(R_{\epsilon}) - P^N(R)|$. 

Furthermore, $|P^N(R) - P(R)| \rightarrow 0$, by the continuity of the set $Y$. In $R$, by virtue of Lemmas 1 and 2 below, if also follows that $|P^N(R _\epsilon) - P(R _\epsilon)| \rightarrow 0$, and hence:

\[\limsup_{N\rightarrow \infty} |P^N(R_N) - P(R) | \leq \epsilon, \]
which holds for any $\epsilon>0$,  allowing us to conclude that:
\[\lim_{N\rightarrow \infty} |P^N(R_N) - P(R) | =0, \]
as desired.

\medskip

\noindent \textbf{Lemma 1.} Let $b\in \mathbb{R}$, and consider the reachable set $R = \{\omega|\exists t:\omega(t)\leq b\}$. Then $P^N(R)\rightarrow P(R)$, with $P^N$, $P$ as above. 

\noindent \textbf{Proof.}  The boundary of the reachable set $R$ is the set of trajectories $\omega$ such that $\inf_{t\in[0,T]}\omega(t) = b$. In order to conclude, we need to show that this set has measure $0$. As $G$ is a Gaussian process, assuming the covariance function is non-zero, we have that the distribution of the infimum (or equivalently the supremum) is absolutely continuous \cite{lifshits1984absolute}, which implies that the set of trajectories for which $ inf_{t\in [0,T]} \omega(t)=b$ has measure $0$. Hence, $R$ is a continuity set for $G$, which prove the thesis due to the Portmanteau theorem.

\medskip

\noindent \textbf{Lemma 2.} Consider a reachable set $R$ defined by a piecewise constant threshold. Hence, fix $0=t_1,\ldots t_{n+1}=T \in [0,T]$, and $b_i \in \mathbb{R}$, for $i=1,\ldots,n$, and let $R = \{\omega|\exists i \in\{1,\ldots,n\}, \exists t\in [t_i,t_{i+1}]:\omega(t)\leq b_i\}$. Then $P^N(R)\rightarrow P(R)$, with $P^N$, $P$ as above. 

\noindent \textbf{Proof.} We proceed by induction on $j$, showing that $R$ is a continuity set for $G$. The case for $j=1$ follows from Lemma 1 above. Suppose we proved it up to $j-1$. Then, conditioned on an initial trajectory $\omega$ from time zero to $t_j$, with $\omega(t_j) = y$, $G$ restricted in $[t_j,t_{j+1}]$ is a Gaussian process, and we can apply Lemma 1 to show that the probability of $\partial R$, restricted in this time span, is zero. Now, the probability of $\partial R$ restricted to $[0,t_{j+1}]$ can be bounded by the sum of two terms. The first is the probability of $\partial R$ in $[0,t_{j}]$ , which is zero, the second is  probability of $\partial R \cup R^c$ up to time $t_j$ times the probability of $\partial R$ in $[t_j,t_{j+1}]$, conditional on being in $\partial R \cup R^c$ up to time $t_j$. Also this second term is zero, as the conditional probability is zero for any initial trajectory $\omega$. The bound on the probability of $\partial R$ follows because any trajectory in $\partial R$ up to time $t_{j+1}$ is either touching $b_j$ between $[t_j,t_{j+1}]$ (second term), or before $t_j$ (first term). The second case overlaps with the first for all trajectories  that touch the threshold both before $t_j$  and between $[t_j,t_{j+1}]$.

\bigskip

\noindent\textbf{Proposition \ref{PropositionBoundedRewa}.}
For $T\in \mathbb{R}_{\geq 0}$ and $B\in\mathbb{R}^{|\Lambda|\times k}$ let $A$ be the set defined as $A=\{x\in \mathbb{R}^{|\Lambda|} \,s.t.\, \forall i \in \{1,..,m \}, (Bx)_i\leq b_i \}$. Then, for $\hat x_0 \in \mathbb{R}^{|\Lambda|}$ and $z_{d,0}$, the state in the state space of $\hat Z^{\Delta z,h,N}$ corresponding to the region containing $\hat x_0,$ it holds that:
$$ \lim_{N\to \infty}\lim_{h\to 0}\lim_{\Delta z \to 0}|\rho_{reach}(\hat x_0,\hat X^N,T,A)- \rho_{reach}(z_{d,0}, \hat Z^{\Delta z,h,N},\lfloor \frac{T}{h}  \rfloor,A)|=0.$$

\noindent\textbf{Proof.} 
In order to prove the convergence, we start by introducing some notation. First of all,  $B \hat{X}^N$ and $B \hat{Y}^N$ are the CTMC and its CLA projected on the inequalities defining the region $A$.
Additionally we denote by  $\hat {Z}^{h,N}$  the DTMP obtained by time discretization of $B \hat{Y}^N$, and $\hat Z^{\Delta z,h,N}$ is the space discretization of $\hat {Z}^{h,N}$.

We now introduce the following stopping times, which are random variables on $\mathbb{R}_{\geq 0}$ denoting the random time in which a certain event happens. In particular, we are interested in the stopping times corresponding to the event of entering into the region $A$, usually known as hitting times, for the different processes we consider:
\begin{itemize}
    \item $\T^N$ is the hitting time for $\hat X^N$;
    \item $\bar\T^N$ is the hitting time for $\hat Y^N$;
    \item $\T^{N,h}$ is the hitting time for $\hat {Z}^{h,N}$;
    \item $\T^{N,h,\Delta z}$ is the hitting time for $\hat Z^{\Delta z,h,N}$.
\end{itemize}
Hitting times are strictly related to the reachability probability. For instance, $Prob\{\exists t\leq T: \hat X^N(t) \in A \} = Prob\{\T^N \leq T\}$. 
Furthermore, we introduce also the stopping time $\T^G$, which is the 
hitting time for the Gaussian process $G(t)$ to enter the rescaled  region $A^\infty$, which is the limiting region, similarly to what we do in the proof of the Theorem \ref{COnvergenceReach}. 
We have the following weak convergence relationships for such hitting times:
\begin{itemize}
    \item $\T^{N,h,\Delta z}\Rightarrow\T^{N,h}$ as $\Delta z\rightarrow 0$;
    \item $\T^{N,h}\Rightarrow\bar\T^{N}$ as $h\rightarrow 0$;
    \item $\bar\T^{N}\Rightarrow\T^G$ as $N\rightarrow \infty$;
    \item $\T^{N}\Rightarrow\T^G$ as $N\rightarrow \infty$;
\end{itemize}
To show these relationships, one just has to use the correspondence of hitting times with the reachability probability, and the convergence of the latter by virtue of the proof of Theorem 4. For instance 
$Prob\{\T^N \leq T\} = Prob\{\exists t\leq T: \hat X^N(t) \in A \} \rightarrow_{N\rightarrow\infty} Prob\{\exists t\leq T: G(t) \in A^\infty \} = Prob\{\T^G \leq T\}$. The pointwise convergence of the cumulative distributions function of $\T^N$ to that  of $\T$ implies weak convergence by the Portmanteau theorem \cite{billingsley2013convergence}.

In order to prove the convergence of rewards, given a reward structure $\rho$ on $\mathbb{R}^m$ and a path  $\omega:\mathbb{R}_{\geq 0}\to \mathbb{R}^{m},m>0,$ we define the functional $\mathcal{R}(\omega,T)=\int_{0}^{T}\rho(\omega(s)))ds$. In order to evaluate the desired reward, we need to stop the integration as soon as the process enters the target region $A$, hence 
$\rho_{reach}(\hat X^N,T,A) = \E[\mathcal{R}(\hat X^N,\T^N)]$, where the expectation is taken with respect to both $X^N$ and $\T^N$. 
Then, by triangular inequality, we have:
\begin{align*}
    |\mathbb{E}[\mathcal{R}(B\hat{X}^N,\T^N)]-& \mathbb{E}[\mathcal{R}(\hat Z^{\Delta z,h,N},\T^{N,h,\Delta z})]|\leq\\ 
    & |\mathbb{E}[\mathcal{R}(B \hat{X}^N,\T^{N})]- \mathbb{E}[\mathcal{R}(B \hat{Y}^N,\bar\T^{N})]|+\\
   &|\mathbb{E}[\mathcal{R}(B \hat{Y}^N,\bar\T^{N})]- \mathbb{E}[\mathcal{R}(\hat {Z}^{h,N},\T^{N,h})]|+\\&
   |\mathbb{E}[\mathcal{R}(\hat {Z}^{h,N},\T^{N,h})]- \mathbb{E}[\mathcal{R}(\hat Z^{\Delta z,h,N},\T^{N,h,\Delta z})]|.
\end{align*}

We will prove the proposition by showing that all three terms on the right hand side of the above inequality converge to zero. In particular, the third term can be sent to zero for only $\Delta z\rightarrow 0$, and the second term by sending only $h\rightarrow 0$, as both are related to the discretization of  $B\hat Y^N$. Instead, the first term depends only on $N$. 

We will start with the second term. First, results in \cite{laurenti2017reachability} imply that $\hat Z^{h,N} \rightarrow B \hat Y^N$ in probability as $h\rightarrow 0$. Furthermore,  
Theorem \ref{Time Discretization} gives us weak convergence of the hitting times: $\T^{N,h}\Rightarrow \bar\T^N$. 
The challenge in the second term lies in the fact that it depends on two random variables, so we need to rely again on triangular inequality to separate them:  
\begin{align*}
|\mathbb{E}[\mathcal{R}(B \hat{Y}^N,\bar\T^{N})] &- \mathbb{E}[\mathcal{R}(\hat {Z}^{h,N},\T^{N,h})]| \leq\\ 
& |\mathbb{E}[\mathcal{R}(B \hat{Y}^N,\bar\T^{N})]- \mathbb{E}[\mathcal{R}(\hat {Z}^{h,N},\bar\T^{N})]| +\\
& |\mathbb{E}[\mathcal{R}(\hat {Z}^{h,N},\bar\T^{N})]- \mathbb{E}[\mathcal{R}(\hat {Z}^{h,N},\T^{N,h})]|
\end{align*}

Consider now a term appearing in the right hand side, e.g. $\mathbb{E}[\mathcal{R}(B \hat{Y}^N,\bar\T^{N})]$. As the expectation is taken with respect to both $B \hat{Y}^N$ and $\bar\T^N$, we can rely on the following conditional expectation decomposition:
\[\mathbb{E}_{B \hat{Y}^N,\bar\T^N}[\mathcal{R}(B \hat{Y}^N,\bar\T^{N})] = \mathbb{E}_{\bar\T^N}[ \mathbb{E}_{B \hat{Y}^N}[\mathcal{R}(B \hat{Y}^N,t)~|~\bar\T^{N}=t]]. \]
Furthermore, recall that:
\[ \mathbb{E}_{B \hat{Y}^N}[\mathcal{R}(B \hat{Y}^N,t)] = \int_0^t \mathbb{E}[\rho(B \hat{Y}^N(s)ds]. \]
Now, consider the term $|\mathbb{E}[\mathcal{R}(B \hat{Y}^N,\bar\T^{N})]- \mathbb{E}[\mathcal{R}(\hat {Z}^{h,N},\bar\T^{N})]|$. Applying the previous decomposition, we can upper bound it by:
\[\mathbb{E}_{\bar\T^N}\left[ \int_0^t |\mathbb{E}[\rho(B \hat{Y}^N(s)~|~\bar\T^{N}=t] - \mathbb{E}[\rho(\hat {Z}^{h,N}(s)~|~\bar\T^{N}=t]|ds  \right], \]
where we assume that $\hat {Z}^{h,N}(s)$ is a piecewise constant function in between each step at distance $h$, to write its cumulative reward as an integral. 

We have that $\sup_{s\leq t}|B \hat{Y}^N(s) -  \hat {Z}^{h,N}(s)| $ converges to zero in probability as $h\rightarrow 0$ \cite{laurenti2017reachability}.  From this, we can deduce that  
$\E[\sup_{s\leq t}|B \hat{Y}^N(s) -  \hat {Z}^{h,N}(s)|] $ converges to zero. This proof presented in \cite{laurenti2017reachability} is consequence of the \emph{Borell-TIS inequality} \cite{adler2010geometry}, which guarantees that the supremum of a Gaussian process is still normally distributed.  
Hence:
$|\mathbb{E}[\rho(B \hat{Y}^N(s)~|~\bar\T^{N}=t] - \mathbb{E}[\rho(\hat {Z}^{h,N}(s)~|~\bar\T^{N}=t]| \leq \mathbb{E}[|\rho(B \hat{Y}^N(s)) - \rho(\hat {Z}^{h,N}(s))|~\bar\T^{N}=t] \leq L_\rho \mathbb{E}[|B \hat{Y}^N(s) - \hat {Z}^{h,N}(s)|~\bar\T^{N}=t] \leq L_\rho  \mathbb{E}[\sup_{s\leq T}|B \hat{Y}^N(s) - \hat {Z}^{h,N}(s)|~\bar\T^{N}=t]= L_\rho \Delta _h$, which converges to zero by the discussion above. Recall that in the above $L_\rho$ is the Lipschitz constant of reward $\rho$ . 
Hence we can bound the first term by $\E[\int_0^t\Delta _hds] \leq \Delta _h T$, which goes to zero as $h\rightarrow 0$. 

Consider now the term $|\mathbb{E}[\mathcal{R}(\hat {Z}^{h,N},\bar\T^{N})]- \mathbb{E}[\mathcal{R}(\hat {Z}^{h,N},\T^{N,h})]|$: it tends to zero by application of the Portmanteau theorem, owing to the weak convergence of $\T^{N,h}$ to $\bar\T^{N}$, and the fact that $\mathcal{R}(\hat {Z}^{h,N},t)$ is a bounded and continuous function of $t$ (being the cumulative reward up to time $t$ of a bounded function $\rho$). 

The third term in the main inequality, $|\mathbb{E}[\mathcal{R}(\hat {Z}^{h,N},\T^{N,h})]- \mathbb{E}[\mathcal{R}(\hat Z^{\Delta z,h,N},\T^{N,h,\Delta z})]|$, can be shown to converge to zero using a similar approach, owing to the convergence of the space discretization to the DTMP $\hat {Z}^{h,N}$, and the convergence of the hitting times. 

What is left is the first term of the main inequality of the theorem, namely $|\mathbb{E}[\mathcal{R}(B \hat{X}^N,\T^{N})]- \mathbb{E}[\mathcal{R}(B \hat{Y}^N,\bar\T^{N})]|$, which has to converge to zero as $N$ diverges. 

To simplify the notation below, let us define: 
\begin{itemize}
    \item $g^N(t) = \mathbb{E}[\mathcal{R}(B \hat{X}^N,t)]$ is the cumulative reward for $B \hat{X}^N$ up to time $t$
    \item $\gamma^N(t) = \mathbb{E}[\mathcal{R}(B \hat{Y}^N,t)]$ is the cumulative reward for $B \hat{Y}^N$ up to time $t$.
\end{itemize}
Then the first term can be bounded by:
\begin{eqnarray}
\|\E[g^N(\T^N)] - \E[\gamma^N(\bar \T^N)]\| 
& \leq & \|\E[g^N(\T^N)] - \E[g^\infty(\T^N)]\|\nonumber\\
& + & \|\E[\gamma^\infty(\T^N)] - \E[\gamma^\infty(\T)]\|\nonumber\\
& + & \|\E[\gamma^\infty(\T)] - \E[\gamma^\infty(\bar \T^N)]\|\nonumber\\
& + & \|\E[\gamma^\infty(\bar \T^N)] - \E[\gamma^N(\bar \T^N)]\|\nonumber
\end{eqnarray}
where $g^\infty = \gamma^\infty$ is the cumulative reward for the fluid limit $B\hat X^\infty = B\Phi$.

Consider the first term in the above inequality: 
\begin{eqnarray*}
\|\E[g^N(\T^N)] - \E[g^\infty(\T^N)]\| & \leq & \E_{t\sim \T^N}\left[ \E\left[\int_0^{t} \| \rho(\hat X^N(s)) - \rho(\Phi(s))ds\|\right]\right]\\
& \leq & \E_{t\sim \T^N}\left[ \int_0^{t}  L_\rho \E[\|X^N(s) - \Phi(s)\|] \right]\\ 
& \leq & \E_{t\sim \T^N}\left[ \int_0^t L_\rho \sup_{s\leq T} \E[\|  X^N(s) - \Phi(s)\|] \right].
\end{eqnarray*}
Now 
$ \sup_{s\leq T} E[\|  X^N(s) - \Phi(s)\|]$ converges to zero by virtue of a corollary of the fluid approximation theorem on the rate of convergence of expectations \cite{Gast2017}, meaning that there is $N_1$ such that, for $N\geq N_1$, it is less than $\epsilon/(4 T)$. For all such $N$, it follows that $\|E[g^N(\T^N)] - E[g^\infty(\T^N)]\| \leq \epsilon/4$.

Le us deal with the fourth term: \[\|\E[\gamma^\infty(\bar \T^N)] - \E[\gamma^N(\bar \T^N)]\|\leq \E_{t\sim \bar \T^N}[\|\gamma^\infty(t)- \gamma^N(t)\|].\]
For a fixed $t$, we have that $\|\gamma^\infty(t)- \gamma^N(t)\| \leq \int_0^t \E[\|\rho(\Phi(s) + G(s)/\sqrt{N}) -\rho(\Phi(s))\| ] \leq \int_0^t \E[L_\rho\|G(s)/\sqrt{N}\| ] = L_\rho \int_0^t \E[\sup_{s\leq T}|G(s)|]/\sqrt{N}$. Now, as $G(t)$ has bounded convariance matrix in $[0,T]$, $\E[\sup_{s\leq T}|G(s)|]$ is finite, say equal to $M_G$, hence  $\|\gamma^\infty(t)- \gamma^N(t)\| \leq L_\rho M_G t / \sqrt{N}$, and so $\|\E[\gamma^\infty(\bar \T^N)] - E[\gamma^N(\bar \T^N)]\|\leq L_\rho M_G T / \sqrt{N}$ which is less than $\epsilon/4$ for $N\geq N_4$, for some $N_4>0$. 

Terms two and three in the inequality above, instead, converge by virtue of the Portmanteau theorem and of the weak convergence of $\T^N$ or $\bar \T^N$ to $\T^G$, hence there is $N_2$ such that they are less that $\epsilon/4$ for $N\geq N_2$. It then follows that:
\[\limsup_{N\rightarrow\infty} \|\E[g^N(\T^N)] - \E[\gamma^N(\bar \T^N)]\| < \epsilon \]
for an arbitrary $\epsilon$, implying:
\[\lim_{N\rightarrow\infty} \|E[g^N(\T^N)] - E[\gamma^N(\bar \T^N)]\| = 0. \]

Thus, we showed that $|\mathbb{E}[\mathcal{R}(B\hat{X}^N,\T^N)]- \mathbb{E}[\mathcal{R}(\hat Z^{\Delta z,h,N},\T^{N,h,\Delta z})]|$ converges to zero for $\Delta z,h$ tending to zero and $N$ diverging, as so do all the three terms bounding it.

\vspace{-1em}
\bibliographystyle{abbrv}
{\scriptsize
\bibliography{Biblio}
}

\end{document}